\newcommand{\klcomp}{\star}
\newcommand{\parI}{\mathop{\bowtie}}
\newcommand{\seqI}{\mathop{\triangleright}}
\DeclareMathOperator{\demon}{\square}
\DeclareMathOperator{\angel}{\Diamond}
\DeclareRobustCommand{\iscircle}{\mathord{\mathpalette\is@circle\relax}}
\newcommand\is@circle[2]{%
  \begingroup
  \sbox\z@{\raisebox{\depth}{$\m@th#1\bigcirc$}}%
  \sbox\tw@{$#1\square$}%
  \resizebox{!}{\ht\tw@}{\usebox{\z@}}%
  \endgroup
}
\DeclareMathOperator{\statt}{\iscircle_{\mathit{p}}}
\newcommand{\statto}[1]{\iscircle_{p_{#1}}}
\newcommand{\schfont}[1]{\mathcal{#1}}
\newcommand{\sch}{\schfont{S}}
\newcommand{\conv}[1]{\mathrm{conv}\, {#1}}
\newcommand{\catfont}[1]{\mathsf{#1}}
\newcommand{\catC}{\catfont{C}}
\newcommand{\CompHaus}{\catfont{CompHaus}}
\newcommand{\Subs}[2]{\catfont{Sub}_{}}
\newcommand{\Cone}{\catfont{Cone}}
\newcommand{\LCone}{\catfont{LCone}}
\newcommand{\Coh}{\catfont{CohDom}}
\newcommand{\Dcpo}{\catfont{DCPO}}
\newcommand{\Dom}{\catfont{Dom}}
\newcommand{\funfont}[1]{#1}
\newcommand{\funF}{\funfont{F}}
\newcommand{\sfunfont}[1]{\mathrm{#1}}
\newcommand{\Pow}{\sfunfont{P}}
\newcommand{\PP}{\sfunfont{V}}
\newcommand{\Dist}{\sfunfont{V}_{=1,\omega}}
\newcommand{\PDist}{\sfunfont{V}_{\leq 1,\omega}}
\newcommand{\Forg}[1]{\sfunfont{U}_{#1}}
\newcommand{\Id}{\sfunfont{Id}}
\newcommand{\Diag}{\mathscr{D}}
\newcommand\adjunct[2]{\xymatrix@=8ex{\ar@{}[r]|{\top}\ar@<1mm>@/^2mm/[r]^{{#2}}
& \ar@<1mm>@/^2mm/[l]^{{#1}}}}
\newcommand\adjunctop[2]{\xymatrix@=8ex{\ar@{}[r]|{\bot}\ar@<1mm>@/^2mm/[r]^{{#2}}
& \ar@<1mm>@/^2mm/[l]^{{#1}}}}
\newcommand\retract[2]{\xymatrix@=8ex{\ar@{}[r]|{}\ar@<1mm>@/^2mm/@{^{(}->}[r]^{{#2}}
& \ar@<1mm>@/^2mm/@{->>}[l]^{{#1}}}}
\newcommand{\pv}[2]{\langle #1, #2 \rangle}
\newcommand{\inl}{\mathrm{inl}}
\newcommand{\inr}{\mathrm{inr}}
\newcommand{\app}{\mathrm{app}}
\newcommand{\const}[1]{\underline{#1}}
\newcommand{\comp}{\cdot}
\newcommand{\id}{\mathrm{id}}
\newcommand{\unfold}{\mathrm{unfold}}
\newcommand{\fold}{\mathrm{fold}}
\newcommand{\sem}[1]{\left \llbracket #1 \right \rrbracket}
\newcommand{\asem}[1]{ \llparenthesis #1 \rrparenthesis}
\DeclareMathOperator{\img}{\mathrm{im}}
\newcommand{\Nats}{\mathbb{N}}
\newcommand{\Reals}{\mathbb{R}}
\newcommand{\Rats}{\mathbb{Q}}
\newcommand{\Rz}{\Reals_{\geq 0}}
\newcommand{\Complex}{\mathbb{C}}
\newcommand{\ie}{\emph{i.e.}}
\newcommand{\eg}{\emph{e.g.}}
\newcommand{\prog}[1]{{\mathtt{#1}}}
\newcommand{\upclos}{\mathord{\uparrow}}
\newcommand{\SPrg}{\mathrm{Pr}}
\newcommand{\Ch}{\mathrm{Ch}}
\newtheorem{definition}{Definition}[section]  
\newtheorem{theorem}{Theorem}[section]
\newtheorem{lemma}{Lemma}[section]
\newtheorem{proposition}{Proposition}[section]
\newtheorem{corollary}{Corollary}[section]
\title{An Adequacy Theorem Between Mixed Powerdomains and Probabilistic Concurrency
\\ (extended version)}
\author{Renato Neves
\institute{University of Minho \& INESC-TEC, Portugal}
\email{nevrenato@di.uminho.pt}}
\begin{document}
\maketitle

\begin{abstract}
        We present an adequacy theorem for a concurrent extension of
        probabilistic \textsc{GCL}.  The underlying denotational semantics is
        based on the so-called mixed powerdomains, which combine
        non-determinism with probabilistic behaviour. The theorem itself is formulated
        via M. Smyth’s idea of treating observable properties as open sets of a
        topological space. The proof hinges on a `topological generalisation'
        of K\"onig's lemma in the setting of probabilistic programming (a
        result that is proved in the paper as well).

        One application of the theorem is that it entails semi-decidability
        w.r.t.  whether a concurrent program satisfies an observable property
        (written in a certain form).  This is related to M. Escardó's
        conjecture about semi-decidability w.r.t. may and must probabilistic
        testing.
\end{abstract}

\section{Introduction}
\label{sec:intro}
\textbf{Motivation.} The combination of probabilistic operations with
concurrent ones  -- often referred to as probabilistic concurrency -- has been
extensively studied in the last decades and has various
applications~\cite{jones89,baier00,mislove00,varacca07,mciver13,visme19}. A
central concept in this paradigm is that of a (probabilistic)
scheduler~\cite{escardo04,varacca07}: \ie\ a memory-based entity that analyses
and (probabilistically) dictates how programs should be interleaved. 

Albeit intensively studied and with key results probabilistic concurrency still
carries fundamental challenges.  In this paper we tackle the following
instance: take a concurrent \emph{imperative} language equipped with probabilistic
operations. Then ask whether a given program $P$ in state $s$ satisfies a
property $\phi$, in symbols $\pv{P}{s} \models \phi$.  For example $\phi$ may
refer to the fact that $\pv{P}{s}$ terminates with probability greater than
$\frac{1}{2}$ under \emph{all} possible schedulers (which is intimately connected
to the topic of statistical termination~\cite{hart83,lengal17}). Or dually
$\phi$ could refer to the existence of \emph{at least one} scheduler under
which $\pv{P}{s}$ terminates with probability greater than $\frac{1}{2}$. Such
a question can be particularly challenging to answer, for it involves
quantifications over the universe of scheduling systems, which as we shall see
are \emph{uncountably} many.  An even more demanding type of question arises from
establishing an observational preorder $\lesssim$ between programs in state
$s$. More specifically we write,
\[
        \pv{P}{s} \lesssim \pv{Q}{s} \, \, \text{ iff } \, \, \Big (\text{for all properties $\phi$.}
        \, \, \pv{P}{s} \models \phi \text{ implies } \pv{Q}{s} \models \phi
        \Big )
\]
and ask whether $\pv{P}{s} \lesssim \pv{Q}{s}$ for two programs $P$ and $Q$ in
state $s$. Adding to the quantifications over scheduling systems, we now have a
universal quantification over formulae.

\smallskip
\noindent
\textbf{Contributions.}
In this paper we connect the previous questions to domain-theoretic
tools~\cite{gierz03,larrecq13}.  Concretely we start by recalling necessary
background (Section~\ref{sec:back}). Then in order to provide a formal
underpinning to these questions, we introduce a concurrent extension of the
probabilistic guarded command language (pGCL)~\cite{mciver05} and equip it with
an operational semantics as well as a logic for reasoning about program
properties (Section~\ref{sec:lang}). The logic itself is based on M.  Smyth's
idea of treating observable properties as open sets of a topological
space~\cite{smyth83,vickers89}.  We then introduce a domain-theoretic
denotational semantics $\sem{-}$ (Section~\ref{sec:den}) and establish its
computational adequacy w.r.t.  the operational counterpart
(Section~\ref{sec:adeq}). More formally we establish an equivalence of the
form,
\[
        \pv{P}{s} \models \phi \text{ iff }
        \sem{P}(s) \models \phi
        \hspace{3.5cm}
        \text{(for all properties $\phi$ in our logic)}
\]
whose full details will be given later on. From adequacy we then
straightforwardly derive a \emph{full abstraction} result w.r.t. the
observational preorder $\lesssim$, \ie\ we obtain the equivalence,
\[
        \pv{P}{s} \lesssim \pv{Q}{s} \text{ iff } \sem{P}(s) \leq
        \sem{Q}(s)
\]
for all programs $P$ and $Q$ and state $s$. 

An interesting and somewhat surprising consequence of our adequacy theorem is
that the question of whether $\pv{P}{s} \models \phi$ becomes semi-decidable
for a main fragment of our logic. This includes for example the two questions
above concerning the termination of $\pv{P}{s}$ with probability greater than
$\frac{1}{2}$. What is more, semi-decidability is realised by an exhaustive
search procedure -- which looks surprising because as already mentioned the
space of schedulers is uncountable and the question of whether $\pv{P}{s}
\models \phi$ involves quantifications over this space.  In
Section~\ref{sec:conc} we further detail this aspect and also the topic of
statistical termination~\cite{hart83,lengal17,majumdar24,fu24}.  Also in
Section~\ref{sec:conc}, to illustrate the broad range of applicability
of our results we show  how to
instantiate them to the quantum setting~\cite{feng11,ying18,ying22}. Briefly
the resulting concurrent language treats measurements and qubit resets as
probabilistic operations which thus frames the language in the context of
probabilistic concurrency. To the best of our knowledge the quantum concurrent
language obtained is the first one equipped with an adequacy theorem.  We
then conclude with a discussion of future work.  

\smallskip
\noindent
\textbf{Outline.}
We proceed by outlining our denotational semantics.  First the fact that we are
working with concurrency, more specifically with the interleaving paradigm,
suggests the adoption of the `resumptions model' for defining the semantics.
This approach was proposed by R. Milner in the seventies~\cite{milner75} and
subsequently adapted to different contexts (see
\eg~\cite{stark96,pirog14,goncharov16}). When adapted to our
specific case program denotations $\asem{P}$ are intensional and defined as
elements of the \emph{greatest} solution of the equation,
\begin{align}
        \label{eq:domeq}
        X \cong T(S + X \times S)^S
\end{align}
where $S$ is a pre-determined set of states and $T$ a functor that encodes a
combination of non-determinism (which handles interleaving in concurrency) and
stochasticity. In other words such program denotations are elements of (the
carrier of) a final coalgebra. 

Second we will see that in order to obtain adequacy w.r.t.  the operational
semantics  one needs to extensionally collapse $\asem{-}$ -- \ie\ given a
program $P$ one needs to find a suitable way of keeping only the terminating
states of $\asem{P}$ whilst accounting for the branching structure encoded in
$T$. Denoting $X$ as the greatest solution of Equation~\eqref{eq:domeq}, such
amounts to finding a suitable morphism $\mathrm{ext} : X \to T(S)^S$. Our
approach to this is inspired by~\cite{hennessy79}: such a morphism arises
naturally if $T$ is additionally a monad and $X$ is additionally the
\emph{least} solution of Equation~\eqref{eq:domeq}. Concretely we obtain the
following commutative diagram,
\begin{align}
        \label{eq:diag}
        \xymatrix@C=90pt{
                \Pr 
                \ar@{.>}[r]^(0.45){\asem{-}} 
                \ar[d] \ar@{}[d]_{ \mathrm{op} }
                &
                X 
                \ar@{.>}[r]^(0.45){\mathrm{ext}}
                \ar@/^10pt/[d]^{\unfold}
                &
                T(S)^S
                \\
                T(S + \Pr \times S)^S 
                \ar[r]_(0.5){T(\id + \asem{-} \times \id)^S} 
                & 
                T(S + X \times S)^S  
                \ar[r]_(0.45){T(\id + \mathrm{ext} \times \id)^S} 
                \ar@/^10pt/[u]^{\fold}
                \ar@{}[u]|-{\cong}
                & 
                T(S + T(S)^S \times S)^S
                \ar[u]_{{[\eta^T,\ \app]^\star}^S}
        }
\end{align}
where $\Pr$ denotes the set of programs, the operations $\eta^T$, $(-)^\star$
refer to the monadic structure postulated for $T$, $\app$ denotes the
application map, and $\mathrm{op}$ denotes our operational semantics in 
coalgebraic form. The fact that the left rectangle commutes (proved in
Theorem~\ref{theo:sem_eq}) means that the intensional semantics $\asem{-}$
agrees with the `one-step' transitions that arise from the operational
semantics. The fact that the right rectangle commutes is obtained by
construction (initiality) and yields a recursive definition of the extensional
collapse, with elements of $X$ seen as `resumption trees'.  Our final
denotational semantics $\sem{-}$ is the composition $\mathrm{ext} \comp
\asem{-}$.

In what regards the choice of $T$ there are three natural candidates which are
collectively known as mixed
powerdomains~\cite{mciver01,mislove04,tix09,keimel09}. One is associated with
angelic (or may) non-determinism, another with demonic (or must)
non-determinism, and the other one with both cases. Our computational adequacy
result applies to these three powerdomains, a prominent feature of our result
being that the aforementioned logic (for reasoning about properties) varies
according to whatever mixed powerdomain is chosen. Notably keeping in touch
with its name, the proof for demonic non-determinism is much harder to achieve
than the angelic case: in order to prove the former we establish what can
be intuitively seen as a topological generalisation of K\"onig's
lemma~\cite{franchella97} in the setting of probabilistic programming. This is
detailed further in the paper and proved in appendix (Section~\ref{sec:konig})
after a series of auxiliary results.

\smallskip
\noindent
\textbf{Related work.}
The sequential fragment of our language is the well-known pGCL~\cite{mciver05},
a programming language that harbours both probabilistic and non-deterministic
choice.  The idea of establishing adequacy between pGCL and the three mixed
powerdomains was propounded in~\cite{keimel11} as an open endeavour -- and in
this regard the adequacy of the angelic mixed powerdomain was already
established in~\cite{varacca07}.  More recently  J.  Goubault-Larrecq
\cite{larrecq19} established computational adequacy at the unit type
between an extension of PCF with probabilistic and non-deterministic choice and
the three mixed powerdomains.  Note however that in this case the associated
operational semantics is not based on probabilistic schedulers like
in~\cite{varacca07} and our case, but rather on what could be called `angelic
and demonic strategies'. Moreover to scale up the adequacy result in the
\emph{op. cit.} to a global store (as required in pGCL) demands that states can
be directly compared in the language, an assumption that we do not make and
that does not hold for example in the quantum setting. We therefore believe
that our adequacy result restricted to the sequential fragment is already
interesting \emph{per se}.

Adequacy at the level of probabilistic concurrency has been studied at various
fronts, however as far as we are aware usually in the setting of process
algebra~\cite{baier00,mislove04,visme19}. This is a direction orthogonal to
ours, in the sense that the latter focusses on interactions and is 
intensional in nature; whilst here we work in the setting of imperative
programming and thus take a more extensional approach.

Semi-decidability concerning may and must statistical termination, in a spirit
similar to ours, was conjectured in~\cite{escardo09}. Note however that the
host programming language in the latter case was a \emph{sequential,
higher-order} language with non-deterministic and probabilistic choice. The
underlying scheduling system in the \emph{op. cit.} was also different from
ours: it was encoded via pairs of elements in the Cantor space $2^\Nats$.

\smallskip
\noindent
\textbf{Prerequisites and notation.}
We assume from the reader basic knowledge of category theory, domain theory,
and topology (details about these topics are available for example
in~\cite{adamek09,gierz03,larrecq13}). We denote the left and right injections
into a coproduct by $\inl$ and $\inr$, respectively, and the application map
$X^Y \times Y \to X$ by $\app$. Given a set $X$ we use $X^\ast$ to denote the
respective set of lists, and given an element $x \in X$ we denote the `constant
on $x$' map $1 \to X$  by $\const{x}$. We omit subscripts in natural
transformations whenever no ambiguity arises from this. We use $\eta$ and
$(-)^\star$ to denote respectively the unit and lifting of a monad
$(T,\eta,(-)^\star)$, and whenever relevant we attach $T$ as a superscript to
these two operations to clarify which monad we are referring to. We denote the
set of positive natural numbers $\{1,2,\dots\}$ by $\Nats_+$.  Omitted proofs
are found in the paper's appendix.

\section{Background}
\label{sec:back}

\smallskip
\noindent
\textbf{Domains and topology.}
We start by recalling pre-requisite notions for establishing the aforementioned
denotational semantics and corresponding adequacy theorem.  As usual we call
DCPOs those partially ordered sets that are directed-complete. We call domains
those DCPOs that are continuous~\cite[Definition I-1.6]{gierz03}. We call a
DCPO pointed it it has a bottom element and we call a map between pointed DCPOs
strict if it preserves bottom elements. Similarly we call a functor on pointed
DCPOs strict if it preserves strict maps. Given a poset $X$ its Scott topology
$\sigma(X)$ consists of all those subsets $U \subseteq X$ that are upper-closed
and inaccessible by directed joins: \ie\ for every directed join $\bigvee_{i
\in I} x_i \in U$ there must already exist some element $x_i$ in the family
$(x_i)_{i \in I}$ such that $x_i \in U$.  Another important topology on $X$ is
the lower topology $\omega(X)$. It is generated by the subbasis of closed sets
$\{\upclos x \mid x \in X \}$. Yet another important topology is the Lawson
topology $\lambda(X)$ which is generated by the subbasis $\sigma(X) \cup
\omega(X)$ (see details in~\cite[Section III-1]{gierz03}). When treating a
poset $X$ as a topological space we will be tacitly referring to its Scott
topology unless stated otherwise. A domain $X$ is called coherent if the
intersection of two compact saturated subsets in $X$ is again compact. Finally
note that every set can be regarded as a coherent domain by taking the discrete
order. We will often tacitly rely on this fact.

Let $\Dcpo$ be the category of DCPOs and continuous maps. Let $\Dom$ and $\Coh$
be the full subcategories of $\Dcpo$ whose objects are respectively domains and
coherent domains. We thus obtain the chain of inclusions $\Coh \hookrightarrow
\Dom \hookrightarrow \Dcpo$. Next, let $\catC$ be any full subcategory of
$\Dcpo$.  Since it is a full subcategory the inclusion $\catC \hookrightarrow
\Dcpo$ reflects limits and colimits~\cite[Definition 13.22]{adamek09}. This
property is very useful in domain theory because domain theoreticians often
work in full subcategories of $\Dcpo$. It is well-known for example that
$\Dcpo$-products of (coherent) domains where only finitely many are non-pointed
are (coherent) domains as well~\cite[Proposition 5.1.54 and Exercise
8.3.33]{larrecq13}.  It follows that such limits are also limits in $\Dom$, and
if the involved domains are coherent then they are limits in $\Coh$ as well.
The same reasoning applies to coproducts: $\Dcpo$-coproducts of (coherent)
domains are (coherent) domains as well~\cite[Proposition 5.1.59 and Proposition
5.2.34]{larrecq13}. Finally observe that $\Dcpo$ is distributive and that this
applies to any full subcategory of $\Dcpo$ that is closed under binary
(co)products. This includes for example $\Dom$ and $\Coh$.

\smallskip
\noindent
\textbf{The probabilistic powerdomain.}
Let us now detail the  probabilistic powerdomain and associated constructions.
These take a key rôle not only in the definition of the three mixed
powerdomains, but also in our denotational semantics and corresponding adequacy
theorem. We start with the pre-requisite notion of a continuous valuation.

\begin{definition}[\cite{tix09,goubault20}]
        \label{defn:pp}
Consider a topological space $X$ and let $\mathcal{O}(X)$ be its topology.  A
function $\mu : \mathcal{O}(X) \to [0,\infty]$ is called a continuous valuation
on $X$ if for all opens $U,V \in \mathcal{O}(X)$ it satisfies the following
conditions:
\begin{itemize} 
        \item $\mu(\emptyset) = 0$;
        \item $U \subseteq V \Rightarrow \mu(U) \leq \mu(V)$;
        \item $\mu(U) + \mu(V) = \mu(U \cup V) + \mu(U \cap V)$;
        \item $\mu \left (\bigcup_{i \in I} U_i \right ) = \bigvee_{i \in I}
                \mu(U_i)$ for every directed family of opens $(U_i)_{i \in I}$.
\end{itemize}
\end{definition}

\smallskip
\noindent
We use $\PP(X)$ to denote the set of continuous valuations on $X$. We also use
$\PP_{=1}(X)$ and $\PP_{\leq 1}(X)$ to denote respectively the subsets of
continuous valuations $\mu$ such that $\mu(X) = 1$ and $\mu(X) \leq 1$.  An
important type of continuous valuation is the \emph{point-mass} (or Dirac)
valuation $\delta_x$ $(x \in X)$ defined by,
\[
        \delta_x(U) = \begin{cases}
                1 & 
                \text{ if }  x \in U \\
                0 & 
                \text{ otherwise }
        \end{cases}
\]
Recall that a cone is a set $C$ with operations for addition $+ : C \times C
\to C$ and scaling $\cdot : \Rz \times C \to C$ that satisfy the laws of vector
spaces except for the one that concerns additive inverses~\cite[Section
2.1]{tix09}.  It is well-known that $\PP(X)$ forms a cone with scaling and
addition defined pointwise. If $X$ is a domain then $\PP(X)$ is also a domain.
The order on $\PP(X)$ is defined via pointwise extension and a basis (in the
domain-theoretic sense) is given by the finite linear combinations $\sum_{i \in
I} r_i \cdot \delta_{x_i}$ of point-masses with $r_i \in \Rz$ and $x_i \in X$.
We will often abbreviate such combinations to $\sum_i r_i \cdot x_i$. Moreover
we will use $\PP_{=1, \omega}(X)$ and $\PP_{\leq 1, \omega}(X)$ to denote
respectively the subsets of continuous valuations in $\PP_{=1}(X)$ and
$\PP_{\leq 1}(X)$ that are finite linear combinations of point-masses.
Whenever $X$ is a domain the Scott-topology of the domain $\PP(X)$ coincides
with the so-called \emph{weak topology}~\cite{goubault20} which is generated by
the subbasic sets,
\[
        \statt U = \{ \mu \in \PP(X) \mid \mu(U) > p \}
        \hspace{2.5cm} (U \text{ an open of } X \text{ and } p \in \Rz)
\]
The probabilistic powerdomain also gives rise to a category of cones.  We
briefly detail it next, and direct the reader to the more thorough account
in~\cite[Chapter 2]{tix09}.

\begin{definition}
        A d-cone $(C,\leq,\,\cdot\,,+)$ is a cone $(C,\,\cdot\,,+)$ such
        that the pair $(C,\leq)$ is a DCPO and the operations $\cdot\, : \Rz
        \times C \to C$ and $+ : C \times C \to C$ are Scott-continuous.  If
        the ordered set $(C,\leq)$ is additionally a domain then we speak of a
        continuous d-cone. The category $\Cone$ has as objects continuous
        d-cones and as morphisms continuous linear maps.
\end{definition}
The forgetful functor $\Forg{} : \Cone \to \Dom$ is right adjoint: the
respective universal property is witnessed by the construct $\PP(-)$. 
Specifically for every domain $X$, continuous d-cone $C$, and $\Dom$-morphism
$f : X \to \Forg{}(C)$ we have the diagrammatic situation,
\[
        \xymatrix@C=35pt{
                X \ar[r]^(0.45){x \mapsto \delta_x} \ar[dr]_(0.45){f} & \Forg{} \PP(X)
                \ar[d]^{\Forg{} (f^\star)}
                \\
                                         & \Forg{} (C)
        }
        \hspace{2cm}
        \xymatrix{
                \PP(X) \ar@{.>}[d]^{f^\star} \\
                C
        }
\]
with $f^\star$ defined by $f^\star(\sum_i r_i \cdot x_i) = \sum_i r_i \cdot
f(x_i)$ on basic elements $\sum_i r_i \cdot x_i \in \PP(X)$.  We thus obtain
standardly a functor $\PP : \Dom \to \Cone$. Now, the functor $\Forg{} : \Cone
\to \Dom$ is additionally monadic, with the respective left adjoint given by
$\PP : \Dom \to \Cone$. Among other things such implies that $\Cone$ is as
complete as the category $\Dom$. For example $\Cone$ has all products of
continuous d-cones (recall our previous remarks about $\Dom$ and note that
d-cones are always pointed~\cite[page 21]{tix09}).  Note as well that binary
products are actually biproducts by virtue of addition being Scott-continuous.
Finally let $\LCone$ be the full subcategory of $\Cone$ whose objects are
Lawson-compact (\ie\ compact in the Lawson topology). $\PP(X)$ is
Lawson-compact  whenever $X$ is a coherent domain~\cite[Theorem 2.10]{tix09}
and Lawson-compactness of a domain entails coherence~\cite[Theorem
III-5.8]{gierz03}. We thus obtain a functor $\PP : \Coh \to \LCone$ that gives
rise to the commutative diagram,
\[
\xymatrix@C=35pt{
                \ar@<1mm>@/^3mm/[d]^{\Forg{}}\LCone  \,
                \ar@{}[d]|{\dashv}
                \ar@{>->}[r]
                & 
                \Cone
                \ar@<1mm>@/^3mm/[d]^{\Forg{}} 
                \\
                \Coh \, \ar@<1mm>@/^3mm/[u]^{\PP}
                \ar@{>->}[r]
                & 
                \Dom \ar@<1mm>@/^3mm/[u]^{\PP}
                \ar@{}[u]|{\dashv}
        }
\]
The following result, which we will use multiple times, is somewhat folklore.
Unfortunately we could not find a proof in the literature, so we provide
one here.
\begin{theorem}
        \label{theo:locin}
        Consider a zero-dimensional topological space $X$ (\ie\ a topological
        space with a basis of clopens). For all valuations $\mu,\nu \in
        \PP_{=1}(X)$ if $\mu \leq \nu$ then $\mu = \nu$.  In other words the
        order on $\PP_{=1}(X)$ is discrete.
\end{theorem}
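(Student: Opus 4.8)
The plan is to reduce the claim to showing that $\mu$ and $\nu$ agree on clopen sets, and then to transfer that agreement to all opens via zero-dimensionality and Scott-continuity.

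First I would exploit the fixed total mass together with the clopen complement. Fix a clopen $U \subseteq X$; then $X \setminus U$ is again open, so the modularity axiom of valuations applied to the pair $U, X \setminus U$ gives
\[
        \mu(U) + \mu(X \setminus U) = \mu(X) + \mu(\emptyset) = 1
\]
and identically $\nu(U) + \nu(X \setminus U) = 1$. Since $\mu \leq \nu$ means $\mu(V) \leq \nu(V)$ for every open $V$, were $\mu(U) < \nu(U)$ we would get $\mu(X \setminus U) = 1 - \mu(U) > 1 - \nu(U) = \nu(X \setminus U)$, contradicting $\mu \leq \nu$ on the open set $X \setminus U$. Hence $\mu(U) = \nu(U)$ on every clopen $U$.

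Next I would bootstrap from clopens to arbitrary opens. Given an open $U$, zero-dimensionality says that $U$ is a union of clopen subsets, so the family of all finite unions of clopen subsets of $U$ is directed (finite unions of clopens being clopen) and has union exactly $U$. Scott-continuity of $\mu$ on this directed family, combined with the clopen agreement just established, then yields
\[
        \mu(U) = \bigvee_{V} \mu(V) = \bigvee_{V} \nu(V) = \nu(U)
\]
where $V$ ranges over those finite unions; this gives $\mu = \nu$ on all opens, as required.

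I do not expect a genuine obstacle here. The decisive idea is the clopen complement trick, which converts the one-sided bound $\mu \leq \nu$ into an equality by pitting $U$ against its complement under the shared total mass $1$; this is the only place where the hypothesis $\mu,\nu \in \PP_{=1}(X)$ is really used. The one point requiring a little care is the directedness of, and the identity of the union with $U$ for, the family of finite unions of clopen subsets of $U$ — precisely where zero-dimensionality enters — after which Scott-continuity transports the clopen agreement to every open essentially mechanically.
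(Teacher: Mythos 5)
Your proposal is correct and uses the same two ingredients as the paper's proof — the complement-plus-total-mass trick on clopens, and the approximation of an arbitrary open by the directed family of finite unions of clopens — merely organized in the direct rather than the contrapositive direction (the paper assumes a discrepancy on an open, pulls it down to a clopen via the directed sup, and then contradicts $\nu(X)=1$). The one small point worth making explicit is that $\mu(U)\leq\mu(X)=1$ is finite, so the subtraction $1-\mu(U)$ in your complement step is legitimate; with that noted, the argument goes through exactly as you describe.
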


\begin{proof}
        Observe that if $\mathcal{B}$ is a basis of clopens of $X$ then the set
        $\mathcal{B}^\vee = \{ U_1 \cup \dots \cup U_n \mid U_1, \dots, U_n \in
        \mathcal{B} \}$ is directed and it is constituted only by clopens as well.
        Next we reason by contradiction and by appealing to the conditions
        imposed on valuations (Definition~\ref{defn:pp}). Suppose that there
        exists an open $U \subseteq X$ such that $\mu(U) < \nu(U)$. This open
        can be rewritten as a directed union $\bigcup\,  \mathcal{D}$ where
        $\mathcal{D} = \{ V \subseteq U \mid V \in \mathcal{B}^\vee \}$ and
        we obtain the strict inequation,
        \[
                \mu(U) < \bigvee_{V \in \mathcal{D}} \nu(V)
        \]
        The latter entails the existence of a clopen $V \in \mathcal{D}$ such
        that $\mu(V) < \nu(V)$. It then must be the case that $1 = \mu(X)
        = \mu(V) + \mu(X\backslash V) < \nu(V) + \nu(X\backslash V) = \nu(X)$.
        This proves that $1 < \nu(X)$, a contradiction.
\end{proof}

\smallskip
\noindent
\textbf{The three mixed powerdomains.}
We now present the so-called geometrically convex powercones $\Pow_x(-)$ ($x
\in \{l,u,b\})$, \emph{viz.} convex lower $(\Pow_l)$, convex upper $(\Pow_u)$,
and biconvex $(\Pow_b)$~\cite[Chapter 4]{tix09}. As we will see later on, their
composition with the probabilistic powerdomain (which was previously recalled)
yields the three mixed powerdomains that were mentioned in the introduction.

We start with some preliminary notions.  Given a DCPO $X$ and a subset $A
\subseteq X$ we denote the Scott-closure of $A$ by $\overline{A}$.  The subset
$A$ is called \emph{order convex} whenever for all $x \in X$ if there exist
elements $a_1,a_2 \in A$ such that $a_1 \leq x \leq a_2$ then $x \in A$. Given
a cone $C$ we call a subset $A \subseteq C$ geometrically convex
(or just convex) if $a_1,a_2 \in A$ entails $p \cdot a_1 + (1-p) \cdot a_2 \in
A$ for all $p \in [0,1]$.  We denote the convex closure of $A$ by $\conv{A}$.
The latter is explicitly defined by,
\[
        \conv{A} = \left \{ \textstyle{ \sum_{i \in I}} \, p_i \cdot a_i \mid
                \textstyle{ \sum_{i \in I}} \, p_i = 1 \> 
                \text{and} \> \forall i \in I. \, a_i \in A 
        \right \}
\]
For two finite subsets $F,G \subseteq C$ the expression $F+G$ denotes
Minkowski's sum and $p \cdot F$ denotes the set $\{p \cdot c \mid c \in F \}$.
If a subset $A \subseteq C$ is non-empty, Lawson-compact, and order convex we
call it a lens.  Given a continuous d-cone $C$ we define the following
partially ordered sets,
\begin{align*}
        \Pow_l(C) & = \{ A \subseteq C \mid A \text{ non-empty, closed, and convex} \}
                          \\
        \Pow_u(C) & = \{ A \subseteq C \mid A \text{ non-empty, compact, saturated,
        and convex} \}
                          \\
        \Pow_b(C) & = \{ A \subseteq C \mid A \text{ a convex lens}
        \}
\end{align*}
with the respective orders defined by $A \leq_l B \text{ iff }
\mathord{\downarrow} A \subseteq \mathord{\downarrow} B$, $A \leq_u B \text{
iff } \mathord{\uparrow} B \subseteq \mathord{\uparrow} A$, and $A \leq_b B
\text{ iff } \mathord{\downarrow} A \subseteq \mathord{\downarrow} B \text{ and
} \mathord{\uparrow}B \subseteq \mathord{\uparrow} A$. Whenever working with
$\Pow_b(C)$ we will assume that $C$ is additionally Lawson-compact.  All three
posets form domains and $\Pow_b(C)$ is additionally Lawson-compact. In
particular the sets of the form $\overline{\conv{F}}$, $\mathord{\uparrow}
\conv{F}$, and $\overline{\conv{F}} \cap \, \mathord{\uparrow}  \conv{F}$ (for
$F$ a finite set) form respectively a basis (in the domain-theoretic sense) for
the convex lower, convex upper, and biconvex powercones.  For simplicity we
will often denote $\overline{\conv{F}}$ by $l(F)$, $\mathord{\uparrow}
\conv{F}$ by $u(F)$, and $\overline{\conv{F}} \cap \mathord{\uparrow} \conv{F}$
by $b(F)$ -- and in the general case \ie\ when we wish to speak of all three
cases at the same time we write $x(F)$.

The Scott topologies of the three powercones have well-known explicit
characterisations that are closely connected to the `possibility' ($\angel$)
and `necessity' ($\demon$) operators of modal logic~\cite{winskel85,mislove04}.
Specifically the Scott topology of $\Pow_b(C)$ is generated by the subbasic
sets,
\[
        \angel  U  = \{ A \mid A \cap U \not = \emptyset \}
        \> \text{ and } \>
        {\demon} \, U   = \{ A \mid A \subseteq U\}
        \hspace{2cm}
        (U \text{ an open of } C)
\]
the Scott topology of $\Pow_l(C)$ is generated only by those subsets of the
form $\angel U$ and the Scott topology of $\Pow_u(C)$ is generated only by
those subsets of the form ${\demon}\, U$. When seeing open subsets as
observable properties~\cite{smyth83,vickers89}, the powercone $\Pow_l(C)$ is
thus associated with the \emph{possibility} of a given property $U$ being true
(\ie\ whether a property $U$ holds in at least one scenario) $\angel U$, the
powercone $\Pow_u(C)$ is associated with the \emph{necessity} of a given
property $U$ being true (\ie\ whether a property $U$ holds in all scenarios)
$\demon U$ and the powercone $\Pow_b(C)$ with both cases.  These observations
together with the previous characterisation of the Scott topology of the
probabilistic powerdomain yield a natural logic for reasoning about program
properties in concurrent pGCL, as detailed in the following section.

All three powercones are equipped with the structure of a cone and with a
continuous semi-lattice operation $\uplus$: the whole structure is defined on
basic elements by,
\[
x(F) + x(G)  = x(F + G) 
\hspace{1.5cm}
r \cdot x(F) = x (r \cdot F)
\hspace{1.5cm}
x(F) \uplus x(G) = x(F \cup G)
\]
and is thoroughly detailed in~\cite[Section 4]{tix09}. Moreover the convex
lower and convex upper variants form monads in $\Cone$ whereas the biconvex
variant forms a monad in $\LCone$. In all three cases the unit $C \to
\Pow_x(C)$ is defined by $c \mapsto x(\{c\})$ and the Kleisli lifting of  $f :
C \to \Pow_x(D)$ is defined on basic elements by $f^\star(x(F)) =
\mathlarger{\uplus} \left \{ f(a) \mid a \in F  \right \}$. Finally we often
abbreviate the composition $\Pow_x \PP$ simply to $\Pow \PP$ and by a slight
abuse of notation treat $\Pow_x \PP$ as a functor $\Pow_x \PP : \Dom \to \Dom$
if $x \in \{l,u\}$ or as a functor $\Pow_x \PP : \Coh \to \Coh$ if $x \in
\{b\}$. These three functors constitute the three mixed powerdomains.  It
follows from composition of adjunctions that every mixed powerdomain is a monad
on $\Dom$ (resp.  $\Coh$). 

\begin{theorem}
        \label{theo:str} For every $x \in \{l,u\}$ the functor is $\Pow_x \PP :
        \Dom \to \Dom$ is strong.  Given two domains $X$ and $Y$
        the tensorial strength $\mathrm{str} : X \times \Pow_x \PP (Y)
        \to \Pow_x \PP(X \times Y)$ is defined on basic elements as the mapping
        $(a,x(F)) \mapsto x(\{ \delta_a \otimes \mu \mid \mu \in F\})$ where,
        \[
                \delta_a \otimes \textstyle{\sum_i} \, p_i \cdot y_i =
                \textstyle{\sum_i} \, p_i \cdot (a,y_i)
        \]
        for every linear combination $\sum_i \, p_i \cdot y_i$.  The same
        applies to the functor $\Pow_b \PP$.
\end{theorem}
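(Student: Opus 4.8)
The plan is to establish strength by exhibiting the tensorial strength map explicitly, verifying it is a well-defined morphism in the relevant category ($\Dom$ or $\Coh$), and then checking the coherence axioms of a strong functor. Since $\Pow_x\PP$ is a composite of two functors, the cleanest route is to build the strength as a composite of strengths: the probabilistic powerdomain $\PP$ already carries a canonical strength $\str^{\PP} : X \times \PP(Y) \to \PP(X\times Y)$ sending $(a,\sum_i p_i \cdot y_i)$ to $\sum_i p_i \cdot (a,y_i)$ (this is exactly the $\delta_a \otimes (-)$ operation in the statement), and each powercone functor $\Pow_x$ carries a strength $\str^{\Pow_x} : C \times \Pow_x(D) \to \Pow_x(C\times D)$ with respect to the cone structure. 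Composing these in the standard way yields a strength for $\Pow_x\PP$, and on basic elements the composite computes precisely to $(a,x(F)) \mapsto x(\{\delta_a\otimes\mu \mid \mu\in F\})$.

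First I would confirm that the proposed map is well-defined and Scott-continuous. The key point is that $\str$ must respect the generating structure of the powercone: one checks on basic elements $x(F)$ that the definition is independent of the chosen finite representative $F$, which amounts to verifying compatibility with the cone operations and the semilattice $\uplus$. Concretely, using the explicit formulas $x(F)+x(G)=x(F+G)$, $r\cdot x(F)=x(r\cdot F)$, and $x(F)\uplus x(G)=x(F\cup G)$, I would show that $a\mapsto x(\{\delta_a\otimes\mu\mid\mu\in F\})$ is compatible with these, so it descends to a well-defined map on $\Pow_x\PP(Y)$. Continuity in each argument then follows because the basic elements $x(F)$ form a basis for the domain and the assignment is monotone and preserves the relevant directed joins; since $\Dom$ (resp. $\Coh$) is cartesian and the basis consists of the $x(F)$, establishing the behaviour on this basis together with monotonicity suffices to pin down a unique continuous extension.

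Second I would verify the two coherence diagrams that define a tensorial strength: the unit law, asserting that $\str$ composed with the left unitor $1\times\Pow_x\PP(Y)\cong\Pow_x\PP(Y)$ matches the unitor on the product, and the associativity/pentagon law relating $\str$ on $X\times(X'\times\Pow_x\PP(Y))$ to the iterated application of $\str$. On basic elements these reduce to the identities $\delta_a\otimes\delta_{a'}\otimes\mu = \delta_{(a,a')}\otimes\mu$ and the compatibility of $\delta_a\otimes(-)$ with the product associator, which hold by the explicit definition of $\delta_a\otimes(\sum_i p_i\cdot y_i)=\sum_i p_i\cdot(a,y_i)$. Because every map in sight is continuous and the $x(F)$ form a basis, it is enough to check commutativity on these basic elements, whence it propagates to all of $\Pow_x\PP(Y)$ by continuity.

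\emph{The hard part} will be the well-definedness and continuity verification rather than the coherence axioms: one must ensure that passing through the convex closure, Scott-closure (for $l$), saturation (for $u$), or lens construction (for $b$) is genuinely respected by $\str$, and in particular that the image $x(\{\delta_a\otimes\mu\mid\mu\in F\})$ really lands in $\Pow_x\PP(X\times Y)$ with the correct topological closure/compactness properties. For the biconvex case $\Pow_b\PP$ on $\Coh$ the extra obstacle is Lawson-compactness: I would need to confirm that $\str$ preserves lenses, \ie\ that the image of a convex lens under $\delta_a\otimes(-)$ is again a convex lens in $\PP(X\times Y)$, using that $\delta_a\otimes(-)$ is an embedding that is continuous and affine, hence preserves convexity, order-convexity, and Lawson-compactness. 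Once this is secured, the categorical axioms are essentially formal consequences of the basis-level computation.
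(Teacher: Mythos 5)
The paper never actually proves this theorem --- despite the promise that omitted proofs appear in the appendix, Theorem~\ref{theo:str} has no proof there (the appendix only contains a different, coproduct-based strength result for $\Pow_x$ on $\Cone$), so there is no official argument to compare yours against. Judged on its own terms, your plan is the standard one and is workable: define $\str$ on basic elements, check well-definedness, extend by continuity, and verify the two coherence diagrams for a strong functor. Two points deserve correction, though. First, the ``composite of strengths'' route as you describe it does not typecheck: $\Pow_x$ is a functor on $\Cone$ (resp.\ $\LCone$), so a strength $C \times \Pow_x(D) \to \Pow_x(C \times D)$ only makes sense for cones $C$, and the intermediate object $X \times \PP(Y)$ you would need to feed it (with $X$ a bare domain) is not a cone. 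You cannot literally interleave $\str^{\PP}$ and $\str^{\Pow_x}$; the direct basis-level definition is the one that has to carry the proof.

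Second, you locate the ``hard part'' in the wrong place. The formula applies the closure operator $x(-)$ to the \emph{finite} set $\{\delta_a \otimes \mu \mid \mu \in F\}$, so the result is a basic element of $\Pow_x\PP(X \times Y)$ by construction --- there is nothing to check about images of lenses or closed convex sets being lenses (and indeed the direct image of a lens under a continuous map need not be a lens, so that would be the wrong thing to prove). The clean observation is that for fixed $a$ the map $\delta_a \otimes (-)$ is just $\PP(\iota_a)$ for the continuous map $\iota_a : y \mapsto (a,y)$, hence a $\Cone$- (resp.\ $\LCone$-) morphism, and $\str(a,-) = \Pow_x\PP(\iota_a)$; this settles representation-independence and continuity in the second argument for free, via the identity $\Pow_x(g)(x(F)) = x(g[F])$. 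The genuine residual work is continuity in the \emph{first} argument $a$ (which, combined with separate continuity in the second, gives joint Scott-continuity, since for DCPOs separate continuity implies joint continuity), together with the routine unit and associativity squares checked on basic elements. If you reorganise the proof around $\str(a,-) = \Pow_x\PP(\iota_a)$ and concentrate the effort on continuity in $a$, the argument closes.
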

Similarly the following result is also obtained straightforwardly.
\begin{proposition}
        The monads $\Pow_x : \Cone \to \Cone$ ($x \in \{l,u\})$ are strong
        w.r.t. coproducts. The corresponding natural
        transformation $\mathrm{str} : \Id + \Pow_x \to \Pow_x(\Id + \Id)$ is
        defined as $\mathrm{str} = [\eta \comp \inl , \Pow_x \inr]$. An
        analogous result applies to the biconvex powercone $\Pow_b : \LCone
        \to \LCone$.
\end{proposition}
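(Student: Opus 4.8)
The plan is to recognise $\str = [\eta \comp \inl, \Pow_x \inr]$ as the \emph{canonical} coproduct-strength carried by any pointed endofunctor on a category with binary coproducts, and then to discharge the required laws through the universal property of the coproduct. First I would check that $\str$ is a well-defined $\Cone$-morphism: it is the copairing of the two maps $\eta_{C+D} \comp \inl$ and $\Pow_x(\inr)$ out of $C + \Pow_x D$, and each of these is already a continuous linear map, so the copairing is one as well. Recall here that in $\Cone$ binary products are biproducts, so the coproduct $C + \Pow_x D$ genuinely exists and comes equipped with injections $\inl,\inr$; this is exactly what makes the copairing available. (Equivalently, $\str$ is $[\Pow_x\inl,\Pow_x\inr] \comp (\eta + \id)$, the comparison map every functor carries for coproducts, precomposed with the point $\eta$.)

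Next I would verify that $\str$ is natural and satisfies the monoidal-strength coherence laws for the coproduct tensor $(+,0)$ -- where the unit $0$ is the zero/initial object -- together with the two monad-strength laws, namely compatibility with the unit $\eta$ and with the multiplication $\mu = \id^\star$. The guiding principle throughout is that two morphisms out of a coproduct coincide as soon as they agree after precomposition with $\inl$ and with $\inr$. Consequently each coherence square splits into a left-injection equation and a right-injection equation: on the left injection every such equation collapses to a statement about $\eta$ (typically naturality of $\eta$, so that $\Pow_x(\inl) \comp \eta = \eta \comp \inl$, plus a monad unit law), and on the right injection it collapses to functoriality of $\Pow_x$ together with naturality of the relevant structure map with respect to $\inr$.

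Concretely, compatibility with $\eta$ reduces on the left injection to $\str \comp \inl = \eta \comp \inl$ (by definition) and on the right to $\Pow_x(\inr) \comp \eta = \eta \comp \inr$ (naturality of $\eta$); the unit and associativity coherence laws for the coproduct tensor reduce analogously, using only the coherence isomorphisms of the coproduct and naturality of $\eta$. I expect the main obstacle to be the multiplication-compatibility law $\str \comp (\id + \mu) = \mu \comp \Pow_x(\str) \comp \str$, the one genuinely two-layer computation. Its left-injection component unwinds by naturality of $\eta$ and the monad unit law $\mu \comp \eta = \id$; its right-injection component requires the naturality of $\mu$ with respect to $\inr$, namely $\mu \comp \Pow_x\Pow_x(\inr) = \Pow_x(\inr) \comp \mu$, followed by the functoriality identity $\Pow_x(\str) \comp \Pow_x(\inr) = \Pow_x(\str \comp \inr) = \Pow_x\Pow_x(\inr)$. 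Even this case therefore bottoms out at naturality and the monad laws, so no further ingredient is needed.

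Finally, the biconvex case is handled by running the same argument inside $\LCone$ rather than $\Cone$. Since $\LCone$ is likewise closed under finite biproducts (finite products of Lawson-compact continuous d-cones being Lawson-compact), the coproduct $C + \Pow_b D$ and its injections are available there, and $\eta$, $(-)^\star$ are the structure maps of the monad $\Pow_b$ on $\LCone$; hence every injectionwise reduction above goes through verbatim.
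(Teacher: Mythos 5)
Your proposal is correct and matches the paper's intent: the paper's own proof is the single line ``follows straightforwardly from standard categorical arguments,'' and your injectionwise verification of naturality, the unit/multiplication compatibility, and the coproduct coherence laws is exactly the standard argument being invoked, carried out in full. Your observation that the coproducts exist because binary products in $\Cone$ (resp.\ $\LCone$) are biproducts is likewise the justification the paper relies on, so nothing is missing.
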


By unravelling the definition of $\mathrm{str}$ and by taking advantage of the
fact that binary coproducts of cones are also products (recall our previous
observations about biproducts in $\Cone$) we obtain the following equivalent
formulation of $\mathrm{str}$ for all components $C,D$,
\[
                \mathrm{str}_{C,D}: C \times \Pow_x(D) \to \Pow_x(C \times D)
                \qquad \qquad
                (c,x(F)) \mapsto x(\{c\} \times F)
\]

\section{Concurrent pGCL, its operational semantics, and its  logic}
\label{sec:lang}

As mentioned in Section~\ref{sec:intro} our language is a concurrent extension
of pGCL~\cite{mciver01,mciver05}. It is described by the BNF grammar,
\begin{align*}
        \label{grammar_lang}
        P ::= \prog{skip} 
        \mid 
        \prog{a} 
        \mid 
        P \, ; \, P
        \mid
        P \parallel P
        \mid 
        P \, +_\prog{p} \, P
        \mid 
        P \, + \, P
        \mid
        \prog{if} \> \prog{b} \> \prog{then} \> \> P \> \> \prog{else} \> \> P
        \mid
        \prog{while} \> \prog{b} \> \> P
\end{align*}
where $\prog{a}$ is a program from a pre-determined set of programs and
$\prog{b}$ is a condition from a pre-determined set of conditions.  A program
$P +_\prog{p} Q$ $(\prog{p} \in [0,1] \cap \mathbb{Q})$ represents a
probabilistic choice between either the evaluation of $P$ or $Q$. $P + Q$ is
the non-deterministic analogue.  The other program constructs are quite
standard so we omit their explanation.

We now equip the language with a small-step operational semantics. The latter
will be used later on for introducing a big-step counterpart via probabilistic
scheduling.  First we take an arbitrary set $S$ of states, for each atomic
program $\prog{a}$ we postulate the existence of a function $\sem{\prog{a}} : S
\to \Dist(S)$, and for each condition $\prog{b}$ we postulate the existence of
a function $\sem{\prog{b}} : S \to \{ \mathtt{tt,ff} \}$. Let $\SPrg$ be the
set of programs. The language's small-step operational semantics is the
relation $\longrightarrow \, \subseteq (\SPrg \times S) \times \Dist (S +
(\SPrg \times S) )$ that is generated inductively by the rules in
Figure~\ref{fig:op_sem}. The latter are a straightforward probabilistic
extension of the rules presented in the classical reference \cite[Chapters 6
and 8]{reynolds98}, so we omit their explanation.

Let us denote by $\pv{P}{s} \longrightarrow$ the set $\{ \mu \mid \pv{P}{s}
\longrightarrow \mu \}$.  The small-step operational semantics enjoys the
following key property for adequacy.

\begin{proposition}
        \label{theo:finitely}
        For every program $P$ and state $s$ the set
        $\pv{P}{s} \longrightarrow$ is finite.
\end{proposition}
\begin{proof}
        Follows by induction over the syntactic structure of programs
        and by taking into account that finite sets are closed under
        binary unions and functional images.
\end{proof}

\begin{figure*}
        \centering
\scalebox{0.98}{
\renewcommand{\arraystretch}{3}
\begin{tabular}{p{0cm}  p{0cm}  p{0cm}}
        \multicolumn{1}{c}{
        \infer{ \pv{\prog{a}}{s} \longrightarrow \sem{\prog{a}}(s) }{}
        \hspace{1cm} 
        }
        & 
        \multicolumn{1}{c}{
        \infer{ \pv{\prog{skip}}{s} \longrightarrow 1 \cdot s }{}
        \hspace{1cm} 
        }
        &
        \multicolumn{1}{c}{
        \infer{ \pv{P; Q}{s} 
        \longrightarrow \sum_i p_i \cdot \pv{P_i; Q}{s_i} + \sum_j p_j \cdot \pv{Q}{s_j}}
                { \pv{P}{s} 
                \longrightarrow \sum_i p_i \cdot \pv{P_i}{s_i} + \sum_j p_j \cdot s_j}
        }
        \\[10pt]
        \multicolumn{3}{c}{
        \infer{ \pv{P \parallel Q}{s} 
        \longrightarrow \sum_i p_i \cdot \pv{P_i \parallel Q}{s_i} 
                + \sum_j p_j \cdot \pv{Q}{s_j}}
                { \pv{P}{s} 
                \longrightarrow \sum_i p_i \cdot \pv{P_i}{s_i} + \sum_j p_j \cdot s_j}
        }
        \\[10pt]
        \multicolumn{3}{c}{ 
        \infer{ \pv{P \parallel Q}{s} 
        \longrightarrow \sum_i p_i \cdot \pv{P \parallel
        Q_i}{s_i} + \sum_j p_j \cdot \pv{P}{s_j}}
                { \pv{Q}{s} 
                \longrightarrow \sum_i p_i \cdot \pv{Q_i}{s_i} + \sum_j p_j \cdot s_j }
        }
        \\[10pt]
        \multicolumn{2}{l}
        {
                \infer{ \pv{P \> +_\prog{p} \> Q}{s} \longrightarrow 
                \prog{p} \cdot \mu + (1-\prog{p}) \cdot \nu }{
                  \pv{P}{s} \longrightarrow \mu \qquad
                \pv{Q}{s} \longrightarrow \nu}
        }
        &
        \multicolumn{1}{l}
        {
        \infer{ 
                \pv{P + Q}{s} \longrightarrow \mu 
        }
        {
                \pv{P}{s} \longrightarrow \mu
        }
        \hspace{1.8cm}
        \infer{ 
                \pv{P + Q}{s} \longrightarrow \mu 
        }
        {
                \pv{Q}{s} \longrightarrow \mu         
        }
        }
        \\[10pt]
        \multicolumn{2}{c}
        {
                \infer{
                \pv{\prog{if \, \> b} \> 
                \, \prog{then} \, \> P \> \, \prog{else} \, \> Q}{s}
                \longrightarrow 1 \cdot \pv{P}{s} 
                }
                {
                \sem{\prog{b}}(s) = \mathtt{tt} 
                }
        }
        &
        \multicolumn{1}{c}{
                \infer{ 
                \pv{\prog{if \, \> b} \> \, \prog{then} \, 
                \> P \> \, \prog{else} \,  \> Q}{s} 
                \longrightarrow 1 \cdot \pv{Q}{s}  
                }
                {
                \sem{\prog{b}}(s) = \mathtt{ff}
                }
        }
        \\[10pt]
        \multicolumn{2}{c}
        {
                \infer{ 
                \pv{\prog{while \, \> b} \> \, P}{s} \longrightarrow 
                1 \cdot \pv{P; \prog{while \, \> b} \> \, P}{s} 
                }
                {
                \prog{\sem{b}}(s) = \mathtt{tt} 
                }
        }
        &
        \multicolumn{1}{c}
        {
                \infer{ 
                \pv{\prog{while \, \> b} \, \> P}{s} \longrightarrow 1 \cdot s 
                }
                {
                \sem{\prog{b}}(s) = \mathtt{ff}
                }
        }
\end{tabular}
}

\caption{Small-step operational semantics}
\label{fig:op_sem}
\end{figure*}

We now introduce a big-step operational semantics. As mentioned before it is
based on the previous small-step semantics and the notion of a probabilistic
scheduler~\cite{varacca07}.  Intuitively a scheduler resolves all
non-deterministic choices that are encountered along the evaluation of a
program $P$ based on a history $h$ of previous decisions and the current state
$s$.
\begin{definition}
        A probabilistic scheduler $\sch$ is a partial function,
        \[
                \sch : \big ((\mathrm{Pr} \times S) \times
                \Dist(S + (\mathrm{Pr} \times S))\big )^\ast \times (\mathrm{Pr} \times S)
                \xrightharpoonup{\hspace{0.3cm} }
                \Dist \Dist(S + (\mathrm{Pr} \times S))
        \]
        such that whenever $\sch(h,\pv{P}{s})$ is well-defined it is a distribution
        of valuations in $\pv{P}{s} \longrightarrow$ (\ie\ it is a
        distribution of the possible valuations of one-step transitions
        that originate from $\pv{P}{s}$). 
\end{definition}
Note that if $\sch(h,\pv{P}{s})$ is well-defined it must be a linear
combination $\sum_k p_k \, \cdot \nu_k$ of valuations $\nu_k$ of the form
$\sum_{i} p_{k,i} \cdot \pv{P_{k,i}}{s_{k,i}} + \sum_{j} p_{k,j} \cdot
s_{k,j}$. Every representation $\nu_k$ is essentially unique by virtue
of $S + (\mathrm{Pr} \times S)$ being discrete. Not only this, by
Theorem~\ref{theo:locin}  the space  $\Dist(S + (\mathrm{Pr}
\times S))$ will also be discrete and thus the representation $\sum_k p_k \cdot
\nu_k$ itself will be essentially unique as well. This is important for the
definition of the big-step operational semantics and Proposition~\ref{theo:det}
below.

We will often denote a pair of the form $(h,\pv{P}{s})$ simply by $h\pv{P}{s}$.
The big-step operational semantics is defined as a relation
$h\pv{P}{s}\Downarrow^{\mathcal{S},n} \mu$ where $n \in \Nats$ is a natural
number and $\mu \in \PDist(S)$ is a valuation.  This relation represents an
\emph{$n$-step partial} evaluation of $\pv{P}{s}$ w.r.t.  $\mathcal{S}$ and
$h$, and it is defined inductively by the rules in Figure~\ref{fig:bop_sem}.

\begin{proposition}
        \label{theo:det}
        Take a natural number $n \in \Nats$, configuration $\pv{P}{s}$, history
        $h$, and scheduler $\sch$.  The big-step operational semantics has
        the following properties:
        \begin{itemize}
                \item (\emph{Determinism})
                        if $h\pv{P}{s} \Downarrow^{\mathcal{S},n} \mu
                        \text{ and }
                        h\pv{P}{s} \Downarrow^{\mathcal{S},n} \nu$
                        then
                        $\mu = \nu$;
                \item (\emph{Monotonicity})
                        if $h\pv{P}{s} \Downarrow^{\mathcal{S},n} \mu
                        \text{ and } 
                        h\pv{P}{s} \Downarrow^{\mathcal{S},n+1} \nu
                        $
                        then
                        $\mu \leq \nu$.
        \end{itemize}
\end{proposition}
\begin{proof}
        Follows straightforwardly by induction over the natural numbers and by
        appealing to the fact that scaling and addition (of valuations)
        are monotone.
\end{proof}

\begin{figure*}
       \begin{gather*}
       \scalebox{0.98}{
       \infer{h{\pv{P}{s}} \Downarrow^{\mathcal{S},0} \bot}{}
       }
       \hspace{2cm}
       \scalebox{1}{
       \infer{
               h\pv{P}{s} \Downarrow^{\mathcal{S},n+1} 
               \sum_k p_k \cdot \left ( \sum_{i} p_{k,i} 
                       \cdot \mu_{k,i} + \sum_{j} p_{k,j} \cdot s_{k,j}
               \right )
       }
       {
               \sch(h\pv{P}{s})
               =
               \sum_k p_k  \cdot  \nu_k
               \qquad
               \forall k,i. \,
               h\pv{P}{s}\nu_k\pv{P_{k,i}}{s_{k,i}} \Downarrow^{\mathcal{S},n} \mu_{k,i}
       }
       }
       \end{gather*}
\caption{Big-step operational semantics}
\label{fig:bop_sem}
\end{figure*}

\smallskip
\noindent
\textbf{The logic.}
We now introduce the aforementioned logic for reasoning about program
properties. It is an instance of geometric propositional logic~\cite{vickers89}
that allows to express both must ($\demon$) and may ($\angel$) statistical
termination ($\statt$).  It is two-layered specifically its formulae $\phi$ are
given by,
\[
        \phi ::=  \demon \varphi \mid \angel \varphi \mid \phi \wedge \phi \mid
        \bigvee \phi \mid \bot \mid \top
        \hspace{2cm}
        \varphi ::= \> \statt U \mid
                    \varphi \wedge \varphi \mid
                    \bigvee \varphi \mid
                    \bot \mid \top
\] 
where $U$ is a subset of the discrete state space $S$ (\ie\ $U$ is an
`observable property' per our previous remarks) and $p \in \Rz$. Whilst the top
layer handles the non-deterministic dimension (\eg\ all possible interleavings)
the bottom layer handles the probabilistic counterpart. Note that the
disjunction clauses are not limited to a finite arity but are set-indexed -- a
core feature of geometric logic.  The expression $\statt U$ refers to a program
`terminating in $U$' with probability \emph{strictly} greater than $p$.  The
formulae $\demon \varphi$ and $\angel \varphi$ correspond to universal and
existential quantification respectively.  For example $\demon \statt U$ reads
as ``it is \emph{necessarily} the case that the program at hand terminates in
$U$ with probability strictly greater than $p$'' whilst $\angel \statt U$ reads
as ``it is \emph{possible} that the program at hand terminates in $U$ with
probability strictly greater than $p$''.  It may appear that the choice of such
a logic for our language is somewhat \emph{ad-hoc}, but one can easily see that
it emerges as a natural candidate after inspecting the Scott topologies of the
mixed powerdomains and recalling the motto `observable properties as open sets'
(recall the previous section). In this context conjunctions and disjunctions
are interpreted respectively as intersections and unions of open sets.  

Observe as well that this logic is conceptually different from usual logics in
probabilistic process algebra (see for example~\cite{deng15}). Indeed the
latter are more focussed on reasoning about probabilities of (labelled)
transitions, whilst in our case the idea is to reason about probabilities of
halting states. Remarkably, a logic similar in spirit to those in process
algebra could also be topologically generated, not by taking the underlying
topology of the mixed powerdomains (as we do), but the topology of the
`resumptions model' mentioned in the introduction. See more details about this
particular aspect in~\cite{mislove04}.

The next step is to present a satisfaction relation between pairs $\pv{P}{s}$
and formulae $\phi$. To this effect we recur to the notion of a non-blocking
scheduler which is presented next.
\begin{definition}
        Consider a pair $(h,\pv{P}{s})$ and a scheduler $\sch$. We qualify
        $\sch$ as \emph{non-blocking w.r.t. $h\pv{P}{s}$} if for every natural
        number $n \in \Nats$ we have some valuation $\mu$ such that $h\pv{P}{s}
        \Downarrow^{\sch,n} \mu$.
\end{definition}
We simply say that $\sch$ is non-blocking  -- \ie\ we omit the reference to
$h\pv{P}{s}$ -- if the pair we are referring to is clear from the context.

Observe that according to our previous remarks every formula $\varphi$
corresponds to an open set of $\PP (S)$ -- in fact we will treat $\varphi$ as
so -- and define inductively the relation $\models$ between $\pv{P}{s}$ and
formulae $\phi$ as follows:
\begin{align*}
        \pv{P}{s} & \models \demon \varphi
        \text{ iff }
        \text{\emph{for all} non-blocking schedulers } \sch  \,
        \text{w.r.t.}\, \pv{P}{s} .  \,
        \exists n \in \Nats, 
        \mu \in \varphi.  \, \pv{P}{s} \Downarrow^{\sch,n} \mu
        \\
        \pv{P}{s} & \models \angel \varphi
        \text{ iff }
        \text{\emph{for some} scheduler } \sch. \>
        \exists n \in \Nats, \,
        \mu \in \varphi. \,
        \pv{P}{s} \Downarrow^{\sch,n} \mu 
\end{align*}
with all other cases defined standardly. The non-blocking condition is
needed to ensure that degenerate schedulers do not trivially falsify
formulae of the type $\demon \varphi$. For example the totally
undefined scheduler always falsifies such formulae.

\section{Denotational semantics}
\label{sec:den}
\textbf{The intensional step.}
As alluded to in Section~\ref{sec:intro} the first step in defining our
denotational semantics is to solve a domain equation of resumptions, with the
branching structure given by a mixed powerdomain (Section~\ref{sec:back}).
More specifically we solve,
\begin{align}
        X \cong \Pow_{x} \PP(S + X \times S)^S
        \hspace{3cm}
        (S \text{ a discrete domain})
        \label{domeq}
\end{align}
for a chosen $x \in \{l,u,b\}$ and where $(-)^S$ is the $S$-indexed product
construct. In order to not overburden notation we abbreviate $\Pow_x$ simply to
$\Pow$ whenever the choice of $x$ is unconstrained. We also denote the functor
$\Pow_x \PP(S + (-) \times S)^S$ by $R_x$ and abbreviate the latter simply to
$R$ whenever the choice of $x$ is unconstrained. Now, in the case that $x \in
\{l,u\}$ the solution of Equation~\eqref{domeq} is obtained standardly, more
specifically by employing the standard final coalgebra construction,
\cite[Theorem IV-5.5]{gierz03}, and the following straightforward proposition.
\begin{proposition}
        \label{theo:cont}
        The functor $R_x$ is locally continuous for every $x \in \{l,u,b\}$.
\end{proposition}
The case $x \in \{b\}$ (which moves us from $\Dom$ to the category $\Coh$) is
just slightly more complex: one additionally appeals to \cite[Exercise
IV-4.15]{gierz03}.  Notably in all cases the solution thus obtained is always
the greatest one \ie\ the carrier $\nu R$ of the final $R$-coalgebra. It is
also the smallest one in a certain technical sense: it is the initial
$R$-strict algebra~\cite[Chapter IV]{gierz03}, a fact that follows from $R$
being strict (\ie\ it preserves strict morphisms) and \cite[Theorem
IV-4.5]{gierz03}. Both finality and initiality are crucial to our work: we will
use the final coalgebra property to define the intensional semantics $\asem{-}$
and then extensionally collapse the latter by recurring to the initial algebra
property.

Let us thus proceed by detailing the intensional semantics.  It is based on the
notion of primitive corecursion~\cite{uustalu99} which we briefly recall next.  

\begin{theorem}
Let $\catC$ be a category with binary coproducts and $\funF :
\catC \to \catC$ be a functor with a final coalgebra $\unfold : \nu \funF \cong
\funF(\nu \funF)$. For every $\catC$-morphism $f : X \to \funF(\nu \funF + X)$
there exists a unique $\catC$-morphism $h : X \to \nu \funF$ that makes the
following diagram commute.
\[
        \xymatrix@C=60pt{
                X \ar[d]_(0.5){f} 
                \ar@{.>}[r]^(0.5){h} 
                & 
                \nu \funF 
                \ar@{}[d]|-(0.5){\cong}
                \ar@/^10pt/[d]^(0.5){\unfold}
                \\
                \funF(\nu F + X) 
                \ar[r]_{\funF[\id,h]}
                &  
                \funF(\nu \funF)
                \ar@/^10pt/[u]^(0.5){\fold}
        }
\]
The morphism $h$ is defined as the composition $\mathrm{corec}([\funF \inl \,
\comp \unfold,f])\comp \inr$, where $\mathrm{corec}([\funF \inl \, \comp
\unfold,f])$ is the universal morphism induced by the $\funF$-coalgebra $[
\funF \inl \, \comp \unfold, f] : \nu F + X \to \funF(\nu \funF + X)$. 
\end{theorem}
With the notion of primitive corecursion at hand one easily defines operators
to interpret both sequential and parallel composition, as detailed next.

\begin{definition}[Sequential composition]
        Define the map $\seqI : \nu R \times \nu R \to \nu R$ as the morphism
        that is induced by primitive corecursion and the composition of
        the following continuous maps,
        \begin{align*}
                \nu R \times \nu R 
                & \xrightarrow{\unfold \times \id} 
                \Pow \PP (S + \nu R \times S)^S \times \nu R \\
                & \xrightarrow{\langle \pi_s \times \id \rangle_{s \in S} } 
                \left ( \Pow \PP (S + \nu R \times S) \times \nu R \right )^S 
                \\
                & \xrightarrow{\mathrm{str}^S}
                \Pow \PP ((S + \nu R \times S) \times \nu R)^S 
                \\
                & \xrightarrow{\cong}
                \Pow \PP ((\nu R + \nu R \times \nu R) \times S)^S 
                \\
                & \xrightarrow{\Pow \PP(\inr)^S}
                \Pow \PP (S + (\nu R + \nu R \times \nu R) \times S)^S 
        \end{align*}
        We denote this composite by $\mathop{\mathrm{seql}}$.
\end{definition}
Intuitively the operation $\mathop{\mathrm{seql}}$ unfolds the resumption on
the left and performs a certain action depending on whether this resumption
halts or resumes after receiving an input: if it halts then
$\mathop{\mathrm{seql}}$ yields control to the resumption on the right; if it
resumes $\mathop{\mathrm{seql}}$ simply attaches the resumption on the right to
the respective continuation.  Note that there exists an analogous operation
$\mathrm{seqr}$ which starts by unfolding the resumption on the right.  Note as
well that $\seqI$ is continuous by construction.

\begin{definition}[Parallel composition]
        Define the map $\parI : \nu R \times \nu R \to \nu R$
        as the morphism that is induced by primitive corecursion and the
        continuous map $\uplus^S \comp \langle \pi_s \times \pi_s \rangle_{s \in S} \comp
        \pv{\mathop{\mathrm{seql}}}{\mathop{\mathrm{seqr}}}$.
\end{definition}
We finally present our intensional semantics $\asem{-}$. It is defined
inductively on the syntactic structure of programs $P \in \mathrm{Pr}$ and
assigns to each program  $P$ a denotation $\asem{P}\in \nu R$. For simplicity
we will often treat the map $\mathrm{unfold}$ as the identity. Note that every
$\asem{P}(s)$ (for $s \in S$) is thus an element of a cone with a
semi-lattice operation $\uplus$ and that this algebraic structure extends to
$\nu R$ by pointwise extension. The denotational semantics is presented in
Figure~\ref{fig:sem}. The isomorphisms used in the last two equations denote
the distribution of products over coproducts, more precisely they denote $S
\times 2 \cong S + S$ with the (coherent) domain $2$ defined as the coproduct
$1 + 1$.  These two equations also capitalise on the bijective correspondence
between elements of $\nu R$ and maps $S \to \Pow \PP (S + \nu R \times S)$.
Note as well that in the case of conditionals we prefix the execution of
$\asem{P}$ (and $\asem{Q}$) with $\asem{\prog{skip}}$.  This is to raise the
possibility of the environment altering states between the evaluation of
$\prog{b}$ and carrying on with the respective branch. An analogous approach is
applied to the case of while-loops. Finally the fact that the map from which we
take the least fixpoint $(\mathrm{lfp})$ is continuous follows from $\seqI$,
copairing, and pre-composition being continuous.

\begin{figure}
        \begin{align*}
                \asem{\prog{skip}} & = s \mapsto x(\{ 1 \cdot s \})
                \\
                \asem{\prog{a}} & = s \mapsto x(\{ \sem{\prog{a}}(s)\})
                \\
                \asem{P ; Q} & = \asem{P} \seqI \, \asem{Q}
                \\
                \asem{P \parallel Q} & = \asem{P} \parI \, \asem{Q}
                \\
                \asem{P +_\prog{p} Q} & = \prog{p} \cdot \asem{P} + 
                (1-\prog{p}) \cdot \asem{Q}
                \\
                \asem{P + Q} & = \asem{P} \uplus
                \asem{Q}
                \\
                \asem{\prog{if} \, \prog{b} \, \prog{then} \, P \, \prog{else} \, Q}                             & = 
                [\asem{\prog{skip}} \seqI \, \asem{Q}, 
                \asem{\prog{skip}} \seqI \, \asem{P}] \, \comp \cong  
                \comp \, \pv{\id}{\sem{\prog{b}}}
                \\
                \asem{\prog{while} \> \prog{b} \> P} & =
                \mathop{\mathrm{lfp}} \Big (r \mapsto [\asem{\prog{skip}},
                        \asem{\prog{skip}} \seqI \left (\asem{P} \seqI\, r \right )] 
                        \, \comp \cong \comp \, \pv{\id}{\sem{\prog{b}}} \Big )
        \end{align*}
        \caption{Intensional semantics}
        \label{fig:sem}
\end{figure}

The denotational semantics thus defined may seem complex, but it actually has a
quite simple characterisation that involves the small-step operational
semantics.  This is given in the following theorem (and corresponds to the
commutativity of the left rectangle in Diagram~\ref{eq:diag}).

\begin{theorem}
        \label{theo:sem_eq} 
        For every program $P \in \mathrm{Pr}$ the denotation $\asem{P}$ is
        (up-to isomorphism) equal to the map, 
        \[
                s \mapsto  
                x \left ( \left \{
                                \textstyle{\sum_i}\, p_i 
                                \cdot \pv{\asem{P_i}}{s_i} + \textstyle{\sum_j}\, p_j \cdot s_j
                                \mid
                                \pv{P}{s} \longrightarrow
                                \textstyle{\sum_i}\, p_i 
                                \cdot \pv{P_i}{s_i} + \textstyle{\sum_j}\, p_j \cdot s_j
                        \right \} \right )
                \]
\end{theorem}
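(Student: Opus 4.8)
The plan is to prove the stated equality by structural induction on $P$, reading the claim as the computation of $\unfold(\asem{P}) \in \Pow\PP(S + \nu R \times S)^S$ (this is what ``up-to isomorphism'' refers to, since the paper treats $\unfold : \nu R \cong R(\nu R)$ as the identity). By Theorem~\ref{theo:finitely} the set $\pv{P}{s}\longrightarrow$ is finite and each of its valuations is a finite convex combination of point-masses, so throughout the argument every $\unfold(\asem{P})(s)$ is a basic element $x(F)$ with $F$ finite; consequently all the structure maps ($\str$, $+$, $\cdot$, $\uplus$ and the Kleisli liftings) may be evaluated on basic elements using their explicit descriptions from Section~\ref{sec:back} and Theorem~\ref{theo:str}. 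The base cases are immediate: $\unfold(\asem{\prog{skip}})(s) = x(\{1 \cdot s\})$ and $\unfold(\asem{\prog{a}})(s) = x(\{\sem{\prog{a}}(s)\})$ match the singletons $\pv{\prog{skip}}{s}\longrightarrow$ and $\pv{\prog{a}}{s}\longrightarrow$ verbatim.

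The representative inductive case is sequential composition, where I would exploit the primitive corecursion identity $\unfold \comp \seqI = R[\id,\seqI] \comp \mathrm{seql}$. Feeding $(\asem{P},\asem{Q})$ into $\mathrm{seql}$ and using the induction hypothesis for $P$, one tracks the effect on each point-mass: the strength $\str$ pairs every summand with $\asem{Q}$, the distributivity isomorphism $(S + \nu R \times S)\times \nu R \cong (\nu R + \nu R \times \nu R)\times S$ routes the terminating summands $s_j$ (which become $(\asem{Q},s_j)$, i.e.\ control handed to $Q$) away from the resuming summands $(\asem{P_i},s_i)$ (which become $(\asem{P_i},\asem{Q},s_i)$), and $\Pow\PP(\inr)$ tags every result as a resumption. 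Applying $R[\id,\seqI]$ then collapses the resuming summands via $\seqI$ and leaves the hand-off summands untouched, yielding
\[
\unfold(\asem{P;Q})(s) = x\Big( \Big\{ {\textstyle\sum_i} p_i \cdot \pv{\asem{P_i}\seqI\asem{Q}}{s_i} + {\textstyle\sum_j} p_j \cdot \pv{\asem{Q}}{s_j} \mid \pv{P}{s}\longrightarrow {\textstyle\sum_i} p_i \cdot \pv{P_i}{s_i} + {\textstyle\sum_j} p_j \cdot s_j \Big\} \Big).
\]
Since $\asem{P_i}\seqI\asem{Q} = \asem{P_i;Q}$ by definition, this is exactly the claim read off from the operational rule for $P;Q$. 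The conditional and while cases reduce to this one: the $\asem{\prog{skip}}$-prefix turns $\asem{\prog{skip}}\seqI\asem{P}$ into the single transition $x(\{1\cdot\pv{\asem{P}}{s}\})$, matching the one-step guard transitions, and for $\prog{while}\ \prog{b}\ P$ I would evaluate the defining map at the fixpoint $\asem{\prog{while}\ \prog{b}\ P}$ (the fixpoint equation alone suffices, leastness is not needed here) and reuse the $\seqI$ and $\prog{skip}$ computations.

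For the two choice constructs I would argue directly on basic elements. Writing $\unfold(\asem{P})(s) = x(F)$ and $\unfold(\asem{Q})(s) = x(G)$ by the induction hypothesis, we get $\unfold(\asem{P +_\prog{p} Q})(s) = \prog{p}\cdot x(F) + (1-\prog{p})\cdot x(G) = x(\prog{p}\cdot F + (1-\prog{p})\cdot G)$, whose Minkowski sum is precisely $\{\prog{p}\cdot\mu + (1-\prog{p})\cdot\nu \mid \mu\in F,\ \nu\in G\}$, the transition set dictated by the rule for $P +_\prog{p} Q$; and $\unfold(\asem{P + Q})(s) = x(F)\uplus x(G) = x(F\cup G)$ matches the union of $\pv{P}{s}\longrightarrow$ and $\pv{Q}{s}\longrightarrow$ given by the two rules for $P + Q$. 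Parallel composition is handled like sequential composition but through $\unfold \comp \parI = R[\id,\parI] \comp (\uplus^S \comp \langle\pi_s\times\pi_s\rangle_{s\in S} \comp \langle\mathrm{seql},\mathrm{seqr}\rangle)$: the $\mathrm{seql}$ branch reproduces the rule in which $P$ steps (handing control to $\asem{Q}$ on termination), the $\mathrm{seqr}$ branch reproduces the symmetric rule for $Q$, $[\id,\parI]$ turns the resuming summands into $\asem{P_i}\parI\asem{Q} = \asem{P_i\parallel Q}$ and $\asem{P}\parI\asem{Q_i} = \asem{P\parallel Q_i}$, and $\uplus$ unions the two transition sets exactly as the two parallel rules prescribe.

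The main obstacle is the bookkeeping in the corecursion cases: one must verify that $\str$, the distributivity isomorphisms and the injections $\inr$ compose so that terminating and resuming summands are routed to the correct components before $R[\id,-]$ is applied, and for parallel composition this has to be checked for $\mathrm{seql}$ and $\mathrm{seqr}$ simultaneously before taking $\uplus$. These verifications are purely computational once one commits to evaluating on the finite basic elements supplied by Theorem~\ref{theo:finitely}; conceptually, the maps $\mathrm{seql}$, $\mathrm{seqr}$ and $\uplus$ were defined precisely to mirror the corresponding small-step rules, so the agreement is essentially definitional modulo this routing.
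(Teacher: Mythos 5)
Your proposal is correct and follows essentially the same route as the paper: a structural induction on programs, evaluating the corecursively defined operators on the finite basic elements $x(F)$ supplied by Theorem~\ref{theo:finitely}, and handling while-loops via the fixpoint equation rather than leastness. The paper's own proof is only a two-sentence sketch of exactly this argument, so your write-up is a faithful (and correctly routed) elaboration of it.
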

\begin{proof}
        Follows from straightforward induction over the syntactic structure of
        programs, where for the case of while-loops we resort to the fixpoint
        equation.
\end{proof}

\smallskip
\noindent
\textbf{The extensional collapse.} We now present the extensional collapse of
the intensional semantics. Intuitively given a program $P$ the collapse removes
all intermediate computational steps of $\asem{P}$, which as mentioned
previously is a `resumptions tree'.  Technically the collapse makes crucial use
of the fact that the domain $\nu R$ is not only the final coalgebra of
$R$-coalgebras but also the initial algebra of strict $R$-algebras.  More
specifically we define the extensional collapse as the initial strict algebra
morphism,
\[
        \xymatrix@C=80pt{
                \nu R \ar@{.>}[r]^(0.45){\mathrm{ext}} 
                \ar@{}[d]|-(0.5){\cong}
                \ar@/^10pt/[d]^(0.50){\unfold}
                & \Pow \PP (S)^S 
                \\
                \Pow \PP (S + \nu R \times S)^S
                \ar[r]_(0.45){\Pow \PP(\id + \mathrm{ext} \times \id )^S}
                \ar@/^10pt/[u]^(0.50){\fold}
                & 
                \ar[u]_(0.50){{[\eta,\app]^\star}^S} 
                \Pow \PP (S + \Pow \PP(S)^S \times S)^S
        }
\]
Concretely the extensional collapse is defined by $\mathrm{ext}(\asem{P}) =
\bigvee_{n \in \Nats} \, \mathrm{ext}_n(\asem{P}_n)$ where $\asem{P}_n = \pi_n
(\asem{P})$ with $\pi_n : \nu R \to R^n(1)$ (recall that $\nu R$ is a certain
limit with projections $\pi_n : \nu R \to R^n(1)$). The maps $\mathrm{ext}_n$
on the other hand are defined inductively over the natural numbers (and by
recurring to the monad laws) by,
\begin{align*}
        \mathrm{ext}_0 & = \bot \mapsto (s \mapsto x(\{\bot\})) : R^0(1) \to \Pow \PP(S)^S
                \\
        \mathrm{ext}_{n+1} & =
        {[\eta,\app \comp (\mathrm{ext}_n \times \id)]^\star}^S  : R^{n+1} (1) \to \Pow \PP(S)^S
\end{align*}
The map $\mathrm{ext}$ enjoys several useful properties: it is both strict and
continuous (by construction). It is also both linear and $\uplus$-preserving
since it is a supremum of such maps. In the appendix (Section~\ref{sec:seq}) we
show that if the parallel operator is dropped then the semantics obtained from
the composite $\mathrm{ext}\comp \asem{-}$ (hencerforth denoted by $\sem{-}$)
is really just a standard monadic semantics induced by the monad $\Pow \PP$. In
other words, the concurrent semantics obtained from $\mathrm{ext} \comp
\asem{-}$ is a conservative extension of the latter.

We continue unravelling key properties of the composition $\mathrm{ext} \comp
\asem{-}$. More specifically we will now show that for every natural number $n \in
\Nats$, program $P$, and state $s$, the set $\mathrm{ext}_n(\asem{P}_n)(s)$ is
always \emph{finitely generated}, \ie\ of the form $x(F)$ for a certain
\emph{finite} set $F$. We will use this property to prove a proposition which
formally connects the operational and the extensional semantics $\sem{-}$ w.r.t.
$n$-step evaluations.  Thus, the fact that every
$\mathrm{ext}_n(\asem{P}_n)(s)$ is finitely generated follows by induction over
the natural numbers and by straightforward calculations that yield the
equations,
\begin{align}
        \label{eq:fin}
        \begin{split}
        \mathrm{ext}_0(\asem{P}_0)(s) & = x(\{\bot\})
        \\
        \mathrm{ext}_{n+1}(\asem{P}_{n+1})(s) & = x
        \textstyle{
                \left ( \bigcup 
                        \left \{ \sum_i p_i \cdot F_i + \sum_j p_j \cdot s_j \mid
                        \pv{P}{s} \longrightarrow
                        \sum_i p_i \cdot \pv{P_i}{s_i} + \sum_j p_j \cdot s_j 
                \right \} \right )
        }
\end{split}
\end{align}
with every $F_i$ a finite set that satisfies $\mathrm{ext}_n(\asem{P_i}_n)(s_i)
= x(F_i)$. Note that the previous equations provide an explicit construction of
a finite set $F$ such that $x(F) = \mathrm{ext}_n(\asem{P}_n)(s)$. From this we
obtain the formulation and proof of Proposition~\ref{main:theo}, which we
present next.

\begin{proposition}
        \label{main:theo}
        Take a program $P$, state $s$, and
        natural number $n$. The equation below holds.
        \[
                \conv{F} = \left \{ \mu \mid \pv{P}{s} \Downarrow^{\sch,n} \mu
                \text{ with } \sch \text{ a scheduler}
                \right \}
        \]
\end{proposition}
\begin{proof}
First recall that for a natural number $n \in \Nats$, a program $P$, and a
state $s$, we use the letter $F$ to denote the finite set that generates
$\mathrm{ext}_n(\asem{P}_n)(s)$. Note as well that from~\cite[Lemma 2.8]{tix09}
we obtain,
        \begin{align}
                \begin{split}
                        \conv{F}
                & =\mathrm{conv}
                \textstyle{
                \left ( \bigcup \left \{ \sum_i p_i \cdot F_i + \sum_j p_j \cdot s_j \mid
                        \pv{P}{s} \longrightarrow
                \sum_i p_i \cdot \pv{P_i}{s_i} + \sum_j p_j \cdot s_j \right \}\right ) 
                }
                \\
                &
                = \mathrm{conv}
                \textstyle{
                        \left ( \bigcup \left \{ \sum_i p_i 
                                        \cdot \conv{F_i} + \sum_j p_j \cdot s_j \mid
                        \pv{P}{s} \longrightarrow
                \sum_i p_i \cdot \pv{P_i}{s_i} + \sum_j p_j \cdot s_j \right \}\right ) 
                }
                \label{eq:unf}
                \end{split}
        \end{align}
We crucially resort to this observation for the proof, as detailed next.  As
stated in the proposition's formulation we need to prove that the equation,
        \[
                \conv{F} = \left \{ \mu \mid \pv{P}{s} \Downarrow^{\sch,n} \mu
                \text{ with } \sch \text{ a scheduler}
                \right \}
        \]
        holds. We will show that the two respective inclusions hold,
        starting with the case $\supseteq$.  The proof follows by induction
        over the natural numbers and by strengthening the induction invariant
        to encompass all histories $h$ and not just the empty one. The base
        case is direct.  For the inductive step $n+1$ assume that $h\pv{P}{s}
        \Downarrow^{\mathcal{S},n+1} \mu$ for some valuation $\mu$ and
        scheduler $\mathcal{S}$. Then according to the deductive rules of the
        big-step operational semantics (Figure~\ref{fig:bop_sem}) we obtain the
        following conditions:
        \begin{enumerate}
                \item $\sch(h\pv{P}{s}) = \sum_k p_k \cdot \nu_k$ for 
                        some convex combination $\sum_k p_k \cdot (-)$.
                        Moreover $\forall k. \, \pv{P}{s} \longrightarrow \nu_k$
                        with each valuation $\nu_k$ of the form
                        $\sum_i p_{k,i} \cdot \pv{P_{k,i}}{s_{k,i}} +
                        \sum_j p_{k,j}  \cdot s_{k,j}$;
                \item $\forall k,i. \, h\pv{P}{s}\nu_k\pv{P_{k,i}}{s_{k,i}}
                        \Downarrow^{\sch,n} \mu_{k,i}$
                      for some valuation $\mu_{k,i}$; 
                \item and finally 
                $\mu = \textstyle{
                                        \sum_k p_k  \cdot
                                        \left (\sum_{i} p_{k,i} \cdot \mu_{k,i}  + 
                                        \sum_{j} p_{k,j} \cdot s_{k,j} \right )
                                    }$.
        \end{enumerate}
        It follows from the induction hypothesis and the second condition that
        for all $k,i$ the valuation $\mu_{k,i}$ is in $\conv{F_{k,i}}$, where
        $F_{k,i}$ is the finite set that is inductively built from $P_{k,i}$
        and ${s_{k,i}}$ as previously described. Thus for all $k$
        each valuation $\sum_{i} p_{k,i} \cdot \mu_{k,i} + \sum_{j} p_{k,j}
        \cdot s_{k,j}$ is an element of the set $\sum_{i} p_{k,i} \cdot
        \conv{F_{k,i}} +  \sum_{j} p_{k,j} \cdot s_{k,j}$. Also the latter is
        contained in $\conv{F}$ according to Equations~\eqref{eq:unf} and the
        first condition. The proof then follows from the third condition and
        the fact that $\conv{F}$ is convex. 

        Let us now focus on the inclusion $\subseteq$. The base case is again
        direct. For the inductive step $n+1$ recall Equation~\eqref{eq:unf} and
        take a convex combination $\sum_k p_k \cdot \mu_k$ in the set,
        \begin{equation*}
                \conv{F} = 
                \mathrm{conv} 
                \textstyle{
                        \left ( \bigcup \left \{ \sum_i p_i \cdot \mathrm{conv} F_i
                                        + \sum_j p_j \cdot s_j \mid
                        \pv{P}{s} \longrightarrow
                \sum_i p_i \cdot \pv{P_i}{s_i} + \sum_j p_j \cdot s_j \right \}\right ) }
        \end{equation*}
        We can safely assume that every $\mu_k$ belongs to some set $\sum_i
        p_{k,i} \cdot \conv{F_{k,i}} + \sum_j p_{k,j} \cdot s_{k,j}$ generated
        by a corresponding valuation $\sum_i p_{k,i} \cdot
        \pv{P_{k,i}}{s_{k,i}} + \sum_j p_{k,j} \cdot s_{k,j}= \nu_k \in
        \pv{P}{s} \longrightarrow$. Each set $F_{k,i}$ is inductively
        built from $P_{k,i}$ and $s_{k,i}$ in the way that was previously
        described. Crucially we can
        also safely assume that all valuations $\nu_k$ involved are pairwise
        distinct, by virtue of all sets $\conv{F_{k,i}}$ being convex and by
        recurring to the normalisation of subconvex valuations.  

        Next, by construction every $\mu_k$ is of the form $\sum_{i} p_{k,i}
        \cdot \mu_{k,i}  + \sum_{j} p_{k,j} \cdot s_{k,j}$ with $\mu_{k,i} \in
        \conv{F_{k,i}}$, and by the induction hypothesis we deduce that
        $\forall k,i. \, \pv{P_{k,i}}{s_{k,i}} \Downarrow^{\mathcal{S}_{k,i},n}
        \mu_{k,i}$ for some scheduler $\mathcal{S}_{k,i}$. Our next step is to
        construct a single scheduler $\sch$ from all schedulers $\sch_{k,i}$.
        We set $\sch(\pv{P}{s}) = \sum_k p_k \cdot \nu_k$ and $\forall k,i.\,
        \sch(\pv{P}{s}\nu_kl) = \sch_{k,i} (l)$ for every input $l$ with
        $\pv{P_{k,i}}{s_{k,i}}$ as prefix. We set $\sch$ to be undefined w.r.t.
        all other inputs. One then easily proves by induction over the natural
        numbers and by inspecting the definition of the big-step operational
        semantics (Figure~\ref{fig:bop_sem}) that the equivalence,
        \[
                        \pv{P}{s} \nu_k 
                        \pv{P_{k,i}}{s_{k,i}} 
                        \Downarrow^{\mathcal{S},m}
                        \mu
                        \text{ iff }
                        \pv{ P_{k,i} }{ s_{k,i} } \Downarrow^{\mathcal{S}_{k,i}, m}
                        \mu
        \]
        holds for all $k,i$ and natural numbers $m \in \Nats$. 

        Finally one just needs to apply the definition of the big-step
        operational semantics (Figure~\ref{fig:bop_sem}) to obtain $\pv{P}{s}
        \Downarrow^{\mathcal{S},n+1} \sum_k p_k \cdot \mu_k$ as previously
        claimed.
\end{proof}
To conclude this section, we remark that the denotational semantics just
presented is quite different from the well-known \emph{probabilistic testing
semantics} in process algebra (see a brief survey of such semantics for example
in~\cite[Chapters 4 and 5]{deng15}). For instance while testing semantics are
usually based on an operational semantics, denotational ones, as presented
here, are supposed to be independent. Moreover not only they are different in
nature and formulated in quite orthogonal contexts, they sometimes
disagree on which programs/processes should be equated. For example while in
reference~\cite[Chapter 5]{deng15} the terms $P +_\prog{p} (Q + R)$ and $(P
+_\prog{p} Q) + (P +_\prog{p} Q)$ are deemed non-equivalent~\cite[Example
5.7]{deng15}, in our case we have,
\begin{align*}
        & \asem{P +_\prog{p} (Q + R)} 
        \\
        &
        = \text{\big \{ Semantics definition  \big \}}
        \\
        & \prog{p} \cdot \asem{P} + (1 - \prog{p}) \cdot (\asem{Q + R})
        \\
        &
        = \text{\big \{ Semantics definition  \big \}}
        \\
        &
        \prog{p} \cdot \asem{P} + (1 - \prog{p}) \cdot (  \asem{Q} \uplus
        \asem{R})
        \\
        &
        = \text{\big \{ Equational theory of the mixed powerdomains~\cite{tix09}  \big \}}
        \\
        &
        \prog{p} \cdot \asem{P} + ( (1 - \prog{p}) \cdot  \asem{Q} \uplus
        (1 - \prog{p}) \cdot  \asem{R})
        \\
        &
        = \text{\big \{ Equational theory of the mixed powerdomains~\cite{tix09}  \big \}}
        \\
        &
        ( \prog{p} \cdot \asem{P} +  (1 - \prog{p}) \cdot  \asem{Q} ) \uplus
        ( \prog{p} \cdot \asem{P} +  (1 - \prog{p}) \cdot  \asem{R})
        \\
        &
        = \text{\big \{ Semantics definition  \big \}}
        \\
        &
        ( \asem{P +_\prog{p} Q} \uplus
        ( \asem{P +_\prog{p} R})
        \\
        &
        = \text{\big \{ Semantics definition  \big \}}
        \\
        &
        \asem{(P +_\prog{p} Q) + (P +_\prog{p} R)}
\end{align*}
Of course the equation just established will be operationally justified by our
computational adequacy theorem, which is proved in the following section. 

\section{Computational adequacy}
\label{sec:adeq}

In this section we prove the paper's main result: the semantics $\sem{-}$ in
Section~\ref{sec:den} is adequate w.r.t.  the operational semantics in
Section~\ref{sec:lang}. As mentioned in Section~\ref{sec:intro} the formulation
of adequacy relies on the logic that was presented in Section~\ref{sec:lang}.
Actually it relies on different fragments of it depending on which mixed
powerdomain one adopts: in the lower case (\ie\ angelic or may non-determinism)
formulae containing $\demon \varphi$ are forbidden whilst in the upper case
(\ie\ demonic or must non-determinism) formulae containing $\angel  \varphi$
are forbidden.  The biconvex case does not impose any restriction, \ie\ we have
the logic \emph{verbatim}. For simplicity, we will use $\mathcal{L}_l$,
$\mathcal{L}_u$, and $\mathcal{L}_b$ to denote respectively the lower, upper,
and biconvex fragments of the logic.

Recall from Section~\ref{sec:lang} that we used pairs $\pv{P}{s}$ and the
operational semantics of concurrent pGCL to interpret the logic's formulae.
Recall as well that we treat a formula $\varphi$ as an open subset of the space
$\PP(S)$ with $S$ discrete. As the next step towards adequacy, note that the
elements of $\Pow \PP (S)$ also form an interpretation structure for the logic:
specifically we define a satisfaction relation $\models$ between the elements
$A \in \Pow \PP(S)$ and formulae $\phi$ by,
\begin{align*}
        A  \models \angel \varphi
        \text{ iff }
        \textit{for some } \mu \in A
        \text{ we have } \mu \in \varphi
        \hspace{0.8cm} 
        A  \models \demon \varphi
        \text{ iff }
        \textit{for all } \mu \in A
        \text{ we have } \mu \in \varphi
\end{align*}
with the remaining cases defined standardly. The formulation of
computational adequacy then naturally arises: for every program $P$ and state
$s$, the equivalence below holds for all formulae $\phi$.
\begin{align}
        \pv{P}{s} \models \phi \text{ iff }
        \sem{P}(s) \models \phi
        \label{eq:equ}
\end{align}
Of course depending on which mixed powerdomain one adopts $\phi$'s universe of
quantification will vary according to the corresponding fragment
$\mathcal{L}_x$ ($x \in \{l,u,b\}$).  The remainder of the current section is
devoted to proving Equivalence~\eqref{eq:equ}.  Actually the focus is only on
formulae of the type $\angel \varphi$ and $\demon \varphi$, for one can
subsequently apply straightforward induction (over the formulae's syntactic
structure) to obtain the claimed equivalence for all $\phi$ of the
corresponding fragment. We start with the angelic case.

\begin{theorem}
       \label{prop:angel}
       Let $\Pow_x \PP$ be either the lower convex powerdomain or the
       biconvex variant (\ie\ $x \in \{l,b\}$). Then for every program $P$,
       state $s$, and formula $\angel \varphi$ the equivalence below holds.
       \[
                \pv{P}{s} \models \angel \varphi
                \text{ iff }
                \sem{P}(s) \models \angel \varphi
       \]
\end{theorem}
\begin{proof}
        The left-to-right direction follows from Proposition~\ref{main:theo}
        and the upper-closedness of the open $\angel \varphi$. The
        right-to-left direction uses Proposition~\ref{main:theo}, the
        inaccessibility of the open $\angel \varphi$, the characterisation of
        Scott-closure in domains~\cite[Exercise 5.1.14]{larrecq13}, and the
        inaccessibility of the open $\varphi$.
\end{proof}
As already mentioned in the introduction, Equivalence~\eqref{eq:equ} w.r.t.
formulae of the type $\demon \varphi$ is much thornier to prove. In order to
achieve it we will need the following somewhat surprising result.
\begin{theorem}
        \label{prop:alg}
        Consider a program $P$, a state $s$, and a formula $\demon \varphi$.
        If $\pv{P}{s} \models \demon \varphi$ then there exists a positive
        natural number $\mathbf{z} \in \Nats_+$ such that \emph{for all}
        non-blocking schedulers $\sch$ we have,
        \[
                \pv{P}{s} \Downarrow^{\sch, \mathbf{z}}
                \mu \text{ for some valuation $\mu$ and } \mu \in \varphi
        \]
\end{theorem}
In other words whenever $\pv{P}{s} \models \demon \varphi$ there exists an
upper bound -- \ie\ a natural number $\mathbf{z} \in \Nats_+$ such that
\emph{all} non-blocking schedulers (which are \emph{uncountably} many) reach
condition $\varphi$ in \emph{at most} $\mathbf{z}$-steps. Such a property
becomes perhaps less surprising when one recalls K\"onig's
lemma~\cite{franchella97}. The latter in a particular form states that every
finitely-branching tree with each path of finite length must have \emph{finite}
depth (which means that there is an upper-bound on the length of all paths).
From this perspective each non-blocking scheduler intuitively corresponds to a
path and the fact that each path has finite length corresponds to the
assumption that each non-blocking scheduler reaches condition $\varphi$ in a
finite number of steps (\ie\ $\pv{P}{s} \models \demon \varphi$). 

Our proof of Theorem~\ref{prop:alg} is based on a topological generalisation of
the previous analogy to K\"onig's lemma, in which among other things the
condition `finitely-branching' is generalised to `compactly-branching' and the
notion of a path is converted to that of a `probabilistic trace' generated by a
scheduler.  The technical details of this proof and a series of auxiliary
results can be consulted in the appendix (Section~\ref{sec:konig}). Finally by
appealing to Theorem~\ref{prop:alg} we obtain the desired equivalence and
establish computational adequacy.
\begin{theorem}
       \label{prop:demon}
       Let $\Pow_x \PP$ be the upper convex powerdomain or the biconvex
       variant (\ie\ $x \in \{u,b\}$). Then for every program $P$, state $s$,
       and formula $\demon \varphi$ the equivalence below holds.
       \[
                \pv{P}{s} \models \demon \varphi
                \text{ iff }
                \sem{P}(s) \models \demon \varphi
       \]
\end{theorem}
\begin{proof}
        The left-to-right direction follows from Theorem~\ref{prop:alg},
        Proposition~\ref{main:theo} and the upper-closedness of the opens $\varphi$
        and $\demon \varphi$. The right-to-left direction follows from the
        inaccessibility of the open $\demon \varphi$ and
        Proposition~\ref{main:theo}.
\end{proof}

We can now also derive the full abstraction result mentioned in
Section~\ref{sec:intro}. Specifically given programs $P$, $Q$, and state $s$
the observational preorder $\lesssim_x$ is defined by,
\[
        \pv{P}{s} \lesssim_x \pv{Q}{s} \text{ iff }
        \Big ( \forall \phi \in \mathcal{L}_x. \,
                \pv{P}{s} \models \phi \text{ implies }
                \pv{Q}{s} \models \phi \Big )
\]
Then by taking advantage of computational adequacy we achieve full
abstraction.
\begin{corollary}
        \label{theo:adeq}
        Choose one of the three mixed powerdomains $\Pow_x \PP$ $(x \in
        \{l,u,b\})$, a program $P$, a program $Q$, and a state $s$. The
        following equivalence holds.
        \[
                \pv{P}{s} \lesssim_x \pv{Q}{s}  \text{ iff }
                \sem{P}(s) \leq_x \sem{Q}(s)
        \]
\end{corollary}
\begin{proof}
        Follows directly from Theorem~\ref{prop:angel},
        Theorem~\ref{prop:demon}, \cite[Proposition 4.2.4
        and Proposition 4.2.18]{larrecq13}.
\end{proof}

\section{Concluding notes: semi-decidability, quantum, and future work}
\label{sec:conc}

\noindent
\textbf{Semi-decidability.}
The tradition of finding denotational counterparts to operational aspects of
programming languages is well-known and typically very fruitful (see for
example~\cite{winskel93,reynolds98}). Our work does not deviate from this
tradition and introduces the two following families of equivalences,
\[
        \pv{P}{s} \models \angel \varphi
        \text{ iff }
        \sem{P}(s) \models \angel \varphi
        \hspace{2.5cm}
        \pv{P}{s} \models \demon \varphi
        \text{ iff }
        \sem{P}(s) \models \demon \varphi
\]
for appropriate choices of mixed powerdomains $\Pow_x \PP$ ($x \in \{l,u,b\}$).
We also saw that from these it follows a full abstraction theorem w.r.t. the
observational preorder $\lesssim$ described in Section~\ref{sec:intro}. Such
results thus equip our concurrent language with a rich collection of
domain-theoretic tools that one can appeal to in the analysis of several of its
aspects. We briefly illustrate this point next with one example.

Statistical termination is a topic that has been extensively studied over the
years in probability theory~\cite{hart83,lengal17,majumdar24}.  It started
gaining traction recently in the quantum setting as well~\cite{fu24}.  Here we
prove semi-decidability w.r.t. a main class of may and must statistical
termination, an apparently surprising result for it involves quantifications
over the uncountable set of probabilistic schedulers.

We first need to introduce some basic assumptions due to concurrent pGCL being
parametrised.  Take a program $P$ and a state $s$.  We assume that for every
natural number $n \in \Nats$ one can always compute via
Equations~\eqref{eq:fin} the finite set $F_n$ that generates
$\mathrm{ext}_n(\asem{P}_n)(s)$ (recall the end of Section~\ref{sec:den}).
Specifically we assume that every interpretation $\sem{\prog{a}}$ of an atomic
program is computable and only returns linear combinations whose scalars are
rational numbers. Similarly all interpretations $\sem{\prog{b}}$ of conditions
must be decidable. Finally given a subset $U \subseteq S$ we assume that the
membership function $\in_U \,: S \to \{\mathtt{tt},\mathtt{ff}\}$ w.r.t. $U$ is
decidable.

We start with may statistical termination as given by the
formula $\angel \statt U$ with $p$  any number in the set $[0,1) \cap
\Rats$. For this case we fix the mixed powerdomain in our denotational
semantics to be the lower convex one. Next, recall that the statement $\pv{P}{s}
\models \angel \statt U$ refers to the existence of a scheduler
under which $\pv{P}{s}$ terminates in $U$ with probability strictly greater
than $p$. Our algorithmic procedure for checking whether such a
statement holds is to exhaustively search in the finite sets $F_1,F_2,\dots$
for a valuation $\mu$ that satisfies $\statt U$.  Such a procedure is
computable due to the assumptions above, and in order to prove 
semi-decidability we show next that it eventually
terminates whenever $\pv{P}{s} \models \angel \statt U$ holds.

Let us thus assume that $\pv{P}{s} \models \angel \statt U$ holds.  It follows
from Theorem~\ref{prop:angel} and the fact that $\angel \statt U$ corresponds
to an open set of the lower convex powerdomain that there exists a natural
number $n \in \Nats$ such that $\mathrm{ext}_n(\asem{P}_n)(s) \models \angel
\statt U$. In other words the set $\overline{\conv{F_n}}$ contains a valuation
that satisfies $\statt U$ and by the characterisation of Scott-closure in
domains~\cite[Exercise 5.1.14]{larrecq13} there also exists a valuation $\mu
\in \conv{F_n}$ that satisfies $\statt U$.  We will show by contradiction that
such a valuation exists in $F_n$ as well which proves our claim. By the
definition of convex closure we know that $\mu = \sum_{i \in I} p_i \cdot
\mu_i$ for some finite convex combination of valuations in $F_n$. Suppose that
none of the valuations $\mu_i$ $(i \in I)$ satisfy $\mu_i(U) > p$. From this
supposition we take the valuation $\mu_i$ ($i \in I$) with the \emph{largest}
value $\mu_i(U)$, denote it by $\rho$, and clearly $\rho(U) \leq p$.  This
entails $\sum_i p_i \cdot \rho(U) \leq p$ but $\sum_i p_i \cdot \rho(U) \geq
\sum_i p_i \cdot \mu_i(U) > p$, a contradiction.

We now focus on must statistical termination as given by $\demon \statt U$ with
$p \in [0,1) \cap \Rats$. For this case we fix the mixed powerdomain in
our denotational semantics to be the upper convex one. Recall that the
statement $\pv{P}{s} \models \demon \statt U$ asserts that $\pv{P}{s}$
terminates in $U$ with probability strictly greater than $p$ under all
non-blocking schedulers. Our algorithmic procedure for checking whether such a
statement holds is in some sense dual to the previous one: we exhaustively
search for a finite set $F_1,F_2,\dots$ in which all valuations satisfy $\statt
U$. As before such a procedure is computable, and to achieve semi-decidability
we will prove that it eventually terminates whenever $\pv{P}{s} \models \demon
\statt U$ holds.  Thus assume that the latter holds.  It follows from
Theorem~\ref{prop:demon} and the fact that $\demon \statt U$ corresponds to
an open set of the upper convex powerdomain that there exists a natural number
$n \in \Nats$ such that $\mathrm{ext}_n(\asem{P}_n)(s) \models \demon \statt
U$. All valuations in $\mathord{\uparrow}\conv{F_n}$ thus satisfy $\statt U$
and therefore all valuations in $F_n$ satisfy $\statt U$ as well.  

Our reasoning for semi-decidability w.r.t. may statistical termination scales
up to formulae of the type $\angel (\statto{1} U_1 \vee \cdots \vee \statto{n}
U_n)$ with all numbers $p_i \in [0,1) \cap \Rats$.  Dually our reasoning
w.r.t the must variant extends to formulae of the type $\demon (\statto{1} U_1
\wedge \cdots \wedge \statto{n} U_n)$ with all numbers $p_i \in [0,1)
\cap \Rats$.

\newcommand{\bra}[1]{\langle#1|} 
\newcommand{\ket}[1]{|#1\rangle} 
\newcommand{\braket}[2]{ \langle #1 | #2 \rangle} 
\newcommand{\ketbra}[2]{ | #1 \rangle \! \langle #2 |} 

\smallskip
\noindent
\textbf{An application to concurrent quantum computation}.
%
We now apply our results to quantum concurrency, specifically we instantiate
the language in Section~\ref{sec:lang} to the quantum
setting~\cite{nielsen02,watrous18} and obtain computational adequacy for free.
Previous works have already introduced denotational semantics to sequential,
non-deterministic quantum programs~\cite{feng23,feng23b}.  Unlike us however
they do not involve powerdomain structures, although they do mention such would
be more elegant.  Also as far as we aware they do not establish any connection
to an operational semantics, and thus our results are novel already for the
sequential fragment.

We first present our concurrent quantum language, and subsequently the
corresponding state space and interpretation of atomic programs.  We assume
that we have at our disposal $n$ bits and $m$ qubits. We use
$\prog{x_1,x_2,\dots,x_n}$ to identify each bit available and analogously for
$\prog{q_1,q_2,\dots,q_m}$ and qubits. As for the atomic programs we postulate
a collection of gate operations $\prog{U(\vec{q})}$ where $\prog{\vec{q}}$ is a
list of qubits; we also have qubit resets $\prog{q_i \leftarrow} \ket{0}$ ($1
\leq \prog{i} \leq m$) and measurements $\prog{M[x_i \leftarrow q_j]}$ of the
$\prog{j}$-th qubit ($1 \leq \prog{j} \leq m$) with the outcome stored in the
$\prog{i}$-th bit ($1 \leq \prog{i} \leq n$).  The conditions $\prog{b}$ are
set as the elements of the free Boolean algebra generated by the equations
$\prog{x_i = 0}$ and $\prog{x_i = 1}$ for $0 \leq \prog{i} \leq n$.

Recall that a (pure) 2-dimensional quantum state is a unit vector $\ket{\psi}$
in $\Complex^2$, usually represented as a density operator $\ketbra{\psi}{\psi}
\in \Complex^{2 \times 2}$. Here we are particularly interested on the
so-called classical-quantum states~\cite{watrous18}.  They take the
form of a convex combination,
\begin{align}
        \label{eq:cqs}
        \textstyle{\sum_i p_i} \cdot \ketbra{x_i}{x_i} \otimes \ketbra{\psi_i}{\psi_i}
\end{align}
where each $x_i$ is an element of $2^n$ (\ie\ a classical state) and each
$\ket{\psi_i}$ is a pure quantum state in $\Complex^{2^{\otimes m}}$. Such
elements are thus distributions of $n$-bit states $\ketbra{x_i}{x_i}$ paired
with $m$-qubit states $\ketbra{\psi_i}{\psi_i}$. The state space $S$ that
we adopt is the subset of $\Complex^{{2 \times 2}^{\otimes n}} \otimes
\Complex^{{2 \times 2}^{\otimes m}}$ that contains precisely the elements of
the form $1 \cdot \ketbra{x}{x} \otimes \ketbra{\psi}{\psi}$ as previously
described -- \ie\ we know with certainty that $\ketbra{x}{x} \otimes
\ketbra{\psi}{\psi}$ is the current state of our classical-quantum system. We
call such states \emph{pure} classical-quantum states.

We then interpret atomic programs as maps $S \to \Dist(S)$ which as indicated
by their signature send pure classical-quantum states into arbitrary
ones~\eqref{eq:cqs}.  More technically we interpret such programs as
restrictions of completely positive trace-preserving operators, a standard
approach in the field of quantum information~\cite{watrous18}. Completely
positive trace-preserving operators are often called quantum channels and we
will use this terminology throughout the section. 

We will need a few preliminaries for interpreting the atomic programs. First a
very useful quantum channel is the trace operation $\mathrm{Tr} : \Complex^{{2
\times 2}^{\otimes n}} \to \Complex^{{2 \times 2}^{\otimes 0}} = \Complex$
which returns the trace of a given matrix~\cite[Corollary 2.19]{watrous18}.
Among other things it induces the partial trace on $m$-qubits,
\[
        \mathrm{Tr}_i := (\otimes^{i -1}_{j = 1} \id) \otimes \mathrm{Tr} \otimes 
        (\otimes^{m}_{j = i + 1}\, \id)
\]
which operationally speaking discards the $i$-th qubit. It is also easy to see
that for every density operator $\rho \in \Complex^{{2\times 2}^{\otimes n}}$
the map $\rho : \Complex^{{2\times 2}^{\otimes 0}} \to \Complex^{{2\times
2}^{\otimes n}}$ defined by $1 \mapsto \rho$ is a quantum
channel~\cite[Proposition 2.17]{watrous18}. Finally note that one can always
switch the positions $i$ and $j$ of two qubits~\cite[Corollary 2.27]{watrous18}
in a pure quantum state.

We now present the interpretation of atomic programs. First for each gate
operation $\prog{U(\vec{q})}$ we postulate the existence of a unitary operator
$U : \Complex^{2^{\otimes m}} \to \Complex^{2^{\otimes m}}$. Then we interpret
gate operations and resets by,
\begin{align*}
        \sem{\prog{U(\vec{q})}}\big (\ketbra{x}{x} \otimes \ketbra{\psi}{\psi} \big)
        & = \ketbra{x}{x} \otimes U\ketbra{\psi}{\psi}U^\dagger
        \\
        \sem{\prog{q_i} \leftarrow \ket{0}} 
        \big (\ketbra{x}{x} \otimes \ketbra{\psi}{\psi} \big)
        & = \ketbra{x}{x} \otimes \mathrm{mov_i} \big (\ketbra{0}{0} \otimes 
        \mathrm{Tr_i}\ketbra{\psi}{\psi} \big )
\end{align*}
where $U^\dagger$ is the adjoint of $U$ and $\mathrm{mov_i}$ is the operator
that moves the leftmost state (in this case $\ketbra{0}{0}$) to the
$\prog{i}$-th position in the list of qubits. In order to interpret
measurements we will need a few extra auxiliary operators. First we take the
`quantum-to-classical' channel $\Phi : \Complex^{2 \times 2} \to \Complex^{{2
\times 2}^{\otimes 2}}$ defined by,
\[
        \begin{pmatrix}
                a && b 
                \\
                c && d
        \end{pmatrix}
        \mapsto
        a \cdot \ketbra{0}{0}^{\otimes 2}
        + 
        d \cdot \ketbra{1}{1}^{\otimes 2}
\]
The fact that it is indeed a channel follows from~\cite[Theorem
2.37]{watrous18}. In words $\Phi(\rho)$ is a measurement of $\rho$ w.r.t.  the
computational basis: the outcome $\ket{0}$ is obtained with probability $a$ and
analogously for $\ket{1}$. In case we measure $\ket{0}$ we return
$\ketbra{0}{0}^{\otimes 2}$ and analogously for $\ket{1}$. Note the use of
$\ketbra{0}{0}^{\otimes 2}$ (and not $\ketbra{0}{0}$) so that we can later
store a copy of $\ketbra{0}{0}$ in the classical register. To keep the notation
easy to read we use $\Phi_i$ to denote the operator that applies $\Phi$ in the
$i$-th position and the identity everywhere else. Finally we have,
\[
        \sem{\prog{M[x_i \leftarrow q_j}]} ( \ketbra{x}{x} \otimes \ketbra{\psi}{\psi} )
        = 
        \mathrm{mov}^{\mathrm{cq}}_\prog{j,i} \left (\mathrm{Tr_\prog{i}}\ketbra{x}{x} 
                \otimes \Phi_\prog{j}\ketbra{\psi}{\psi}
        \right )
\]
where $\mathrm{mov}^\mathrm{cq}_\prog{j,i}$ sends the $\prog{j}$-th qubit to
the $\prog{i}$-th position in the list of bits. 


\bigskip
\noindent
\textbf{Future work}.
We plan to explore a number of research lines that stem directly from our work.
For example we would like to expand our (brief) study about semi-decidability
w.r.t. $\pv{P}{s} \models \phi$. More specifically we would like to determine
the largest class of formulae $\phi$ in our logic (recall
Section~\ref{sec:lang}) under which one can prove semi-decidability.  Second it
is well-known that under mild conditions the three mixed powerdomains are
isomorphic to the so-called prevision models~\cite{gobault15}. In other words
via Section~\ref{sec:den} and Section~\ref{sec:adeq} we obtain for free a
connection between our (concurrent) language and yet another set of
domain-theoretic tools.  Will this unexplored connection provide new insights
about the language? It would also be interesting to investigate in what ways a
semantics based on event structures~\cite{varacca06b} complements the one
presented here.

We also plan to extend concurrent pGCL in different directions.  The notion of
fair scheduling for example could be taken into account which leads to
countable non-determinism. We conjecture that the notion of a mixed powerdomain
will need to be adjusted to this new setting, similarly to what already happens
when probabilities are not involved~\cite{apt86}.  Another appealing case is
the extension of our concurrent language and associated results to the
higher-order setting, obtaining a language similar in spirit to concurrent
idealised Algol~\cite{brookes96}. A promising basis for this is the general
theory of PCF combined with algebraic
theories~\cite{plotkin01,plotkin03,plotkin08}. In our case the algebraic theory
adopted would need to be a combination of that of states and that of mixed
non-determinism. 

\bigskip
\noindent
\textbf{Acknowledgements.} This work is financed by National Funds through FCT
- Fundação para a Ciência e a Tecnologia, I.P. (Portuguese Foundation for
Science and Technology) within project IBEX, reference
10.54499/PTDC/CCI-COM/4280/2021
(https://doi.org/10.54499/PTDC/CCI-COM/4280/2021).

\bibliographystyle{eptcs}
\bibliography{biblio}

\begin{thebibliography}{10}
\providecommand{\bibitemdeclare}[2]{}
\providecommand{\surnamestart}{}
\providecommand{\surnameend}{}
\providecommand{\urlprefix}{Available at }
\providecommand{\url}[1]{\texttt{#1}}
\providecommand{\href}[2]{\texttt{#2}}
\providecommand{\urlalt}[2]{\href{#1}{#2}}
\providecommand{\doi}[1]{doi:\urlalt{https://doi.org/#1}{#1}}
\providecommand{\eprint}[1]{arXiv:\urlalt{https://arxiv.org/abs/#1}{#1}}
\providecommand{\bibinfo}[2]{#2}

\bibitemdeclare{book}{adamek09}
\bibitem{adamek09}
\bibinfo{author}{Jir{\'{\i}} \surnamestart Ad{\'{a}}mek\surnameend}, \bibinfo{author}{Horst \surnamestart Herrlich\surnameend} \& \bibinfo{author}{George~E. \surnamestart Strecker\surnameend} (\bibinfo{year}{2009}): \emph{\bibinfo{title}{Abstract and Concrete Categories - The Joy of Cats}}.
\newblock \bibinfo{publisher}{Dover Publications}.

\bibitemdeclare{article}{apt86}
\bibitem{apt86}
\bibinfo{author}{Krzysztof~R. \surnamestart Apt\surnameend} \& \bibinfo{author}{Gordon~D. \surnamestart Plotkin\surnameend} (\bibinfo{year}{1986}): \emph{\bibinfo{title}{Countable nondeterminism and random assignment}}.
\newblock {\slshape \bibinfo{journal}{J. {ACM}}} \bibinfo{volume}{33}(\bibinfo{number}{4}), pp. \bibinfo{pages}{724--767}, \doi{10.1145/6490.6494}.

\bibitemdeclare{article}{baier00}
\bibitem{baier00}
\bibinfo{author}{Christel \surnamestart Baier\surnameend} \& \bibinfo{author}{Marta~Z. \surnamestart Kwiatkowska\surnameend} (\bibinfo{year}{2000}): \emph{\bibinfo{title}{Domain equations for probabilistic processes}}.
\newblock {\slshape \bibinfo{journal}{Math. Struct. Comput. Sci.}} \bibinfo{volume}{10}(\bibinfo{number}{6}), pp. \bibinfo{pages}{665--717}.

\bibitemdeclare{inproceedings}{brookes96}
\bibitem{brookes96}
\bibinfo{author}{Stephen~D. \surnamestart Brookes\surnameend} (\bibinfo{year}{1996}): \emph{\bibinfo{title}{The Essence of Parallel Algol}}.
\newblock In: {\slshape \bibinfo{booktitle}{Proceedings, 11th Annual {IEEE} Symposium on Logic in Computer Science, New Brunswick, New Jersey, USA, July 27-30, 1996}}, \bibinfo{publisher}{{IEEE} Computer Society}, pp. \bibinfo{pages}{164--173}, \doi{10.1109/LICS.1996.561315}.

\bibitemdeclare{book}{deng15}
\bibitem{deng15}
\bibinfo{author}{Yuxin \surnamestart Deng\surnameend} (\bibinfo{year}{2015}): \emph{\bibinfo{title}{Semantics of Probabilistic Processes: An Operational Approach}}.
\newblock \bibinfo{publisher}{Springer}, \doi{10.1007/978-3-662-45198-4}.

\bibitemdeclare{inproceedings}{escardo04}
\bibitem{escardo04}
\bibinfo{author}{Mart{\'{\i}}n~H{\"{o}}tzel \surnamestart Escard{\'{o}}\surnameend} (\bibinfo{year}{2003}): \emph{\bibinfo{title}{Synthetic Topology: of Data Types and Classical Spaces}}.
\newblock In \bibinfo{editor}{Jos{\'{e}}e \surnamestart Desharnais\surnameend} \& \bibinfo{editor}{Prakash \surnamestart Panangaden\surnameend}, editors: {\slshape \bibinfo{booktitle}{Proceedings of the Workshop on Domain Theoretic Methods for Probabilistic Processes, {DTMPP} 2003, Barbados, April 21-25, 2003}}, {\slshape \bibinfo{series}{Electronic Notes in Theoretical Computer Science}}~\bibinfo{volume}{87}, \bibinfo{publisher}{Elsevier}, pp. \bibinfo{pages}{21--156}, \doi{10.1016/J.ENTCS.2004.09.017}.

\bibitemdeclare{inproceedings}{escardo09}
\bibitem{escardo09}
\bibinfo{author}{Mart{\'{\i}}n~H{\"{o}}tzel \surnamestart Escard{\'{o}}\surnameend} (\bibinfo{year}{2009}): \emph{\bibinfo{title}{Semi-decidability of May, Must and Probabilistic Testing in a Higher-type Setting}}.
\newblock In \bibinfo{editor}{Samson \surnamestart Abramsky\surnameend}, \bibinfo{editor}{Michael~W. \surnamestart Mislove\surnameend} \& \bibinfo{editor}{Catuscia \surnamestart Palamidessi\surnameend}, editors: {\slshape \bibinfo{booktitle}{Proceedings of the 25th Conference on Mathematical Foundations of Programming Semantics, {MFPS} 2009, Oxford, UK, April 3-7, 2009}}, {\slshape \bibinfo{series}{Electronic Notes in Theoretical Computer Science}} \bibinfo{volume}{249}, \bibinfo{publisher}{Elsevier}, pp. \bibinfo{pages}{219--242}, \doi{10.1016/J.ENTCS.2009.07.092}.

\bibitemdeclare{inproceedings}{feng11}
\bibitem{feng11}
\bibinfo{author}{Yuan \surnamestart Feng\surnameend}, \bibinfo{author}{Runyao \surnamestart Duan\surnameend} \& \bibinfo{author}{Mingsheng \surnamestart Ying\surnameend} (\bibinfo{year}{2011}): \emph{\bibinfo{title}{Bisimulation for quantum processes}}.
\newblock In \bibinfo{editor}{Thomas \surnamestart Ball\surnameend} \& \bibinfo{editor}{Mooly \surnamestart Sagiv\surnameend}, editors: {\slshape \bibinfo{booktitle}{Proceedings of the 38th {ACM} {SIGPLAN-SIGACT} Symposium on Principles of Programming Languages, {POPL} 2011, Austin, TX, USA, January 26-28, 2011}}, \bibinfo{publisher}{{ACM}}, pp. \bibinfo{pages}{523--534}, \doi{10.1145/1926385.1926446}.

\bibitemdeclare{inproceedings}{feng23}
\bibitem{feng23}
\bibinfo{author}{Yuan \surnamestart Feng\surnameend} \& \bibinfo{author}{Yingte \surnamestart Xu\surnameend} (\bibinfo{year}{2023}): \emph{\bibinfo{title}{Verification of Nondeterministic Quantum Programs}}.
\newblock In \bibinfo{editor}{Tor~M. \surnamestart Aamodt\surnameend}, \bibinfo{editor}{Natalie D.~Enright \surnamestart Jerger\surnameend} \& \bibinfo{editor}{Michael~M. \surnamestart Swift\surnameend}, editors: {\slshape \bibinfo{booktitle}{Proceedings of the 28th {ACM} International Conference on Architectural Support for Programming Languages and Operating Systems, Volume 3, {ASPLOS} 2023, Vancouver, BC, Canada, March 25-29, 2023}}, \bibinfo{publisher}{{ACM}}, pp. \bibinfo{pages}{789--805}, \doi{10.1145/3582016.3582039}.

\bibitemdeclare{article}{feng23b}
\bibitem{feng23b}
\bibinfo{author}{Yuan \surnamestart Feng\surnameend}, \bibinfo{author}{Li~\surnamestart Zhou\surnameend} \& \bibinfo{author}{Yingte \surnamestart Xu\surnameend} (\bibinfo{year}{2023}): \emph{\bibinfo{title}{Refinement calculus of quantum programs with projective assertions}}.
\newblock {\slshape \bibinfo{journal}{CoRR}} \bibinfo{volume}{abs/2311.14215}, \doi{10.48550/ARXIV.2311.14215}.
\newblock \eprint{2311.14215}.

\bibitemdeclare{article}{franchella97}
\bibitem{franchella97}
\bibinfo{author}{Miriam \surnamestart Franchella\surnameend} (\bibinfo{year}{1997}): \emph{\bibinfo{title}{On the origins of D{\'e}nes K{\"o}nig's infinity lemma}}.
\newblock {\slshape \bibinfo{journal}{Archive for history of exact sciences}} \bibinfo{volume}{51}, pp. \bibinfo{pages}{3--27}, \doi{10.1007/BF00376449}.

\bibitemdeclare{article}{fu24}
\bibitem{fu24}
\bibinfo{author}{Jianling \surnamestart Fu\surnameend}, \bibinfo{author}{Hui \surnamestart Jiang\surnameend}, \bibinfo{author}{Ming \surnamestart Xu\surnameend}, \bibinfo{author}{Yuxin \surnamestart Deng\surnameend} \& \bibinfo{author}{Zhi{-}Bin \surnamestart Li\surnameend} (\bibinfo{year}{2024}): \emph{\bibinfo{title}{Algorithmic Analysis of Termination Problems for Nondeterministic Quantum Programs}}.
\newblock {\slshape \bibinfo{journal}{CoRR}} \bibinfo{volume}{abs/2402.15827}, \doi{10.48550/ARXIV.2402.15827}.
\newblock \eprint{2402.15827}.

\bibitemdeclare{book}{gierz03}
\bibitem{gierz03}
\bibinfo{author}{Gerhard \surnamestart Gierz\surnameend}, \bibinfo{author}{Karl~Heinrich \surnamestart Hofmann\surnameend}, \bibinfo{author}{Klaus \surnamestart Keimel\surnameend}, \bibinfo{author}{Jimmie~D \surnamestart Lawson\surnameend}, \bibinfo{author}{Michael \surnamestart Mislove\surnameend} \& \bibinfo{author}{Dana~S \surnamestart Scott\surnameend} (\bibinfo{year}{2003}): \emph{\bibinfo{title}{Continuous lattices and domains}}.
\newblock \bibinfo{series}{Encyclopedia of Mathematics and its Applications}, \bibinfo{publisher}{Cambridge University Press}, \doi{10.1017/CBO9780511542725}.

\bibitemdeclare{inproceedings}{goncharov16}
\bibitem{goncharov16}
\bibinfo{author}{Sergey \surnamestart Goncharov\surnameend}, \bibinfo{author}{Stefan \surnamestart Milius\surnameend} \& \bibinfo{author}{Christoph \surnamestart Rauch\surnameend} (\bibinfo{year}{2016}): \emph{\bibinfo{title}{Complete Elgot Monads and Coalgebraic Resumptions}}.
\newblock In \bibinfo{editor}{Lars \surnamestart Birkedal\surnameend}, editor: {\slshape \bibinfo{booktitle}{The Thirty-second Conference on the Mathematical Foundations of Programming Semantics, {MFPS} 2016, Carnegie Mellon University, Pittsburgh, PA, USA, May 23-26, 2016}}, {\slshape \bibinfo{series}{Electronic Notes in Theoretical Computer Science}} \bibinfo{volume}{325}, \bibinfo{publisher}{Elsevier}, pp. \bibinfo{pages}{147--168}, \doi{10.1016/J.ENTCS.2016.09.036}.

\bibitemdeclare{book}{larrecq13}
\bibitem{larrecq13}
\bibinfo{author}{Jean \surnamestart Goubault{-}Larrecq\surnameend} (\bibinfo{year}{2013}): \emph{\bibinfo{title}{Non-Hausdorff Topology and Domain Theory - Selected Topics in Point-Set Topology}}.
\newblock {\slshape \bibinfo{series}{New Mathematical Monographs}}~\bibinfo{volume}{22}, \bibinfo{publisher}{Cambridge University Press}.

\bibitemdeclare{article}{gobault15}
\bibitem{gobault15}
\bibinfo{author}{Jean \surnamestart Goubault{-}Larrecq\surnameend} (\bibinfo{year}{2015}): \emph{\bibinfo{title}{Full abstraction for non-deterministic and probabilistic extensions of {PCF} {I:} The angelic cases}}.
\newblock {\slshape \bibinfo{journal}{J. Log. Algebraic Methods Program.}} \bibinfo{volume}{84}(\bibinfo{number}{1}), pp. \bibinfo{pages}{155--184}, \doi{10.1016/J.JLAMP.2014.09.003}.

\bibitemdeclare{inproceedings}{larrecq19}
\bibitem{larrecq19}
\bibinfo{author}{Jean \surnamestart Goubault{-}Larrecq\surnameend} (\bibinfo{year}{2019}): \emph{\bibinfo{title}{A Probabilistic and Non-Deterministic Call-by-Push-Value Language}}.
\newblock In: {\slshape \bibinfo{booktitle}{34th Annual {ACM/IEEE} Symposium on Logic in Computer Science, {LICS} 2019, Vancouver, BC, Canada, June 24-27, 2019}}, \bibinfo{publisher}{{IEEE}}, pp. \bibinfo{pages}{1--13}, \doi{10.1109/LICS.2019.8785809}.

\bibitemdeclare{article}{goubault20}
\bibitem{goubault20}
\bibinfo{author}{Jean \surnamestart Goubault-Larrecq\surnameend} (\bibinfo{year}{2020}): \emph{\bibinfo{title}{Probabilistic Powerdomains and Quasi-Continuous Domains}}.
\newblock {\slshape \bibinfo{journal}{CoRR}} \bibinfo{volume}{abs/2007.04189}, \doi{10.48550/ARXIV.2007.04189}.

\bibitemdeclare{article}{hart83}
\bibitem{hart83}
\bibinfo{author}{Sergiu \surnamestart Hart\surnameend}, \bibinfo{author}{Micha \surnamestart Sharir\surnameend} \& \bibinfo{author}{Amir \surnamestart Pnueli\surnameend} (\bibinfo{year}{1983}): \emph{\bibinfo{title}{Termination of Probabilistic Concurrent Program}}.
\newblock {\slshape \bibinfo{journal}{{ACM} Trans. Program. Lang. Syst.}} \bibinfo{volume}{5}(\bibinfo{number}{3}), pp. \bibinfo{pages}{356--380}, \doi{10.1145/2166.357214}.

\bibitemdeclare{inproceedings}{hennessy79}
\bibitem{hennessy79}
\bibinfo{author}{Matthew \surnamestart Hennessy\surnameend} \& \bibinfo{author}{Gordon~D. \surnamestart Plotkin\surnameend} (\bibinfo{year}{1979}): \emph{\bibinfo{title}{Full Abstraction for a Simple Parallel Programming Language}}.
\newblock In \bibinfo{editor}{Jir{\'{\i}} \surnamestart Becv{\'{a}}r\surnameend}, editor: {\slshape \bibinfo{booktitle}{Mathematical Foundations of Computer Science 1979, Proceedings, 8th Symposium, Olomouc, Czechoslovakia, September 3-7, 1979}}, {\slshape \bibinfo{series}{Lecture Notes in Computer Science}}~\bibinfo{volume}{74}, \bibinfo{publisher}{Springer}, pp. \bibinfo{pages}{108--120}, \doi{10.1007/3-540-09526-8\_8}.

\bibitemdeclare{inproceedings}{jones89}
\bibitem{jones89}
\bibinfo{author}{C.~\surnamestart Jones\surnameend} \& \bibinfo{author}{Gordon~D. \surnamestart Plotkin\surnameend} (\bibinfo{year}{1989}): \emph{\bibinfo{title}{A Probabilistic Powerdomain of Evaluations}}.
\newblock In: {\slshape \bibinfo{booktitle}{Proceedings of the Fourth Annual Symposium on Logic in Computer Science {(LICS} '89), Pacific Grove, California, USA, June 5-8, 1989}}, \bibinfo{publisher}{{IEEE} Computer Society}, pp. \bibinfo{pages}{186--195}, \doi{10.1109/LICS.1989.39173}.

\bibitemdeclare{article}{keimel09}
\bibitem{keimel09}
\bibinfo{author}{Klaus \surnamestart Keimel\surnameend} \& \bibinfo{author}{Gordon~D. \surnamestart Plotkin\surnameend} (\bibinfo{year}{2009}): \emph{\bibinfo{title}{Predicate transformers for extended probability and non-determinism}}.
\newblock {\slshape \bibinfo{journal}{Math. Struct. Comput. Sci.}} \bibinfo{volume}{19}(\bibinfo{number}{3}), pp. \bibinfo{pages}{501--539}, \doi{10.1017/S0960129509007555}.

\bibitemdeclare{article}{keimel11}
\bibitem{keimel11}
\bibinfo{author}{Klaus \surnamestart Keimel\surnameend}, \bibinfo{author}{Artus~Ph. \surnamestart Rosenbusch\surnameend} \& \bibinfo{author}{Thomas \surnamestart Streicher\surnameend} (\bibinfo{year}{2011}): \emph{\bibinfo{title}{Relating direct and predicate transformer partial correctness semantics for an imperative probabilistic-nondeterministic language}}.
\newblock {\slshape \bibinfo{journal}{Theor. Comput. Sci.}} \bibinfo{volume}{412}(\bibinfo{number}{25}), pp. \bibinfo{pages}{2701--2713}, \doi{10.1016/J.TCS.2010.12.029}.

\bibitemdeclare{inproceedings}{lengal17}
\bibitem{lengal17}
\bibinfo{author}{Ondrej \surnamestart Leng{\'{a}}l\surnameend}, \bibinfo{author}{Anthony~Widjaja \surnamestart Lin\surnameend}, \bibinfo{author}{Rupak \surnamestart Majumdar\surnameend} \& \bibinfo{author}{Philipp \surnamestart R{\"{u}}mmer\surnameend} (\bibinfo{year}{2017}): \emph{\bibinfo{title}{Fair Termination for Parameterized Probabilistic Concurrent Systems}}.
\newblock In \bibinfo{editor}{Axel \surnamestart Legay\surnameend} \& \bibinfo{editor}{Tiziana \surnamestart Margaria\surnameend}, editors: {\slshape \bibinfo{booktitle}{Tools and Algorithms for the Construction and Analysis of Systems - 23rd International Conference, {TACAS} 2017, Held as Part of the European Joint Conferences on Theory and Practice of Software, {ETAPS} 2017, Uppsala, Sweden, April 22-29, 2017, Proceedings, Part {I}}}, {\slshape \bibinfo{series}{Lecture Notes in Computer Science}} \bibinfo{volume}{10205}, pp. \bibinfo{pages}{499--517}, \doi{10.1007/978-3-662-54577-5\_29}.

\bibitemdeclare{article}{majumdar24}
\bibitem{majumdar24}
\bibinfo{author}{Rupak \surnamestart Majumdar\surnameend} \& \bibinfo{author}{V.~R. \surnamestart Sathiyanarayana\surnameend} (\bibinfo{year}{2024}): \emph{\bibinfo{title}{Sound and Complete Proof Rules for Probabilistic Termination}}.
\newblock {\slshape \bibinfo{journal}{CoRR}} \bibinfo{volume}{abs/2404.19724}, \doi{10.48550/ARXIV.2404.19724}.
\newblock \eprint{2404.19724}.

\bibitemdeclare{article}{mciver01}
\bibitem{mciver01}
\bibinfo{author}{Annabelle \surnamestart McIver\surnameend} \& \bibinfo{author}{Carroll \surnamestart Morgan\surnameend} (\bibinfo{year}{2001}): \emph{\bibinfo{title}{Partial correctness for probabilistic demonic programs}}.
\newblock {\slshape \bibinfo{journal}{Theor. Comput. Sci.}} \bibinfo{volume}{266}(\bibinfo{number}{1-2}), pp. \bibinfo{pages}{513--541}, \doi{10.1016/S0304-3975(00)00208-5}.

\bibitemdeclare{book}{mciver05}
\bibitem{mciver05}
\bibinfo{author}{Annabelle \surnamestart McIver\surnameend} \& \bibinfo{author}{Carroll \surnamestart Morgan\surnameend} (\bibinfo{year}{2005}): \emph{\bibinfo{title}{Abstraction, Refinement and Proof for Probabilistic Systems}}.
\newblock \bibinfo{series}{Monographs in Computer Science}, \bibinfo{publisher}{Springer}, \doi{10.1007/B138392}.

\bibitemdeclare{inproceedings}{mciver13}
\bibitem{mciver13}
\bibinfo{author}{Annabelle \surnamestart McIver\surnameend}, \bibinfo{author}{Tahiry~M. \surnamestart Rabehaja\surnameend} \& \bibinfo{author}{Georg \surnamestart Struth\surnameend} (\bibinfo{year}{2013}): \emph{\bibinfo{title}{Probabilistic Concurrent Kleene Algebra}}.
\newblock In \bibinfo{editor}{Luca \surnamestart Bortolussi\surnameend} \& \bibinfo{editor}{Herbert \surnamestart Wiklicky\surnameend}, editors: {\slshape \bibinfo{booktitle}{Proceedings 11th International Workshop on Quantitative Aspects of Programming Languages and Systems, {QAPL} 2013, Rome, Italy, March 23-24, 2013}}, {\slshape \bibinfo{series}{{EPTCS}}} \bibinfo{volume}{117}, pp. \bibinfo{pages}{97--115}, \doi{10.4204/EPTCS.117.7}.

\bibitemdeclare{incollection}{milner75}
\bibitem{milner75}
\bibinfo{author}{Robin \surnamestart Milner\surnameend} (\bibinfo{year}{1975}): \emph{\bibinfo{title}{Processes: a mathematical model of computing agents}}.
\newblock In \bibinfo{editor}{H.E. \surnamestart Rose\surnameend} \& \bibinfo{editor}{J.C. \surnamestart Shepherdson\surnameend}, editors: {\slshape \bibinfo{booktitle}{Studies in Logic and the Foundations of Mathematics}}, \bibinfo{volume}{80}, \bibinfo{publisher}{Elsevier}, pp. \bibinfo{pages}{157--173}, \doi{10.1016/S0049-237X(08)71948-7}.

\bibitemdeclare{inproceedings}{mislove00}
\bibitem{mislove00}
\bibinfo{author}{Michael~W. \surnamestart Mislove\surnameend} (\bibinfo{year}{2000}): \emph{\bibinfo{title}{Nondeterminism and Probabilistic Choice: Obeying the Laws}}.
\newblock In \bibinfo{editor}{Catuscia \surnamestart Palamidessi\surnameend}, editor: {\slshape \bibinfo{booktitle}{{CONCUR} 2000 - Concurrency Theory, 11th International Conference, University Park, PA, USA, August 22-25, 2000, Proceedings}}, {\slshape \bibinfo{series}{Lecture Notes in Computer Science}} \bibinfo{volume}{1877}, \bibinfo{publisher}{Springer}, pp. \bibinfo{pages}{350--364}, \doi{10.1007/3-540-44618-4\_26}.

\bibitemdeclare{inproceedings}{mislove04}
\bibitem{mislove04}
\bibinfo{author}{Michael~W. \surnamestart Mislove\surnameend}, \bibinfo{author}{Jo{\"{e}}l \surnamestart Ouaknine\surnameend} \& \bibinfo{author}{James \surnamestart Worrell\surnameend} (\bibinfo{year}{2003}): \emph{\bibinfo{title}{Axioms for Probability and Nondeterminism}}.
\newblock In \bibinfo{editor}{Flavio \surnamestart Corradini\surnameend} \& \bibinfo{editor}{Uwe \surnamestart Nestmann\surnameend}, editors: {\slshape \bibinfo{booktitle}{Proceedings of the 10th International Workshop on Expressiveness in Concurrency, {EXPRESS} 2003, Marseille, France, September 2, 2003}}, {\slshape \bibinfo{series}{Electronic Notes in Theoretical Computer Science}}~\bibinfo{volume}{96}, \bibinfo{publisher}{Elsevier}, pp. \bibinfo{pages}{7--28}, \doi{10.1016/j.entcs.2004.04.019}.

\bibitemdeclare{book}{nielsen02}
\bibitem{nielsen02}
\bibinfo{author}{Michael~A. \surnamestart Nielsen\surnameend} \& \bibinfo{author}{Isaac~L. \surnamestart Chuang\surnameend} (\bibinfo{year}{2016}): \emph{\bibinfo{title}{Quantum Computation and Quantum Information (10th Anniversary edition)}}.
\newblock \bibinfo{publisher}{Cambridge University Press}, \doi{10.1017/CBO9780511976667}.

\bibitemdeclare{inproceedings}{pirog14}
\bibitem{pirog14}
\bibinfo{author}{Maciej \surnamestart Pir{\'{o}}g\surnameend} \& \bibinfo{author}{Jeremy \surnamestart Gibbons\surnameend} (\bibinfo{year}{2014}): \emph{\bibinfo{title}{The Coinductive Resumption Monad}}.
\newblock In \bibinfo{editor}{Bart \surnamestart Jacobs\surnameend}, \bibinfo{editor}{Alexandra \surnamestart Silva\surnameend} \& \bibinfo{editor}{Sam \surnamestart Staton\surnameend}, editors: {\slshape \bibinfo{booktitle}{Proceedings of the 30th Conference on the Mathematical Foundations of Programming Semantics, {MFPS} 2014, Ithaca, NY, USA, June 12-15, 2014}}, {\slshape \bibinfo{series}{Electronic Notes in Theoretical Computer Science}} \bibinfo{volume}{308}, \bibinfo{publisher}{Elsevier}, pp. \bibinfo{pages}{273--288}, \doi{10.1016/J.ENTCS.2014.10.015}.

\bibitemdeclare{inproceedings}{plotkin01}
\bibitem{plotkin01}
\bibinfo{author}{Gordon~D. \surnamestart Plotkin\surnameend} \& \bibinfo{author}{John \surnamestart Power\surnameend} (\bibinfo{year}{2001}): \emph{\bibinfo{title}{Adequacy for Algebraic Effects}}.
\newblock In \bibinfo{editor}{Furio \surnamestart Honsell\surnameend} \& \bibinfo{editor}{Marino \surnamestart Miculan\surnameend}, editors: {\slshape \bibinfo{booktitle}{Foundations of Software Science and Computation Structures, 4th International Conference, {FOSSACS} 2001 Held as Part of the Joint European Conferences on Theory and Practice of Software, {ETAPS} 2001 Genova, Italy, April 2-6, 2001, Proceedings}}, {\slshape \bibinfo{series}{Lecture Notes in Computer Science}} \bibinfo{volume}{2030}, \bibinfo{publisher}{Springer}, pp. \bibinfo{pages}{1--24}, \doi{10.1007/3-540-45315-6\_1}.

\bibitemdeclare{article}{plotkin03}
\bibitem{plotkin03}
\bibinfo{author}{Gordon~D. \surnamestart Plotkin\surnameend} \& \bibinfo{author}{John \surnamestart Power\surnameend} (\bibinfo{year}{2003}): \emph{\bibinfo{title}{Algebraic Operations and Generic Effects}}.
\newblock {\slshape \bibinfo{journal}{Appl. Categorical Struct.}} \bibinfo{volume}{11}(\bibinfo{number}{1}), pp. \bibinfo{pages}{69--94}, \doi{10.1023/A:1023064908962}.

\bibitemdeclare{inproceedings}{plotkin08}
\bibitem{plotkin08}
\bibinfo{author}{Gordon~D. \surnamestart Plotkin\surnameend} \& \bibinfo{author}{Matija \surnamestart Pretnar\surnameend} (\bibinfo{year}{2008}): \emph{\bibinfo{title}{A Logic for Algebraic Effects}}.
\newblock In: {\slshape \bibinfo{booktitle}{Proceedings of the Twenty-Third Annual {IEEE} Symposium on Logic in Computer Science, {LICS} 2008, 24-27 June 2008, Pittsburgh, PA, {USA}}}, \bibinfo{publisher}{{IEEE} Computer Society}, pp. \bibinfo{pages}{118--129}, \doi{10.1109/LICS.2008.45}.

\bibitemdeclare{book}{reynolds98}
\bibitem{reynolds98}
\bibinfo{author}{John~C \surnamestart Reynolds\surnameend} (\bibinfo{year}{1998}): \emph{\bibinfo{title}{Theories of programming languages}}.
\newblock \bibinfo{publisher}{Cambridge University Press}, \doi{10.1017/CBO9780511626364}.

\bibitemdeclare{inproceedings}{smyth83}
\bibitem{smyth83}
\bibinfo{author}{Michael~B. \surnamestart Smyth\surnameend} (\bibinfo{year}{1983}): \emph{\bibinfo{title}{Power Domains and Predicate Transformers: {A} Topological View}}.
\newblock In \bibinfo{editor}{Josep \surnamestart D{\'{\i}}az\surnameend}, editor: {\slshape \bibinfo{booktitle}{Automata, Languages and Programming, 10th Colloquium, Barcelona, Spain, July 18-22, 1983, Proceedings}}, {\slshape \bibinfo{series}{Lecture Notes in Computer Science}} \bibinfo{volume}{154}, \bibinfo{publisher}{Springer}, pp. \bibinfo{pages}{662--675}, \doi{10.1007/BFB0036946}.

\bibitemdeclare{inproceedings}{stark96}
\bibitem{stark96}
\bibinfo{author}{Ian \surnamestart Stark\surnameend} (\bibinfo{year}{1996}): \emph{\bibinfo{title}{A Fully Abstract Domain Model for the pi-Calculus}}.
\newblock In: {\slshape \bibinfo{booktitle}{Proceedings, 11th Annual {IEEE} Symposium on Logic in Computer Science, New Brunswick, New Jersey, USA, July 27-30, 1996}}, \bibinfo{publisher}{{IEEE} Computer Society}, pp. \bibinfo{pages}{36--42}, \doi{10.1109/LICS.1996.561301}.

\bibitemdeclare{article}{tix09}
\bibitem{tix09}
\bibinfo{author}{Regina \surnamestart Tix\surnameend}, \bibinfo{author}{Klaus \surnamestart Keimel\surnameend} \& \bibinfo{author}{Gordon \surnamestart Plotkin\surnameend} (\bibinfo{year}{2009}): \emph{\bibinfo{title}{Semantic domains for combining probability and non-determinism}}.
\newblock {\slshape \bibinfo{journal}{Electronic Notes in Theoretical Computer Science}} \bibinfo{volume}{222}, pp. \bibinfo{pages}{3--99}, \doi{10.1016/j.entcs.2009.01.002}.

\bibitemdeclare{article}{uustalu99}
\bibitem{uustalu99}
\bibinfo{author}{Tarmo \surnamestart Uustalu\surnameend} \& \bibinfo{author}{Varmo \surnamestart Vene\surnameend} (\bibinfo{year}{1999}): \emph{\bibinfo{title}{Primitive (Co)Recursion and Course-of-Value (Co)Iteration, Categorically}}.
\newblock {\slshape \bibinfo{journal}{Informatica}} \bibinfo{volume}{10}(\bibinfo{number}{1}), pp. \bibinfo{pages}{5--26}, \doi{10.3233/INF-1999-10102}.

\bibitemdeclare{article}{varacca06b}
\bibitem{varacca06b}
\bibinfo{author}{Daniele \surnamestart Varacca\surnameend}, \bibinfo{author}{Hagen \surnamestart V{\"{o}}lzer\surnameend} \& \bibinfo{author}{Glynn \surnamestart Winskel\surnameend} (\bibinfo{year}{2006}): \emph{\bibinfo{title}{Probabilistic event structures and domains}}.
\newblock {\slshape \bibinfo{journal}{Theor. Comput. Sci.}} \bibinfo{volume}{358}(\bibinfo{number}{2-3}), pp. \bibinfo{pages}{173--199}, \doi{10.1016/J.TCS.2006.01.015}.

\bibitemdeclare{article}{varacca07}
\bibitem{varacca07}
\bibinfo{author}{Daniele \surnamestart Varacca\surnameend} \& \bibinfo{author}{Glynn \surnamestart Winskel\surnameend} (\bibinfo{year}{2006}): \emph{\bibinfo{title}{Distributing probability over non-determinism}}.
\newblock {\slshape \bibinfo{journal}{Math. Struct. Comput. Sci.}} \bibinfo{volume}{16}(\bibinfo{number}{1}), pp. \bibinfo{pages}{87--113}, \doi{10.1017/S0960129505005074}.

\bibitemdeclare{book}{vickers89}
\bibitem{vickers89}
\bibinfo{author}{Steven \surnamestart Vickers\surnameend} (\bibinfo{year}{1989}): \emph{\bibinfo{title}{Topology via logic}}.
\newblock \bibinfo{series}{Cambridge Tracts in Theoretical Computer Science}, \bibinfo{publisher}{Cambridge University Press}.

\bibitemdeclare{inproceedings}{visme19}
\bibitem{visme19}
\bibinfo{author}{Marc \surnamestart de~Visme\surnameend} (\bibinfo{year}{2019}): \emph{\bibinfo{title}{Event Structures for Mixed Choice}}.
\newblock In \bibinfo{editor}{Wan~J. \surnamestart Fokkink\surnameend} \& \bibinfo{editor}{Rob \surnamestart van Glabbeek\surnameend}, editors: {\slshape \bibinfo{booktitle}{30th International Conference on Concurrency Theory, {CONCUR} 2019, August 27-30, 2019, Amsterdam, the Netherlands}}, {\slshape \bibinfo{series}{LIPIcs}} \bibinfo{volume}{140}, \bibinfo{publisher}{Schloss Dagstuhl - Leibniz-Zentrum f{\"{u}}r Informatik}, pp. \bibinfo{pages}{11:1--11:16}, \doi{10.4230/LIPICS.CONCUR.2019.11}.

\bibitemdeclare{book}{watrous18}
\bibitem{watrous18}
\bibinfo{author}{John \surnamestart Watrous\surnameend} (\bibinfo{year}{2018}): \emph{\bibinfo{title}{The theory of quantum information}}.
\newblock \bibinfo{publisher}{Cambridge University Press}, \doi{10.1017/9781316848142}.

\bibitemdeclare{article}{winskel85}
\bibitem{winskel85}
\bibinfo{author}{Glynn \surnamestart Winskel\surnameend} (\bibinfo{year}{1985}): \emph{\bibinfo{title}{On Powerdomains and Modality}}.
\newblock {\slshape \bibinfo{journal}{Theor. Comput. Sci.}} \bibinfo{volume}{36}, pp. \bibinfo{pages}{127--137}, \doi{10.1016/0304-3975(85)90037-4}.

\bibitemdeclare{book}{winskel93}
\bibitem{winskel93}
\bibinfo{author}{Glynn \surnamestart Winskel\surnameend} (\bibinfo{year}{1993}): \emph{\bibinfo{title}{The formal semantics of programming languages: an introduction}}.
\newblock \bibinfo{series}{Foundations of Computing}, \bibinfo{publisher}{MIT press}, \doi{10.7551/mitpress/3054.001.0001}.

\bibitemdeclare{article}{ying18}
\bibitem{ying18}
\bibinfo{author}{Mingsheng \surnamestart Ying\surnameend} \& \bibinfo{author}{Yangjia \surnamestart Li\surnameend} (\bibinfo{year}{2018}): \emph{\bibinfo{title}{Reasoning about Parallel Quantum Programs}}.
\newblock {\slshape \bibinfo{journal}{CoRR}} \bibinfo{volume}{abs/1810.11334}.
\newblock \eprint{1810.11334}.

\bibitemdeclare{article}{ying22}
\bibitem{ying22}
\bibinfo{author}{Mingsheng \surnamestart Ying\surnameend}, \bibinfo{author}{Li~\surnamestart Zhou\surnameend}, \bibinfo{author}{Yangjia \surnamestart Li\surnameend} \& \bibinfo{author}{Yuan \surnamestart Feng\surnameend} (\bibinfo{year}{2022}): \emph{\bibinfo{title}{A proof system for disjoint parallel quantum programs}}.
\newblock {\slshape \bibinfo{journal}{Theor. Comput. Sci.}} \bibinfo{volume}{897}, pp. \bibinfo{pages}{164--184}, \doi{10.1016/J.TCS.2021.10.025}.

\end{thebibliography}

\pagebreak
\appendix
\section{The sequential fragment}
\label{sec:seq}
Recall that we denote the extensional semantics $\mathrm{ext} \comp \asem{-}$
introduced in the main text by $\sem{-}$. In this section we prove that
$\sem{-}$ satisfies the equations listed in Figure~\ref{fig:comp} whenever the
parallel composition operator is not involved. We start with the following
lemma.
  \begin{figure}
        \begin{flalign*}
                \sem{\prog{skip}} 
                & = s \mapsto x(\{ 1 \cdot s \})
                \\
                \sem{\prog{a}} 
                & = s \mapsto x( \{ \sem{\prog{a}}(s) \} )
                \\
                \sem{P ; Q}
                & = \sem{Q}^\star
                \comp \sem{P}
                \\
                \sem{P +_\prog{p} Q}
                & = \prog{p} \cdot 
                \sem{P}
                + (1-\prog{p}) \cdot \sem{Q}
                \\
                \sem{P + Q}
                & =   
                        \sem{P} \uplus
                        \sem{Q}
                \\
                \sem{\prog{if} \, \prog{b} \, 
                        \prog{then} \, P \, \prog{else} \, Q}
                                          & = 
                \big [\sem{Q}, 
                \sem{P} \big ] \, \comp \cong  
                \comp \, \pv{\id}{\sem{\prog{b}}}
                \\
                \sem{\prog{while} \> \prog{b} \> P} 
                                          & =
                \mathop{\mathrm{lfp}} \Big (f \mapsto \big [ \eta,
                        f^\star \comp \sem{P} \big ] 
                \, \comp \cong \comp \, \pv{\id}{\sem{\prog{b}}} \Big )
        \end{flalign*}
        \caption{Sequential fragment of the extensional semantics}
        \label{fig:comp}
        \end{figure}

\begin{lemma}
        \label{mon:comp}
        The equation below holds for all programs $P \in \Pr$ 
        and elements $r$ of $\nu R$.
        \[      
                \textstyle{\bigvee_{n \in \Nats}}\, (\mathrm{ext}(r))^\star
                \comp \mathrm{ext}_n(\asem{P}_n)
                =
                \mathrm{ext}\,(\asem{P} \seqI r)
        \]
\end{lemma}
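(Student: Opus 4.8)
The plan is to first recognise that the displayed join is really a Kleisli composite at the level of the powerdomain, and then to pin that composite down coinductively. Since $(\mathrm{ext}_n(\asem{P}_n))_{n\in\Nats}$ is precisely the increasing chain whose join defines $\mathrm{ext}(\asem{P})$, and since the Kleisli lifting $(\mathrm{ext}(r))^\star$ is Scott-continuous, I would begin by pulling the join outside:
\[
        \textstyle\bigvee_{n\in\Nats} (\mathrm{ext}(r))^\star \comp \mathrm{ext}_n(\asem{P}_n)
        = (\mathrm{ext}(r))^\star \comp \textstyle\bigvee_{n\in\Nats}\mathrm{ext}_n(\asem{P}_n)
        = (\mathrm{ext}(r))^\star \comp \mathrm{ext}(\asem{P}).
\]
This reduces the lemma to the homomorphism identity $\mathrm{ext}(\asem{P}\seqI r) = (\mathrm{ext}(r))^\star \comp \mathrm{ext}(\asem{P})$, which I would prove in the stronger form where $\asem{P}$ is replaced by an arbitrary $q \in \nu R$ (the case $q = \asem{P}$ being all that is needed).

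Keeping $r$ fixed, I would next introduce a single \emph{guarded} operator $\Theta$ on the continuous function space $[\nu R,\ \Pow\PP(S)^S]$ and show that both sides of the identity are fixed points of it. Concretely, set
\[
        \Theta(G)(q) = {[\eta,\app]^\star}^S \comp \Pow\PP(\id + k_G \times \id)^S\,(\mathop{\mathrm{seql}}(q,r)),
        \qquad k_G = [\,\mathrm{ext},\ G \comp \pi_1\,],
\]
where $\pi_1 : \nu R \times \nu R \to \nu R$ is the first projection and $k_G$ applies $\mathrm{ext}$ on the ``hand over to $r$'' summand and $G$ on the first component of the ``resume'' summand. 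Intuitively $\Theta$ unfolds $q$ once, letting a halting state $s'$ contribute $\mathrm{ext}(r)(s')$ and a resumption $(q',s')$ contribute $G(q')(s')$. This $\Theta$ is continuous because $\mathop{\mathrm{seql}}(-,r)$, the strength, the coproduct action of $\Pow\PP$, and ${[\eta,\app]^\star}^S$ all are.

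Writing $\Phi(q) = \mathrm{ext}(q\seqI r)$ and $\Psi(q) = (\mathrm{ext}(r))^\star \comp \mathrm{ext}(q)$, I would verify $\Phi = \Theta(\Phi)$ by combining the corecursion equation $\unfold(q\seqI r) = R[\id,\seqI]\,(\mathop{\mathrm{seql}}(q,r))$ with the defining algebra equation $\mathrm{ext} = {[\eta,\app]^\star}^S \comp \Pow\PP(\id + \mathrm{ext}\times\id)^S \comp \unfold$; functoriality of $\Pow\PP(\id + (-)\times\id)$ collapses $\mathrm{ext}\comp[\id,\seqI]$ to exactly $k_\Phi$. For $\Psi = \Theta(\Psi)$ I would instead feed the algebra equation for $\mathrm{ext}(q)$ into $(\mathrm{ext}(r))^\star$ and push the lifting inwards using the monad laws $g^\star\comp\eta = g$ and $g^\star\comp f^\star = (g^\star\comp f)^\star$; this converts the halting summand into $\mathrm{ext}(r)$ and the resumption summand into $(\mathrm{ext}(r))^\star\comp\mathrm{ext}(q') = \Psi(q')$, matching $k_\Psi$. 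Both verifications are monad-law bookkeeping.

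Finally I would prove that $\Theta$ has a \emph{unique} fixed point, whence $\Phi = \Psi$. The key is that $\Theta$ is guarded with respect to the bilimit presentation $\nu R \cong \lim_n R^n(1)$ with embedding--projection pairs $(\iota_n,\pi_n)$: because $\Theta(G)(q)$ inspects only one unfolding of $q$, the value of $\Theta(G)$ on the image of the $(n{+}1)$-st approximant depends on $G$ solely through its values on the image of the $n$-th approximant. Two fixed points therefore agree on $\iota_0\pi_0(\nu R)$ (both send the bottom layer to $s\mapsto x(\{\bot\})$ by strictness of $\unfold$), hence on every $\iota_n\pi_n(\nu R)$ by induction; since $\id_{\nu R} = \bigvee_n \iota_n\pi_n$ and all maps are continuous, they coincide on all of $\nu R$. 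Specialising to $q = \asem{P}$ and recombining with the reduction of the first paragraph yields the lemma. I expect this uniqueness step to be the main obstacle: making precise that the resumptions produced by one unfold of an $(n{+}1)$-approximant are themselves $n$-approximants requires a careful analysis of how $\pi_n$ factors through $\unfold$ and $\mathop{\mathrm{seql}}$, and it is here, rather than in the fixed-point verifications, that the real work lies.
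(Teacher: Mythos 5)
Your proof is correct in outline, but it takes a genuinely different route from the paper's. The paper proves the displayed equation as two inequalities, each by induction on $n$: it rewrites $\mathrm{ext}_{n+1}(\asem{P}_{n+1})(s)$ via the operational characterisation of $\asem{P}$ (Theorem~\ref{theo:sem_eq}), pushes $(\mathrm{ext}(r))^\star$ inside using its linearity and $\uplus$-preservation, applies the induction hypothesis to the continuations $\asem{P_i}$, and reassembles the result as $\mathrm{ext}(\asem{P}\seqI r)$ (resp.\ as the approximant $\mathrm{ext}_n((\asem{P}\seqI r)_n)$ for the other inequality); the continuity step you perform first is deferred by the paper to the point where the lemma is applied. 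You instead reduce to the stronger, purely denotational identity $\mathrm{ext}(q\seqI r)=(\mathrm{ext}(r))^\star\comp\mathrm{ext}(q)$ for arbitrary $q\in\nu R$ and obtain it by exhibiting both sides as fixed points of a guarded operator and proving uniqueness over the bilimit $\nu R\cong\lim_n R^n(1)$. What this buys is generality and complete independence from the operational semantics and from Theorem~\ref{theo:sem_eq}; what it costs is exactly the machinery you flag. Two points deserve care beyond what you wrote: first, the fixed-point verifications are not bare functoriality --- they need the factorisation $\mathop{\mathrm{seql}}(q,r)=\Pow\PP(g_r)^S(\unfold(q))$, where $g_r$ sends $\inl(s')$ to $\inr(\inl(r),s')$ and $\inr(q',s')$ to $\inr(\inr(q',r),s')$, which follows from the explicit description of the strength on basic elements together with continuity, and only then does functoriality identify $\mathrm{ext}\comp[\id,\seqI]$ with $k_\Phi$ on the relevant image; second, the uniqueness step needs $\unfold\comp\iota_{n+1}\pi_{n+1}=R(\iota_n\pi_n)\comp\unfold$ and $\id_{\nu R}=\bigvee_n\iota_n\pi_n$, both standard for the locally continuous strict functor $R$ but worth recording, and the base case should be checked for all three powercones (for $\Pow_u$ the bottom of $\Pow_u\PP(X)$ is the whole space $\mathord{\uparrow}0$, yet strictness/linearity of every map involved still forces $\Theta(G)(\bot)=s\mapsto x(\{\bot\})$ independently of $G$). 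With these details supplied your argument goes through and in fact yields a slightly stronger statement than the lemma.
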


\begin{proof}
        We will show that the two respective inequations (\ie\ $\leq_x$ and
        $\geq_x$) hold.  We start with the case $\leq_x$ which we will
        prove by showing that for every natural number $n \in \Nats$ the
        inequation below holds.
        \[
                (\mathrm{ext}(r))^\star
                \comp \mathrm{ext}_n(\asem{P}_n)
                \leq_x
                \mathrm{ext}(\asem{P} \seqI r)
        \]
        We proceed by induction over the natural numbers. The base case is
        direct and for the inductive step we reason as follows,
        \begin{align*}
                        & \> (\mathrm{ext}(r))^\klcomp
                        (\mathrm{ext}_{n+1}(\asem{P}_{n+1})(s))
                        \\[4pt]
                        &
                        \text{\big \{ Defn. of $\mathrm{ext}_{n+1}$
                                        and Theorem~\ref{theo:sem_eq} \big \} }
                           & 
                           \\[4pt]
                        & =
                        (\mathrm{ext}(r))^\klcomp
                        \left (\mathlarger{\uplus} \left 
                        \{ \textstyle{\sum_i p_i} \cdot 
                                \mathrm{ext}_n(\asem{P_i}_n)
                        (s_i) + \textstyle{\sum_j p_j} \cdot \eta(s_j) \mid
                        \dots 
                        \right \}
                        \right )
                        &
                \\[4pt] &
                        \text{\big \{ $(\mathrm{ext}(r))^\klcomp$ is linear and 
                        $\uplus$-preserving \big \}}
                           & 
                           \\[4pt]
                        &
                        =
                        \mathlarger{\uplus} \left 
                        \{ \textstyle{\sum_i p_i} \cdot 
                              (\mathrm{ext}(r))^\klcomp
                              (\mathrm{ext}_n(\asem{P_i}_n)
                        (s_i)) + \textstyle{\sum_j p_j} \cdot 
                        \mathrm{ext}(r)(s_j) 
                        \mid  \dots     \right \}
                        &
                \\[4pt] &
                        \text{\big \{ Induction hypothesis and monotonicity \big \} }
                           &
                           \\[4pt]
                        &
                        \leq_x
                        \mathlarger{\uplus} \left 
                        \{ \textstyle{\sum_i p_i} \cdot 
                                \mathrm{ext}(\asem{P_i} \seqI r)
                        (s_i) + \textstyle{\sum_j p_j} \cdot 
                        \mathrm{ext}(r)(s_j) 
                        \mid 
                        \dots 
                        \right \}
                        &
                        \\ &
                        \text{\big \{ Defn. of $(-)^\klcomp$, $\seqI$, and
                        Theorem~\ref{theo:sem_eq} \big \}}
                           &
                           \\[4pt]
                        &
                        = [\eta,\mathrm{app} \comp 
                        (\mathrm{ext} \times \id)]^\klcomp
                        \left  ( \asem{P} \seqI r (s)
                                \right )
                        &
                \\[4pt] &
                        \text{\big \{ Defn. of $\mathrm{ext}$ \big \}}
                           &
                           \\[4pt]
                        &
                        = \mathrm{ext} (\asem{P} \seqI r)(s)
        \end{align*} 
        where the ellipsis ($\dots$) hides the expression $\pv{P}{s}
        \longrightarrow \textstyle{\sum_i} p_i \cdot \pv{P_i}{s_i} +
        \textstyle{\sum_j} p_j \cdot s_j$. 
        The case $\geq_x$ follows from showing that for every natural number
        $n \in \Nats$ the inequation,
        \[
                (\mathrm{ext}(r))^\star
                \comp \mathrm{ext}_n(\asem{P}_n)
                \geq_x
                \mathrm{ext}_n((\asem{P} \seqI r)_n)
        \]
        holds. This is proved by induction over the natural numbers, and since
        it is completely analogous to the previous case we omit the details.
\end{proof}

\begin{theorem}
        The equations in Figure~\ref{fig:comp} are sound.
\end{theorem}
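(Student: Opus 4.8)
The plan is to read the statement as a list of equational laws, one per sequential constructor, relating $\sem{C(\vec P)} = \mathrm{ext} \comp \asem{C(\vec P)}$ to the extensional semantics of the immediate subprograms, and to verify each law separately by a case analysis over the constructors (no induction on program structure is required beyond this). For each constructor I would unfold $\asem{C(\vec P)}$ according to Figure~\ref{fig:sem}, apply $\mathrm{ext}$, and rewrite using the algebraic properties of $\mathrm{ext}$ established in the main text (strict, continuous, linear, $\uplus$-preserving) together with Theorem~\ref{theo:sem_eq} and the just-proved Lemma~\ref{mon:comp}.

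The routine cases come first. For $\prog{skip}$ and an atomic $\prog{a}$ the intensional denotations $\asem{\prog{skip}} = s \mapsto x(\{1\cdot s\})$ and $\asem{\prog{a}} = s \mapsto x(\{\sem{\prog{a}}(s)\})$ are immediately-halting resumptions, so a one-step computation of $\mathrm{ext}_1$ via $[\eta,\app\comp(\mathrm{ext}_0 \times \id)]^\star$ and the monad unit law collapses each to itself, yielding the first two equations (and in particular $\sem{\prog{skip}} = \eta$). The probabilistic-choice equation is immediate from linearity of $\mathrm{ext}$ and the nondeterministic-choice equation from its $\uplus$-preservation. For sequential composition, $\asem{P;Q} = \asem{P} \seqI \asem{Q}$, and Lemma~\ref{mon:comp} at $r = \asem{Q}$ gives $\mathrm{ext}(\asem{P}\seqI\asem{Q}) = \bigvee_n (\mathrm{ext}(\asem{Q}))^\star \comp \mathrm{ext}_n(\asem{P}_n)$; pulling the directed join through the continuous operation of post-composition with $(\mathrm{ext}(\asem{Q}))^\star$ and using $\mathrm{ext}(\asem{P}) = \bigvee_n \mathrm{ext}_n(\asem{P}_n)$ lands exactly on $\sem{Q}^\star \comp \sem{P}$.

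The conditional and the while-loop are the structural cases, and both rest on the observation that $\mathrm{ext}(g)(s)$ depends only on the value $g(s)$ together with its reachable continuations, which is visible from $\mathrm{ext}_{n+1}(g)(s) = [\eta,\app\comp(\mathrm{ext}_n \times \id)]^\star(g(s))$ and the fact that each $\pi_n$ acts pointwise over $S$. Consequently a guard-based split on $\sem{\prog{b}}(s)$ performed before $\mathrm{ext}$ commutes with $\mathrm{ext}$. For the conditional I would combine this with the cancellation $\mathrm{ext}(\asem{\prog{skip}} \seqI w) = \mathrm{ext}(w)^\star \comp \eta = \mathrm{ext}(w)$ (a consequence of the sequential case and $\sem{\prog{skip}} = \eta$) to remove the prefixed $\prog{skip}$, arriving at $[\sem{Q},\sem{P}] \comp \cong \comp \pv{\id}{\sem{\prog{b}}}$. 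For the while-loop, writing $\Phi$ for the intensional loop functional of Figure~\ref{fig:sem} and $\Psi$ for the extensional one of Figure~\ref{fig:comp}, the plan is to prove the homomorphism law $\mathrm{ext} \comp \Phi = \Psi \comp \mathrm{ext}$: unfolding $\Phi(r)$, applying the same pointwise routing and skip-cancellation, and then the sequential identity $\mathrm{ext}(\asem{P} \seqI r) = \mathrm{ext}(r)^\star \comp \sem{P}$ reduces $\mathrm{ext}(\Phi(r))$ precisely to $[\eta, \mathrm{ext}(r)^\star \comp \sem{P}] \comp \cong \comp \pv{\id}{\sem{\prog{b}}} = \Psi(\mathrm{ext}(r))$. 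Since $\mathrm{ext}$ is strict and continuous and $\Phi,\Psi$ are continuous, fixpoint fusion then transfers least fixpoints to give $\mathrm{ext}(\mathop{\mathrm{lfp}}\Phi) = \mathop{\mathrm{lfp}}\Psi$.

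I expect the main obstacle to be the while-loop: establishing $\mathrm{ext} \comp \Phi = \Psi \comp \mathrm{ext}$ cleanly forces one to make the pointwise-routing observation precise, so that the case split on $\sem{\prog{b}}$ genuinely commutes with the extensional collapse, and the subsequent appeal to fixpoint fusion uses strictness of $\mathrm{ext}$ in an essential way. Everything else — the two base cases and the two choice operators — is routine once the algebraic properties of $\mathrm{ext}$ and Lemma~\ref{mon:comp} are in hand.
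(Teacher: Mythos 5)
Your proposal is correct and follows essentially the same route as the paper: the base cases and the two choice operators via linearity and $\uplus$-preservation of $\mathrm{ext}$, sequential composition via Lemma~\ref{mon:comp} together with continuity of post-composition, and the conditional/while cases via the pointwise behaviour of $\mathrm{ext}$ over $S$. The only (cosmetic) difference is that you package the while-loop case as a homomorphism law $\mathrm{ext}\comp\Phi=\Psi\comp\mathrm{ext}$ plus least-fixpoint fusion, whereas the paper unfolds the same argument explicitly as an induction $\mathrm{ext}(\sigma^n(\bot))=\tau^n(\bot)$ on the Kleene iterates followed by Scott-continuity of $\mathrm{ext}$.
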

\begin{proof}
The first two cases are direct. The third case is obtained from
Lemma~\ref{mon:comp} and the fact that the post-composition of Scott-continuous
maps is Scott-continuous.
The fourth and fifth cases follow directly from $\mathrm{ext}$ being both
linear and $\uplus$-preserving. 
%
The case involving conditionals is similar to the case involving while-loops
which we detail next.  

For while-loops we resort to the fact that both intensional and extensional
semantics use Kleene's fixpoint theorem (recall Figure~\ref{fig:sem} and
Figure~\ref{fig:comp}). Specifically they involve mappings $\sigma^n : \nu R
\to \nu R$ (recall that $\nu R$ is treated as a set of functions) and $\tau^n :
[S, \Pow \PP(S)] \to [S, \Pow \PP(S)]$ (for all $n \in \Nats$). These are
defined as follows:
\begin{align*}
        & \begin{cases}
                \sigma^0(r) & = \bot 
                \\
                \sigma^{n+1}(r) & = [\asem{\prog{skip}}, 
                \asem{\prog{skip}} \seqI \, (\asem{P} \seqI \sigma^n(r))] 
                \, \comp \cong \comp \, \pv{\id}{\sem{\prog{b}}}
        \end{cases}
       \\[4pt]
        &  \begin{cases}
                \tau^0(f) & = \bot 
                \\
                \tau^{n+1}(f) & = [\eta, (\tau^n(f))^\star \comp 
                \mathrm{ext}(\asem{P})] \, \comp \cong \comp \, \pv{\id}{\sem{\prog{b}}}
        \end{cases}
\end{align*}
It follows from straightforward induction over the natural numbers that the
equation $\mathrm{ext}(\sigma^{n}(\bot)) = \tau^n(\bot)$ holds for all $n \in \Nats$.
Then we reason:
\begin{align*}
        & \> \, \mathrm{ext}(\asem{\prog{while \, b} \, P}) 
        \\
        & = \text{ \big \{ Defn. of $\asem{-}$ \big \}}
        \\
        &
        \, \mathrm{ext}(\textstyle{\bigvee_n} \, \sigma^n(\bot))
        \\
        &
        = \text{\big \{ Scott-continuity of $\mathrm{ext}$ \big \}}
        \\
        &
        \, \textstyle{\bigvee_n}\,  \mathrm{ext}(\sigma^n(\bot))
        \\ 
        & = \text{\big \{ $\mathrm{ext}(\sigma^n(\bot)) = \tau^n(\bot)$  } \big \}
        \\
        & \, \textstyle{\bigvee_n}\, \tau^n(\bot)
        \\
        & = \text{ \big \{Kleene's lfp construction \big \}}
        \\
        &
        \, \mathrm{lfp} \Big (f \mapsto [\eta, f^\star \comp \mathrm{ext}(\asem{P})]
        \, \comp \cong \comp \, \pv{\id}{\sem{\prog{b}}} \Big )
        \end{align*}
\end{proof}

\section{Proof of Theorem~\ref{prop:alg}}
\label{sec:konig}

In this section we prove Theorem~\ref{prop:alg}, which was presented in the
main text. It requires a series of preliminaries which are detailed and proved
next.

\begin{theorem}
        \label{theo:comp}
        Consider a family of zero-dimensional coherent domains $(X_i)_{i \in
        I}$ and also a finite set $F \subseteq \prod_{i \in I} \PP_{= 1}(X_i)
        \subseteq \prod_{i \in I} \PP(X_i)$. Then $b(F) = \mathrm{conv}\, F$
        and consequently the set $\mathrm{conv}\, F$ is Lawson-compact.
\end{theorem}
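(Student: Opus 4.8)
The plan is to prove the two set inclusions of $b(F) = \conv{F}$ separately, where $b(F) = \closure{\conv{F}} \cap \upclos \conv{F}$ is computed in the continuous d-cone $C = \prod_{i \in I} \PP(X_i)$. The inclusion $\conv{F} \subseteq b(F)$ is immediate, since every element of $\conv{F}$ lies in its own Scott-closure and lies above itself. The whole content is therefore the reverse inclusion $b(F) \subseteq \conv{F}$. The first observation I would record is that $\conv{F}$ actually sits inside the set $P := \prod_{i \in I} \PP_{=1}(X_i)$: because the cone operations on each $\PP(X_i)$ are computed pointwise on opens, any convex combination $\sum_k \lambda_k f_k$ of elements $f_k \in F$ satisfies $(\sum_k \lambda_k f_k)_i(X_i) = \sum_k \lambda_k = 1$ in every coordinate $i$. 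By Theorem~\ref{theo:locin} applied coordinatewise, $P$ carries the discrete order.

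The crux is then to force an arbitrary $x \in b(F)$ back into $P$, after which discreteness finishes the argument. Here I would exploit the total-mass maps. For each $i$ the assignment $x \mapsto x_i(X_i)$ is Scott-continuous from $C$ to $[0,\infty]$ (it is a projection followed by evaluation at the fixed open $X_i$, and directed joins of valuations are computed pointwise); since $[0,1]$ is Scott-closed in $[0,\infty]$, the preimage $\{ x \in C \mid x_i(X_i) \leq 1 \}$ is Scott-closed and contains $\conv{F}$. Intersecting over all $i$ shows $\closure{\conv{F}} \subseteq \{ x \in C \mid \forall i.\, x_i(X_i) \leq 1 \}$. Now take $x \in b(F)$. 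From $x \in \upclos \conv{F}$ we obtain some $a \in \conv{F}$ with $a \leq x$, whence $x_i(X_i) \geq a_i(X_i) = 1$ for all $i$; from $x \in \closure{\conv{F}}$ we obtain $x_i(X_i) \leq 1$ for all $i$. Thus $x_i(X_i) = 1$ for every $i$, i.e. $x \in P$. Since $a \leq x$ with both $a, x \in P$, Theorem~\ref{theo:locin} gives $a = x$, so $x = a \in \conv{F}$, establishing $b(F) = \conv{F}$.

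Finally I would read off the Lawson-compactness claim from this identification: $b(F)$ is by definition a basic element of the biconvex powercone $\Pow_b(C)$, hence a convex lens, and every lens is Lawson-compact; consequently $\conv{F} = b(F)$ is Lawson-compact. For this last step I would first note that $C = \prod_{i \in I} \PP(X_i)$ is a Lawson-compact continuous d-cone — each $\PP(X_i)$ is Lawson-compact because $X_i$ is coherent, and Lawson-compactness is preserved under products — so that $\Pow_b(C)$ is defined and its basic elements are genuine lenses. I expect the only real obstacle to be the reverse inclusion $b(F) \subseteq \conv{F}$, and within it precisely the step that pins every coordinate mass of $x$ to exactly $1$; the remainder is bookkeeping with pointwise suprema of valuations together with the discreteness supplied by Theorem~\ref{theo:locin}.
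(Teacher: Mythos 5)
Your argument is correct and follows essentially the same route as the paper's proof: use the $\upclos\conv{F}$ half to find $a \in \conv{F}$ with $a \leq x$ and total mass $1$ in every coordinate, use the Scott-closure half to bound every coordinate mass of $x$ by $1$ (the paper cites the characterisation of Scott-closure in domains here, where you instead argue directly via Scott-continuity of the total-mass maps), and then conclude $x = a$ from the discreteness of the order on $\prod_{i}\PP_{=1}(X_i)$ supplied by Theorem~\ref{theo:locin}. Your explicit justification that $C$ is Lawson-compact so that $b(F)$ is a genuine lens is a welcome detail the paper leaves implicit.
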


\begin{proof}
        Take $(x_i)_{i \in I} \in \overline{\conv{F}} \cap \mathord{\uparrow}\,
        \conv{F}$. We can assume the existence of an element $(a_i)_{i \in I}
        \in \conv{F}$ such that $(x_i)_{i \in I} \geq (a_i)_{i \in I}$. Note
        that every $a_i$ $(i \in I)$ has total mass $1$ by assumption.  Since
        $(x_i)_{i \in I} \in \overline{\conv{F}}$ it must be the case that
        every $x_i$ ($i \in I$) has total mass $\leq 1$~\cite[Exercise
        5.1.14]{larrecq13}, and therefore every $x_i$ must have total mass $1$
        due to the inequation $(x_i)_{i \in I} \geq (a_i)_{i \in I}$. This
        means that both $(x_i)_{i \in I}$ and $(a_i)_{i \in I}$ live in
        $\prod_{i \in I} V_{=1}(X_i)$ and thus $(x_i)_{i \in I} = (a_i)_{i \in
        I}$ (by Theorem~\ref{theo:locin}). It follows that $(x_i)_{i \in I}
        \in \conv{F}$ which proves the claim.
\end{proof}
The next step is to extend the previous theorem to finite sets of tuples of
subvaluations, \ie\ we extend the previous result to finite sets $F \subseteq
\prod_{i \in I} \PP_{\leq 1}(X_i)$ with every $X_i$ a zero-dimensional coherent
domain. We will need the following auxiliary result.

\begin{lemma}
        \label{lem:eppair}
        For every domain $X$  there exists an ep-pair between $\PP(X + 1)$ and
        $\PP(X)$. Specifically the projection $p: \PP(X+1) \twoheadrightarrow
        \PP(X)$ is defined as $[\eta^\PP,\const{\bot}]^\star$ whereas the
        embedding $e: \PP(X) \hookrightarrow \PP(X+1)$ is defined as
        $\PP(\inl)$.
\end{lemma}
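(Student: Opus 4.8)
The plan is to verify the two defining conditions of an embedding--projection pair directly, namely $p \comp e = \id_{\PP(X)}$ and $e \comp p \leq \id_{\PP(X+1)}$. Continuity of the two maps needs no separate argument: $e = \PP(\inl)$ is a $\Cone$-morphism (the functor $\PP$ applied to $\inl$) and $p = [\eta^\PP,\const{\bot}]^\star$ is a Kleisli lifting, so both are continuous linear maps by construction. Thus the whole content of the lemma reduces to the identity and the inequality above.

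First I would dispatch $p \comp e = \id_{\PP(X)}$ purely algebraically, using the monad structure of $\PP$. Writing $e = \PP(\inl) = (\eta^\PP \comp \inl)^\star$ and applying associativity of Kleisli composition together with the unit law $f^\star \comp \eta^\PP = f$, one obtains
\begin{align*}
        p \comp e
        &= [\eta^\PP,\const{\bot}]^\star \comp (\eta^\PP \comp \inl)^\star
        = \big ( [\eta^\PP,\const{\bot}]^\star \comp \eta^\PP \comp \inl \big )^\star
        \\
        &= \big ( [\eta^\PP,\const{\bot}] \comp \inl \big )^\star
        = (\eta^\PP)^\star = \id_{\PP(X)} ,
\end{align*}
where the penultimate step uses $[\eta^\PP,\const{\bot}] \comp \inl = \eta^\PP$ (the copairing precomposed with the left injection) and the last uses $(\eta^\PP)^\star = \id$. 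No density or continuity argument is needed for this half.

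For the inequality $e \comp p \leq \id_{\PP(X+1)}$ I would argue on the basis of finite linear combinations of point-masses and then propagate by Scott-continuity. Take a basic element $\mu = \sum_i r_i \cdot z_i \in \PP(X+1)$ with each $z_i \in X + 1$, and split the index set according to whether $z_i = \inl(x_i)$ lies in the left summand or $z_i = \inr(\ast)$ is the adjoined point. Unfolding the definitions, $p$ sends $\ast$ to $\bot$, so $p(\mu) = \sum_{z_i = \inl(x_i)} r_i \cdot x_i$, whence $e(p(\mu)) = \PP(\inl)(p(\mu)) = \sum_{z_i = \inl(x_i)} r_i \cdot \inl(x_i)$. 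Therefore
\[
        \mu = e(p(\mu)) + \textstyle{\sum_{z_i = \inr(\ast)}}\, r_i \cdot \inr(\ast) ,
\]
and since addition of valuations is pointwise and the second summand assigns a nonnegative value to every open, we get $e(p(\mu)) \leq \mu$. As finite linear combinations of point-masses form a basis of $\PP(X+1)$ and both $e \comp p$ and $\id$ are Scott-continuous, the inequality extends to all of $\PP(X+1)$.

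The argument is essentially routine, so I do not anticipate a serious obstacle; the only point requiring a little care is the bookkeeping in computing $p$ and $e$ on a basic element, i.e.\ confirming that $p$ discards exactly the mass placed on the adjoined point and that $e$ reinstates the remaining mass along $\inl$. As a sanity check (and an alternative route that gives the inequality uniformly in $\mu$), I would reformulate on opens: a short computation on basic elements, extended by continuity, shows $p(\mu)(V) = \mu(\inl(V))$ for every open $V \subseteq X$, so that $e(p(\mu))(U) = \mu(U \cap \inl(X)) \leq \mu(U)$ for every open $U \subseteq X+1$, using that $\inl(X)$ is clopen in the coproduct and that valuations are monotone.
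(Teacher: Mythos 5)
Your proof is correct, and it matches the paper's approach: the paper dismisses this lemma as ``straightforward calculations,'' and your verification of $p \comp e = \id$ via the monad laws and of $e \comp p \leq \id$ on basic elements (extended by Scott-continuity) is precisely the routine computation being alluded to.
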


\begin{proof}
        The notion of an ep-pair can be consulted for example in~\cite[Section
        IV-1]{gierz03} and also \cite[Definition 9.6.1]{larrecq13}.  The result
        itself follows by straightforward calculations.
\end{proof}

\begin{theorem}
         \label{theo:comp2}
         Consider a family of zero-dimensional coherent domains $(X_i)_{i
         \in I}$ and also a finite set $F \subseteq \prod_{i \in I} \PP_{\leq
         1}(X_i) \subseteq \prod_{i \in I} \PP(X_i)$. Then the set
         $\conv{F}$ is Lawson-compact.
\end{theorem}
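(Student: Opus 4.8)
The plan is to reduce the sub‑valuation case to the probability‑valuation case already handled in Theorem~\ref{theo:comp}, transporting everything along the ep‑pairs of Lemma~\ref{lem:eppair}. For each $i \in I$ write $(e_i,p_i)$ for the ep‑pair between $\PP(X_i+1)$ and $\PP(X_i)$, so that $e_i = \PP(\inl)$ and $p_i = [\eta^\PP,\const{\bot}]^\star$. The idea is to normalise each tuple of $F$ to a tuple of genuine probability valuations on the $X_i+1$, apply Theorem~\ref{theo:comp} there, and then project back along $p = \prod_i p_i$.

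First I would normalise. Given $(\mu_i)_{i\in I}\in F$, define $\tilde\mu_i = e_i(\mu_i) + (1-\mu_i(X_i))\comp\delta_{\inr(*)}$; that is, I push $\mu_i$ into $\PP(X_i+1)$ and place the missing mass $1-\mu_i(X_i)$ on the adjoined point. Each $\tilde\mu_i$ has total mass $1$, so the finite set $\tilde F = \{(\tilde\mu_i)_i \mid (\mu_i)_i\in F\}$ lies in $\prod_i \PP_{=1}(X_i+1)$. Since coproducts of coherent domains are again coherent and the adjoined point is clopen (so zero‑dimensionality is preserved), each $X_i+1$ is a zero‑dimensional coherent domain; Theorem~\ref{theo:comp} then yields that $\conv{\tilde F}$ is Lawson‑compact.

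Next I would push this compactness down along $p$. From $p_i\comp e_i = \id$ and $p_i(\delta_{\inr(*)}) = [\eta^\PP,\const{\bot}](\inr(*)) = \bot$, linearity of each $p_i$ gives $p_i(\tilde\mu_i)=\mu_i$, whence $p(\tilde F)=F$. As $p$ is linear it maps convex hulls to convex hulls, so $p(\conv{\tilde F}) = \conv{p(\tilde F)} = \conv{F}$. It therefore suffices to show that $p$ is Lawson‑continuous, for the Lawson‑continuous image of a Lawson‑compact set is Lawson‑compact.

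Establishing Lawson‑continuity of the projection is the crux. Each $p_i$ is Scott‑continuous by Lemma~\ref{lem:eppair}, so only continuity for the lower topology needs checking. Here I would use that an ep‑pair is in particular a Galois connection $e_i\dashv p_i$, i.e.\ $e_i(\mu)\leq\nu \iff \mu\leq p_i(\nu)$ (immediate from $p_i e_i=\id$ and $e_i p_i \leq \id$). Consequently, for every $\mu_0$,
\[
        p_i^{-1}(\upclos\mu_0) = \{\nu \mid \mu_0 \leq p_i(\nu)\} = \{\nu \mid e_i(\mu_0) \leq \nu\} = \upclos e_i(\mu_0),
\]
a subbasic closed set of the lower topology on $\PP(X_i+1)$. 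Hence $p_i$ is continuous for the lower topology, and being Scott‑continuous as well it is Lawson‑continuous; the product map $p=\prod_i p_i$ is then Lawson‑continuous for the product Lawson topologies already used in Theorem~\ref{theo:comp}. Combining this with the preceding paragraph, $\conv{F}=p(\conv{\tilde F})$ is the continuous image of a Lawson‑compact set and thus Lawson‑compact. The only mild bookkeeping I anticipate is that the Lawson topology on the products coincides with the product of the Lawson topologies (already implicit in the statement of Theorem~\ref{theo:comp}); every other step is a direct consequence of the ep‑pair adjunction.
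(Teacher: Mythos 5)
Your proof is correct and follows essentially the same route as the paper's: normalise each subvaluation to a probability valuation on $X_i+1$ by placing the missing mass on the adjoined point, apply Theorem~\ref{theo:comp} to the resulting set $\tilde F$, and push the compactness of $\conv{\tilde F}$ down along the linear, Lawson-continuous projection $p=\prod_i[\eta^{\PP},\const{\bot}]^\star$. The only difference is cosmetic: where the paper cites \cite[Exercise O-3.29]{gierz03} for the Lawson-continuity of the projection, you derive it directly from the Galois-connection property of the ep-pair, which is a perfectly valid (and self-contained) substitute.
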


\begin{proof}
        It follows from Lemma~\ref{lem:eppair} and the fact that ep-pairs are
        preserved by products that there exists an ep-pair between $\prod_{i
        \in I} \PP(X_i +1)$ and $\prod_{i \in I} \PP(X_i)$. In particular by
        \cite[Exercise O-3.29]{gierz03} we
        have a Lawson-continuous map $p: \prod_{i \in I}
        \PP(X_i+1)\twoheadrightarrow \prod_{i \in I} \PP(X_i)$ defined by
        $\prod_{i \in I} [\eta^\PP,\const{\bot}]^\star$. Note as well that
        every $X_i + 1$ is zero-dimensional~\cite[Proposition
        5.1.59]{larrecq13} and a coherent domain.

        Consider now a tuple $(\mu_i)_{i \in I} \in F \subseteq \prod_{i \in I}
        \PP_{\leq 1}(X_i)$. For every such tuple we define a new one
        $(\mu'_i)_{i \in I} \in \prod_{i \in I} V_{=1} (X_i + 1)$ by setting,
        \[
                \mu_i' = \PP(\inl)(\mu_i) + (1 - \mu_i(X_i)) \, \cdot \,
        {\inr(\ast)}
        \]
        We obtain a new set $F' \subseteq \prod_{i
        \in I} V_{=1} (X_i + 1)$ in this way and by Theorem~\ref{theo:comp} its
        convex closure $\conv{F'}$ is Lawson-compact. Finally by taking
        advantage of the fact that $p$ is linear we reason $p[\conv{F'}] =
        \conv{p[F']} = \conv{F}$, and thus since $p$ is Lawson-continuous the
        set $\conv{F}$ must be Lawson-compact.
\end{proof} 
Together with~\cite[Proposition III-1.6]{gierz03} the previous theorem entails
that under certain conditions the closure $\overline{\conv{F}}$ of the previous
set $\conv{F}$ simplifies to $\mathord{\downarrow}\conv{F}$. This is 
formulated in the following corollary, and it will be useful in the proof of
Theorem~\ref{prop:alg}.

\begin{corollary}
         \label{cor:basis}
         Consider a family of zero-dimensional coherent domains $(X_i)_{i \in
         I}$ and also a finite set $F \subseteq \prod_{i \in I} \PP_{\leq
         1}(X_i) \subseteq \prod_{i \in I} \PP(X_i)$. Then we have
         $\overline{\conv{F}} = \mathord{\downarrow}\conv{F}$ and in particular
         we obtain $b(F) = \mathord{\downarrow}\, \mathrm{conv}\, F \cap
         \mathord{\uparrow}\, \mathrm{conv}\, F$.
\end{corollary}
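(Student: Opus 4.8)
The plan is to obtain both assertions almost immediately from the Lawson-compactness already secured in Theorem~\ref{theo:comp2}. First I would record the easy inclusion $\mathord{\downarrow}\conv{F} \subseteq \overline{\conv{F}}$, which holds for any subset: Scott-closed sets are lower sets, and $\overline{\conv{F}}$ is the least Scott-closed set containing $\conv{F}$, hence it contains the down-closure of $\conv{F}$. All the content is thus in the reverse inclusion, and for this it suffices to show that $\mathord{\downarrow}\conv{F}$ is \emph{itself} Scott-closed: being then a Scott-closed set containing $\conv{F}$, it must contain $\overline{\conv{F}}$, which yields $\overline{\conv{F}} \subseteq \mathord{\downarrow}\conv{F}$ and hence the desired equality.

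By construction $\mathord{\downarrow}\conv{F}$ is a lower set, so only closure under directed suprema remains to be checked, and here I would invoke Theorem~\ref{theo:comp2} together with \cite[Proposition III-1.6]{gierz03}. Concretely, given a directed family $D \subseteq \mathord{\downarrow}\conv{F}$ with supremum $d$, I would consider the sets $\mathord{\uparrow} x \cap \conv{F}$ for $x \in D$. Each $\mathord{\uparrow} x$ is closed in the lower topology, hence in the Lawson topology, and each of these intersections is non-empty precisely because $x \in \mathord{\downarrow}\conv{F}$. Directedness of $D$ supplies the finite intersection property, so the Lawson-compactness of $\conv{F}$ (Theorem~\ref{theo:comp2}) forces $\bigcap_{x \in D}(\mathord{\uparrow} x \cap \conv{F}) \neq \emptyset$. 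Any point $k$ in this intersection lies in $\conv{F}$ and is an upper bound of $D$, so $k \geq \bigvee D = d$, whence $d \in \mathord{\downarrow}\conv{F}$. (The ambient space $\prod_{i \in I}\PP(X_i)$ is a coherent domain, inherited from each factor $\PP(X_i)$ being a pointed coherent domain, so these topological facts are available.)

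Finally the `in particular' clause is immediate: by definition $b(F) = \overline{\conv{F}} \cap \mathord{\uparrow}\conv{F}$, and substituting the equality just proved gives $b(F) = \mathord{\downarrow}\conv{F} \cap \mathord{\uparrow}\conv{F}$. I do not expect a genuine obstacle at this stage, since the only substantive input, namely the Lawson-compactness of $\conv{F}$, is already delivered by Theorem~\ref{theo:comp2}, and the remaining step is the standard fact (encapsulated by \cite[Proposition III-1.6]{gierz03}) that the down-set of a Lawson-compact subset of a domain is Scott-closed. The sole point requiring a little care is verifying that $\mathord{\uparrow} x$ is indeed Lawson-closed and that the finite intersection property transfers correctly from the directedness of $D$.
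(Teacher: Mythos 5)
Your proof is correct and follows essentially the same route as the paper, which obtains the corollary directly from the Lawson-compactness of $\conv{F}$ (Theorem~\ref{theo:comp2}) together with \cite[Proposition III-1.6]{gierz03}; you have merely unfolded the finite-intersection-property argument underlying that cited proposition. The easy inclusion, the Scott-closedness of $\mathord{\downarrow}\conv{F}$, and the substitution into the definition of $b(F)$ are all handled correctly.
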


\smallskip
\noindent
\textbf{The notion of an $n$-step trace.}
Next we introduce the notion of an $n$-step trace. Going back to the analogy to
K\"onig's lemma (recall Section~\ref{sec:adeq}), an $n$-step trace corresponds
to a path of length $n$ in a tree. Such traces involve a space of choices (made
by a scheduler along a computation) which we detail next. In order to not
overburden notation we denote the set of inputs of a scheduler, \ie\
\[
        ((\SPrg \times S)  \times \PP_{=1} (S + \SPrg \times S) )^\ast 
        \times (\SPrg
        \times S)
\]
by $I$. Then we define the space of choices $\Ch$ as the $\Dcpo$-product
$\textstyle{\prod_{i \in I}} \PP \big ( \PP_ {=1}(S + \SPrg \times S) \big)$.
The space $S + \SPrg \times S$ is discrete and thus $\PP_ {=1}(S + \SPrg \times
S)$ is also discrete (Theorem~\ref{theo:locin}), in particular it is a coherent
domain. This implies that $\Ch$ is an $\LCone$-product per our previous remarks
about products in Section~\ref{sec:back}. All schedulers can be uniquely
encoded as elements of $\Ch$. They thus inherit an order which is precisely
that of partial functions by Theorem~\ref{theo:locin}. The inherited order on
schedulers is furthermore a DCPO. Next, given a valuation $\nu \in \PP_{=1}(S +
\SPrg \times S)$ we use the expression $h\pv{P}{s} \mapsto 1 \cdot \nu$ to
denote the element in $\Ch$ that associates the input $h\pv{P}{s}$ to $1 \cdot
\nu$ and all other inputs to the zero-mass valuation $\bot$.  Given a linear
combination $\nu \in \PP(S + \SPrg \times S)$ we use $\nu^S \in \PP(S)$ to
denote the respective restriction on $S$. We now formally introduce the notion
of an $n$-step trace.

\begin{definition}
        \label{defn:trace}
        Take a pair $(h,\pv{P}{s})$ and a scheduler $\sch$ such that
        $\sch(h\pv{P}{s}) = \sum_k p_k \cdot \nu_k$ for some convex
        combination $\sum_k p_k \cdot (-)$ with each $\nu_k$ of the form,
        \[
             \textstyle{   \sum_{i} p_{k,i} \cdot \pv{P_{k,i}}{s_{k,i}} + \sum_{j} p_{k,j}
             \cdot s_{k,j} }
        \]
        Then we define an $\Nats_+$-indexed family of partial
        functions $t^{\sch,n} : I \to (\PP(S) \times \Ch)^n$, which send an
        input to the respective $n$-step trace, as follows:
        \begin{align*}
        t^{\sch,1}(h\pv{P}{s}) & = 
        \textstyle{\sum_k} \, p_k  \cdot \left (\nu^S_k, h\pv{P}{s} \mapsto
        1 \cdot \nu_k \right ) \\
        t^{\sch,{n+1}}(h\pv{P}{s}) & = 
        \textstyle{ \sum_k\, } p_k \cdot \Big ((\nu^S_k, h\pv{P}{s} \mapsto
        1 \cdot \nu_k ) \, \mathtt{::} \,
                                \\ & \hspace{2cm}
        \textstyle {\sum_{i}\, } 
        p_{k,i} \cdot t^{\sch,n}(h\pv{P}{s}\nu_k\pv{P_{k,i}}{s_{k,i}})
        + \Delta^n(\nu^S_k,\bot) \Big )
\end{align*}
The operation $(\mathtt{::})$ is the append function on lists and $\Delta^n :
X \to X^n$ is the $n$-diagonal map which sends an input $x$ to the $n$-tuple
$(x,\dots,x)$.
\end{definition}
Despite looking complicated the previous definition can be easily related to
the big-step operational semantics. We detail this in the following theorem.

\begin{theorem}
        \label{theo:tracef}
        Take a pair $(h,\pv{P}{s})$, a scheduler $\sch$, and a positive natural
        number $n \in \Nats_+$. If $t^{\sch,n}(h\pv{P}{s}) = (\mu_1,c_1)
        \dots (\mu_n,c_n)$ for some sequence $(\mu_1,c_1) \dots (\mu_n,c_n)$
        then for all $1 \leq i \leq n$ we have $h\pv{P}{s}
        \Downarrow^{\sch,i} \mu_i$. Conversely if for all $1 \leq i \leq n$
        we have $h\pv{P}{s} \Downarrow^{\sch,i} \mu_i$ for some sequence of
        valuations $\mu_1 \dots \mu_n$ then we must also have
        $t^{\sch,n}(h\pv{P}{s}) = (\mu_1,c_1)\dots(\mu_n,c_n)$ for some
        sequence of choices $c_1 \dots c_n$.
\end{theorem}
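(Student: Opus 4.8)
The plan is to prove both implications by induction on the positive natural number $n$, with the forward direction carrying the essential content and the converse reducing to it via determinism of the big-step semantics (Theorem~\ref{theo:det}). Throughout I write $\sch(h\pv{P}{s}) = \sum_k p_k \cdot \nu_k$ with each $\nu_k = \sum_i p_{k,i} \cdot \pv{P_{k,i}}{s_{k,i}} + \sum_j p_{k,j} \cdot s_{k,j}$, so that the restriction $\nu_k^S$ equals precisely the terminated portion $\sum_j p_{k,j} \cdot s_{k,j}$. The key structural observation is that addition and scaling in $(\PP(S) \times \Ch)^n$ are computed componentwise, so that the $m$-th valuation entry of a convex combination of traces is the corresponding convex combination of their $m$-th valuation entries; this licenses commuting the outer sum $\sum_k p_k$ past the operation of reading off a fixed list position.

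For the forward direction I assume $t^{\sch,n}(h\pv{P}{s}) = (\mu_1,c_1)\dots(\mu_n,c_n)$ and show $h\pv{P}{s} \Downarrow^{\sch,i} \mu_i$ for each $i \leq n$. In the base case $n=1$ the single valuation entry is $\sum_k p_k \cdot \nu_k^S = \sum_k p_k \sum_j p_{k,j} \cdot s_{k,j}$, which is exactly the value assigned by the big-step rule at step $1$, where every child evaluates to $\bot$ at step $0$. For the inductive step, the head of the trace again yields $\Downarrow^{\sch,1}$ as in the base case, while for position $m+1$ (with $1 \leq m \leq n$) I read off the $m$-th entry of each inner trace $t^{\sch,n}(h\pv{P}{s}\nu_k\pv{P_{k,i}}{s_{k,i}})$. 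By the induction hypothesis this entry is some $\mu_{k,i}$ with $h\pv{P}{s}\nu_k\pv{P_{k,i}}{s_{k,i}} \Downarrow^{\sch,m} \mu_{k,i}$; adding the constant contribution $(\nu_k^S,\bot)$ supplied by the term $\Delta^n(\nu_k^S,\bot)$ and then taking the convex combination over $k$ produces the valuation $\sum_k p_k \left(\sum_i p_{k,i} \cdot \mu_{k,i} + \sum_j p_{k,j} \cdot s_{k,j}\right)$. This is precisely the right-hand side of the big-step rule at step $m+1$, so $h\pv{P}{s} \Downarrow^{\sch,m+1} \mu_{m+1}$, closing the induction.

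For the converse I first observe, by a separate induction on $n$, that $t^{\sch,n}(h\pv{P}{s})$ is defined exactly when $h\pv{P}{s} \Downarrow^{\sch,n}$ is: both require $\sch(h\pv{P}{s})$ to be defined at the root and then recurse, with the same branching, on the configurations $h\pv{P}{s}\nu_k\pv{P_{k,i}}{s_{k,i}}$ at index $n-1$. Since assuming $h\pv{P}{s} \Downarrow^{\sch,i} \mu_i$ for all $i \leq n$ entails in particular that $\Downarrow^{\sch,n}$ is defined, the trace $t^{\sch,n}(h\pv{P}{s}) = (\mu'_1,c_1)\dots(\mu'_n,c_n)$ exists; the forward direction gives $h\pv{P}{s} \Downarrow^{\sch,i} \mu'_i$, and determinism (Theorem~\ref{theo:det}) forces $\mu'_i = \mu_i$. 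This yields the claimed sequence of choices $c_1 \dots c_n$.

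I expect the main obstacle to be a matter of careful bookkeeping rather than of conceptual difficulty: one must see that the diagonal term $\Delta^n(\nu_k^S,\bot)$ is exactly what reinjects the already-terminated mass $\nu_k^S = \sum_j p_{k,j} \cdot s_{k,j}$ into every subsequent position of the trace, matching the way the big-step rule re-adds $\sum_j p_{k,j} \cdot s_{k,j}$ at each step while only the continuation-carrying mass $\sum_i p_{k,i} \cdot (-)$ is advanced. Keeping the indices $k$ (scheduler branches), $i$ (continuations), $j$ (terminated states), and $m$ (trace position) disentangled, and justifying each appeal to componentwise linearity, is where the argument must be written out with care.
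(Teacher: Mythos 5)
Your proposal is correct and follows essentially the same route as the paper's (very terse) proof: the forward direction by induction on $n$, unravelling Definition~\ref{defn:trace} and matching the entries against the big-step rule, and the converse by observing that $t^{\sch,n}$ and $\Downarrow^{\sch,n}$ are defined under exactly the same conditions and then reusing the forward direction (your explicit appeal to determinism, Theorem~\ref{theo:det}, to identify the valuations is the natural way to close that step). Your remark that $\Delta^n(\nu_k^S,\bot)$ is what reinjects the already-terminated mass at every subsequent trace position is precisely the bookkeeping point the induction hinges on.
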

\begin{proof}
        The first claim follows from straightforward induction over the
        positive natural numbers, an unravelling of the trace functions'
        definition (Definition~\ref{defn:trace}), and applications of the
        operational semantics' rules (Figure~\ref{fig:bop_sem}).  The second
        claim follows from induction over the positive natural numbers, the
        first claim and an analysis of both the operational semantics' rules
        and the conditions under which a trace function is well-defined.
\end{proof}
The following result involving traces will also be useful.
\begin{lemma}
        \label{lem:pref}
        The two conditions below hold for every
        pair $(h,\pv{P}{s})$.
        \begin{itemize}
                \item For every pair of schedulers $\sch_1$ and $\sch_2$
                        such that $\sch_1 \leq \sch_2$ the implication
                        below holds for every natural number $n \in \Nats_+$.
                        \[
                                t^{\sch_1,n}(h\pv{P}{s})
                                \text{ is well-defined}
                                \Longrightarrow
                                t^{\sch_1,n}(h\pv{P}{s})
                                =
                                t^{\sch_2,n}(h\pv{P}{s})
                        \]
                \item For every scheduler $\sch$ and positive natural number
                        $n \in \Nats_+$ the trace $t^{\sch,n}(h\pv{P}{s})$ is
                        a prefix of $t^{\sch,n+1}(h\pv{P}{s})$ whenever
                        the latter is well-defined.
        \end{itemize}
\end{lemma}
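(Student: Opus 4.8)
The plan is to establish both items by induction on $n \in \Nats_+$, keeping the pair $(h,\pv{P}{s})$ universally quantified throughout. This is essential because the recursive calls in Definition~\ref{defn:trace} query the scheduler at the extended histories $h\pv{P}{s}\nu_k$ and at fresh configurations $\pv{P_{k,i}}{s_{k,i}}$, so only a statement ranging over all pairs lets the induction hypothesis reach those subtraces.

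For the first item, the decisive observation is that the order inherited by schedulers is the partial-function order. Indeed, since $S + \SPrg \times S$ is discrete, Theorem~\ref{theo:locin} makes $\PP_{=1}(S + \SPrg \times S)$ discretely ordered, hence its Scott topology is discrete and thus zero-dimensional; applying Theorem~\ref{theo:locin} one level up then makes $\PP_{=1}(\PP_{=1}(S+\SPrg\times S))$ discretely ordered as well. Consequently $\sch_1 \leq \sch_2$ in $\Ch$ does not merely bound the two schedulers pointwise but forces $\sch_2$ to agree with $\sch_1$ on every input at which $\sch_1$ is defined. Now if $t^{\sch_1,n}(h\pv{P}{s})$ is well-defined, every scheduler query performed during its computation lies in $\dom(\sch_1) \subseteq \dom(\sch_2)$, and there $\sch_2$ returns the same convex combination $\sum_k p_k \cdot \nu_k$. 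The base case $n=1$ then gives $t^{\sch_2,1}(h\pv{P}{s}) = \sum_k p_k \cdot (\nu^S_k, h\pv{P}{s} \mapsto 1 \cdot \nu_k) = t^{\sch_1,1}(h\pv{P}{s})$; in the inductive step the heads coincide for the same reason, while each recursive summand $t^{\sch_2,n}(h\pv{P}{s}\nu_k\pv{P_{k,i}}{s_{k,i}})$ is equated with its $\sch_1$-counterpart by the induction hypothesis (its well-definedness being inherited from that of $t^{\sch_1,n+1}(h\pv{P}{s})$). The same bookkeeping shows $t^{\sch_2,n}(h\pv{P}{s})$ is itself well-defined, since its computation visits exactly the inputs visited by $t^{\sch_1,n}(h\pv{P}{s})$, so the two traces are equal as elements of the cone $(\PP(S) \times \Ch)^n$.

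For the second item, I would introduce the coordinate-forgetting projection $\mathrm{tr}_n : (\PP(S) \times \Ch)^{n+1} \to (\PP(S) \times \Ch)^n$ deleting the last entry, and recast ``prefix'' as the identity $\mathrm{tr}_n(t^{\sch,n+1}(h\pv{P}{s})) = t^{\sch,n}(h\pv{P}{s})$. Being a product projection, $\mathrm{tr}$ is linear, it distributes over the append by acting on the tail (leaving the head untouched), and it carries $\Delta^{m+1}(y)$ to $\Delta^m(y)$. The base case $n=1$ is a direct unfolding: truncating the single-entry tail of $t^{\sch,2}$ leaves exactly $t^{\sch,1}$. In the inductive step one pushes $\mathrm{tr}_{n+1}$ through the outer sum $\sum_k p_k$ and the append, applies linearity to the inner sum $\sum_i p_{k,i}$ and to the diagonal correction, rewrites $\mathrm{tr}_n(t^{\sch,n+1}(h\pv{P}{s}\nu_k\pv{P_{k,i}}{s_{k,i}}))$ as $t^{\sch,n}(\dots)$ via the induction hypothesis, and matches the result against the defining equation of $t^{\sch,n+1}$. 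Well-definedness of $t^{\sch,n+1}(h\pv{P}{s})$ guarantees that of every recursive call, so the hypothesis always applies.

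The purely equational unfoldings against Definition~\ref{defn:trace} are routine. The one genuine subtlety, and the step I expect to require the most care, is the discreteness argument underpinning the first item: one must be precise that $\sch_1 \leq \sch_2$ yields \emph{exact} agreement on $\dom(\sch_1)$ rather than a mere approximation, which is exactly where Theorem~\ref{theo:locin}, applied to valuations over the discrete space of valuations, is indispensable.
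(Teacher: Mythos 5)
Your proof is correct and takes essentially the same route as the paper, which dispatches both items by ``simple induction over the positive natural numbers'' and relies on its earlier remark (justified via Theorem~\ref{theo:locin}) that the scheduler order inherited from $\Ch$ is exactly the partial-function order. Your write-up merely fills in the details of those two inductions, and the one point you flag as delicate --- that $\sch_1 \leq \sch_2$ forces exact agreement on $\dom(\sch_1)$ --- is precisely the observation the paper has already recorded in the paragraph preceding Definition~\ref{defn:trace}.
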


\begin{proof}
        Both cases follow from simple induction over the positive natural
        numbers.
\end{proof}
Next, the following lines are devoted to showing that the set of \emph{all}
$n$-step traces of $h\pv{P}{s}$ is Lawson-compact. Intuitively this is
connected to the generalisation of the condition `finitely-branching' in
K\"onig's lemma to `compactly-branching'.  In order to prove that the set of
all $n$-step traces $(n \in \Nats_+)$ of $h\pv{P}{s}$ is Lawson-compact we
recur to the biconvex powercone and associated machinery.  In particular we
start by considering the continuous map $\mathrm{step} : I \rightarrow \Pow_b
(\PP(S) \times \Ch \times \PP(S + I))$ defined by,
\[
        h\pv{P}{s} \mapsto b\ \Big \{
        \Big (\nu^S,\, h \pv{P}{s} \mapsto 1 \cdot \nu,  \,
        \PP \big (\id + (x \mapsto h\pv{P}{s}\nu x) \big )(\nu)
\Big ) \mid 
        \nu \in \pv{P}{s} \longrightarrow \Big \}
\]
Next we define a family of maps $(f_n)_{n \in \Nats_+} : I \to \Pow_b((\PP(S)
\times \Ch)^n)$ in $\Coh$ (\ie\ the category of coherent domains and continuous
maps) by induction on the size of $n \in \Nats_+$.  Concretely for the
inductive step we set,
\begin{align*}
        f_{n+1} = I
        & \xrightarrow{\mathrm{step}} \Pow_b \left ( \PP (S) \times \Ch \times
        \PP(S + I) \right ) \\
        & \xrightarrow{\Pow_b \left 
        (\id \times [\mathrm{stop},f_n]^{\star^\PP} \right ) }
        \Pow_b \left ( \PP(S) \times \Ch \times \Pow_b ((\PP(S) \times \Ch)^n) 
        \right )  
        \\
        & \xrightarrow{\mathrm{str}^{\star^{\Pow_b}}}
        \Pow_b \left ((\PP(S)\times \Ch)^{n+1} \right )
\end{align*}
and set $f_1 = \Pow_b (\pi_1) \comp \mathrm{step}$ for the base step, with
$\mathrm{stop} : S \to \Pow_b((\PP(S) \times \Ch)^n)$ as the composite
$\eta^{\Pow_b} \comp \Delta^n \comp \pv{\eta^\PP}{\const{\bot}}$.  Note that
every space $f_n(h\pv{P}{s}) \subseteq (\PP(S) \times \Ch)^n$ is Lawson-compact
by construction. Our next step is to prove that $f_n(h\pv{P}{s})$ is
actually the set of all $n$-step traces of $h\pv{P}{s}$.  First, the following
equations are obtained from straightforward calculations,
\begin{align*}
                f_1(h\pv{P}{s}) & = b  
                \left \{ (\nu^S, h \pv{P}{s} \mapsto 1 \cdot \nu )
                \mid \nu \in \pv{P}{s} \longrightarrow \right \}
                \\
                f_{n+1}(h\pv{P}{s}) & = b \left ( \bigcup_{
                        \nu \in \pv{P}{s} \longrightarrow}
                       \left \{ \left (\nu^S, h\pv{P}{s} \mapsto 1 \cdot \nu \right)
                        \right \} \times
                        \left (\textstyle{\sum_i}\, p_i \cdot \,
                                 F_i 
                        + \Delta^n(\nu^S,\bot) \right ) \right )
\end{align*}
under the assumption that every $F_i$ is a finite set that satisfies the
equation $b(F_i) = f_n(h\pv{P}{s}\nu\pv{P_i}{s_i})$ and that all valuations
$\nu \in \pv{P}{s} \longrightarrow$ are of the form $\sum_i p_i \cdot
\pv{P_i}{s_i} + \sum_j p_j \cdot s_j$. It then follows by induction over the
positive natural numbers that every $f_n(h\pv{P}{s})$ is of the form $b(F)$ for
a finite set $F$ that is built inductively by resorting to the two previous
equations. We will use the expression $F_n$ to denote such a set w.r.t.
$f_n(h\pv{P}{s})$ unless stated otherwise.  Sometimes we will abbreviate $F_n$
simply to $F$ if $n$ is clear from the context. Second by an appeal to
Corollary~\ref{cor:basis} we obtain $f_n(h\pv{P}{s}) = b(F) =
\mathord{\downarrow} \conv{F} \cap \mathord{\uparrow} \conv{F}$ which further
simplifies our set description of $f_n(h\pv{P}{s})$. Then we concretely connect
the spaces $f_n(h\pv{P}{s})$ to the trace functions
(Definition~\ref{defn:trace}) via the two following lemmata.

\begin{lemma}
        \label{lem:inc}
        Consider a program $P$, a state $s$, an history $h$, and a positive
        natural number $n \in \Nats_+$. For every scheduler $\sch$ we have
        $t^{\sch,n}(h\pv{P}{s}) \in \conv{F} \subseteq f_n(h\pv{P}{s})$
        whenever $t^{\sch,n}(h\pv{P}{s})$ is well-defined.  Conversely for
        every $x \in \conv{F}$ there exists a scheduler $\sch$ such that
        $t^{\sch,n}(h\pv{P}{s}) = x$.
\end{lemma}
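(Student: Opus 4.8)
The plan is to prove both bullet points by induction on $n \in \Nats_+$, sharing the recursive description of the generating set $F_n$ recorded just before the lemma, while the set inclusion $\conv{F} \subseteq f_n(h\pv{P}{s})$ is handled once and for all without induction: by Corollary~\ref{cor:basis} we have $f_n(h\pv{P}{s}) = b(F) = \mathord{\downarrow}\conv{F} \cap \mathord{\uparrow}\conv{F}$, and since every point lies both below and above itself, $\conv{F} \subseteq \mathord{\downarrow}\conv{F} \cap \mathord{\uparrow}\conv{F}$. The genuine work therefore lies in the two membership statements relating $\conv{F}$ to the trace functions.

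For the forward direction I would induct on $n$, the induction hypothesis being available at arbitrary (in particular extended) histories since the statement quantifies over $h$. At the base case $n=1$ the scheduler value $\sch(h\pv{P}{s}) = \sum_k p_k \cdot \nu_k$ is by definition a convex combination of valuations $\nu_k \in \pv{P}{s}\longrightarrow$; hence each pair $(\nu_k^S, h\pv{P}{s}\mapsto 1\cdot\nu_k)$ is a generator of $F_1$ and $t^{\sch,1}(h\pv{P}{s})$, being their convex combination, lies in $\conv{F_1}$. For the step I would unfold $t^{\sch,n+1}(h\pv{P}{s})$ per Definition~\ref{defn:trace}; applying the induction hypothesis at each continuation $\pv{P_{k,i}}{s_{k,i}}$ with extended history $h\pv{P}{s}\nu_k$ places $t^{\sch,n}(h\pv{P}{s}\nu_k\pv{P_{k,i}}{s_{k,i}})$ in $\conv{F_{k,i}}$, where $b(F_{k,i}) = f_n(\cdots)$. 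Because $\sum_m r_m = 1$ pulls the fixed additive term $\Delta^n(\nu_k^S,\bot)$ through any convex combination, the tail of each branch falls in $\sum_i p_{k,i}\cdot\conv{F_{k,i}} + \Delta^n(\nu_k^S,\bot)$, and prepending the fixed coordinate $(\nu_k^S, h\pv{P}{s}\mapsto 1\cdot\nu_k)$ keeps each branch inside $\conv{F_{n+1}}$ (prepending a fixed coordinate commutes with convex combinations); the outer convex combination over $k$ then leaves the result in $\conv{F_{n+1}}$.

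The converse is the harder half, and its inductive step essentially re-runs the scheduler-gluing construction from the $\subseteq$-direction of Theorem~\ref{main:theo}, now tracking the full trace (both its $\PP(S)$- and $\Ch$-coordinates) instead of only the terminal valuation. Given $x \in \conv{F_{n+1}}$, first I would write $x$ as a convex combination of generators, group the summands by their top-level valuation $\nu \in \pv{P}{s}\longrightarrow$, and --- exactly as in Theorem~\ref{main:theo}, invoking convexity of the sets $\conv{F_{k,i}}$ together with the normalisation of subconvex valuations --- assume the resulting top-level valuations $\nu_1,\dots,\nu_K$ pairwise distinct, with weights $p_k$. Since the summand $\Delta^n(\nu_k^S,\bot)$ factors out of a convex combination, the $\nu_k$-block of $x$ simplifies to $\sum_i p_{k,i}\cdot x_{k,i} + \Delta^n(\nu_k^S,\bot)$ with each $x_{k,i}\in\conv{F_{k,i}}$, and the induction hypothesis supplies a scheduler $\sch_{k,i}$ with $x_{k,i} = t^{\sch_{k,i},n}(h\pv{P}{s}\nu_k\pv{P_{k,i}}{s_{k,i}})$. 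I would then glue these into a single $\sch$ by setting $\sch(h\pv{P}{s}) = \sum_k p_k\cdot\nu_k$ and dispatching any longer input with prefix $h\pv{P}{s}\nu_k\pv{P_{k,i}}{s_{k,i}}$ to $\sch_{k,i}$, leaving $\sch$ undefined elsewhere.

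The distinctness of the $\nu_k$ makes this assignment well-defined, and unfolding $t^{\sch,n+1}(h\pv{P}{s})$ via Definition~\ref{defn:trace} --- using Lemma~\ref{lem:pref} to see that $\sch$ reproduces each $\sch_{k,i}$ on the relevant sub-traces --- recovers exactly $x$. The main obstacle I anticipate is precisely this final bookkeeping: normalising the decomposition of $x$ so that the glued scheduler is genuinely well-defined, and checking that the recursive unfolding of the trace function reassembles the chosen convex combination coordinate by coordinate. This is where the argument is most delicate, but it is a faithful lift of the corresponding passage in the proof of Theorem~\ref{main:theo}.
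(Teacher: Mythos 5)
Your proposal is correct and follows essentially the same route as the paper: both directions proceed by induction on $n$ over all histories, using the identity $\conv{F_{n+1}} = \conv{\bigl(\bigcup_\nu \{(\nu^S, h\pv{P}{s}\mapsto 1\cdot\nu)\}\times(\sum_i p_i\cdot\conv{F_i}+\Delta^n(\nu^S,\bot))\bigr)}$ (which the paper extracts from~\cite[Lemma 2.8]{tix09}) for the forward inclusion, and the same scheduler-gluing construction --- pairwise-distinct normalised $\nu_k$, sub-schedulers $\sch_{k,i}$ from the induction hypothesis dispatched by prefix, verification via Lemma~\ref{lem:pref} --- for the converse. Your observation that this mirrors the $\subseteq$-direction of Theorem~\ref{main:theo} is exactly how the paper presents it.
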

\begin{proof}
      By an appeal to~\cite[Lemma 2.8]{tix09} one deduces that,
      \begin{align*}
                \begin{split}
                        \conv{F}
                & = \conv{ \left ( \bigcup_{
                        \nu \in \pv{P}{s} \longrightarrow}
                       \left \{ \left (\nu^S, h\pv{P}{s} \mapsto 1 \cdot \nu \right)
                        \right \} \times
                        \left (\textstyle{\sum_i}\, p_i \cdot \,
                                 F_i 
         + \Delta^n(\nu^S,\bot) \right ) \right ) }
                \\
                &
                =    \conv{ \left ( \bigcup_{
                        \nu \in \pv{P}{s} \longrightarrow}
                       \left \{ \left (\nu^S, h\pv{P}{s} \mapsto 1 \cdot \nu \right)
                        \right \} \times
                        \left (\textstyle{\sum_i}\, p_i \cdot \,
                                \conv{F_i}
         + \Delta^n(\nu^S,\bot) \right ) \right ) }
             \end{split}
        \end{align*}
        for all $n + 1$ with $n \in \Nats_+$.
  Both implications in the lemma's formulation  rely on this observation.  In
  particular the first implication then follows straightforwardly by induction
  over the positive natural numbers and by the trace functions' definition
  (Definition~\ref{defn:trace}).  As for the second implication, we build a
  scheduler $\sch$ by induction over the positive natural numbers in the
  following manner. For the base case suppose that we have $(\mu,c) \in \mathrm{conv}\, F
  \subseteq f_1(h\pv{P}{s})$. We can safely assume that it is of the form,
  \[
          (\mu,c) = \textstyle{\sum_k p_k}
          \cdot (\nu^S_k, h\pv{P}{s} \mapsto 1 \cdot \nu_k)
  \]   
  for some convex combination $\sum_k p_k \cdot (-)$. We thus set $\sch$ to be
  $\sum_k p_k \cdot (h\pv{P}{s} \mapsto 1 \cdot \nu_k)$ and it is clear that
  $t^{\sch,1}(h\pv{P}{s}) = (\mu,c)$ by the trace functions' definition.
  Regarding the inductive step we take a sequence $x \in \conv{F} \subseteq
  f_{n+1}(h\pv{P}{s})$. It is then the case that,
  \begin{align}
          \label{eq:nform}
          x = \textstyle{\sum_k p_k}
          \cdot \left ( (\nu^S_k, h\pv{P}{s} \mapsto 1 \cdot \nu_k)
                  \, \mathtt{::} \, 
                  \sum_{i} p_{k,i} \cdot x_{k,i}
                  + \Delta^n(\nu^S_k, \bot)
                  \right )
  \end{align}
  for some convex combination $\sum_k p_k \cdot (-)$ where $x_{k,i} \in
  \conv\, F_{k,i} \subseteq f_n(h\pv{P}{s}\nu_k\pv{P_{k,i}}{s_{k,i}})$ for all
  indices $k,i$. As in the proof of Proposition~\ref{main:theo}, we can assume that
  all valuations $\nu_k$ involved are pairwise distinct, by virtue of all sets
  $\conv{F_{k,i}}$ being convex and by recurring to the normalisation of
  subconvex combinations. Next, it follows from the induction hypothesis that for
  all traces $x_{k,i}$ there exists a scheduler $\sch_{k,i}$ such that,
  \[
          t^{\sch_{k,i},n}(h\pv{P}{s}\nu_k\pv{P_{k,i}}{s_{k,i}})
          = x_{k,i}
  \]
  So we set $\sch := \sum_k p_k \cdot (h\pv{P}{s} \mapsto 1 \cdot \nu_k) +
  \sum_{k,i} \sch_{k,i}$ and it is straightforward to show that the equation
  $t^{\sch,n+1}(h\pv{P}{s}) = x$ holds by an appeal to Lemma~\ref{lem:pref} and
  the fact that addition is monotone.
  \end{proof}
  The proof of the previous lemma provides a recipe for building a scheduler
  from any element of $\conv{F} \subseteq f_n(h\pv{P}{s})$.  The following
  lemma discloses useful properties about this construction.
\begin{lemma}
          \label{lem:mon}
          The scheduler construction described in the previous proof is
          functional, \ie\ exactly one scheduler arises from it. Furthermore
          the construction is monotone w.r.t. the prefix order of lists.
\end{lemma}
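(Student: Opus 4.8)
The plan is to prove both assertions by induction on the positive natural number $n$, reformulating the construction as a deterministic ``read-off'' operation that extracts the value of the resulting scheduler at each input directly from the trace. The key preliminary observation is that, although an element $x \in \conv{F} \subseteq f_n(h\pv{P}{s})$ is written as a convex combination, it is in fact a single tuple in $(\PP(S) \times \Ch)^n$; hence its choice components $c_1, \dots, c_n \in \Ch$ are intrinsic data of $x$ and do not depend on the chosen decomposition. The construction (recall the proof of Lemma~\ref{lem:inc}) assigns to $x$ a scheduler $\sch_x$ whose value at each input is precisely one of these components evaluated at a suitable coordinate, and functionality amounts to showing that this assignment is well-defined.

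For functionality I would first treat the top level: $\sch_x(h\pv{P}{s})$ is obtained by evaluating $c_1$ at the coordinate $h\pv{P}{s}$, which yields a valuation $\sum_k p_k \cdot \nu_k$ on the space $\PP_{=1}(S + \SPrg \times S)$. Because this space is discrete (Theorem~\ref{theo:locin}) --- as is $S + \SPrg \times S$ itself --- both this valuation and each $\nu_k$ have unique point-mass decompositions, so the pairs $(p_k,\nu_k)$ and, for each $k$, the weights $p_{k,i}$ together with the configurations $\pv{P_{k,i}}{s_{k,i}}$ are all recovered from $x$ alone. For a deeper input with prefix $h\pv{P}{s}\nu_k\pv{P_{k,i}}{s_{k,i}}$, the essential point is that evaluating the appropriate choice component of $x$ at that coordinate isolates a single continuation: since the $\nu_k$ are pairwise distinct (by the normalisation of subconvex combinations used in Lemma~\ref{lem:inc}) and the $\pv{P_{k,i}}{s_{k,i}}$ are distinct for distinct $i$, every other summand in the recursive description of $\conv{F}$ contributes $\bot$ at that coordinate. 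Dividing out the already-recovered path-probability then yields exactly the corresponding choice component of $x_{k,i}$, to which the induction hypothesis applies. As the schedulers $\sch_{k,i}$ have pairwise disjoint domains of definition, their combination with the top-level assignment is unambiguous, so $\sch_x$ depends only on $x$.

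For monotonicity I would reuse this read-off description. If $x$ is a prefix of $y$, then $x$ and $y$ share their first choice components up to the length of $x$, so every value that $\sch_x$ assigns is extracted from components common to $y$; hence $\sch_x$ and $\sch_y$ agree on every input at which $\sch_x$ is defined, while $\sch_y$, arising from a longer trace, is additionally defined at the inputs produced by the extra level(s). Since the order on schedulers is exactly the extension order of partial functions (again by Theorem~\ref{theo:locin}), this is the statement $\sch_x \leq \sch_y$, completing the induction.

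The step I expect to be the main obstacle is the ``disentangling'' used in the functionality argument: making precise that a single continuation $x_{k,i}$ can be recovered from $x$ by evaluating one choice component at a single coordinate. This rests entirely on the coordinates $h\pv{P}{s}\nu_k\pv{P_{k,i}}{s_{k,i}}$ separating the summands, so the distinctness guaranteed by normalisation (Lemma~\ref{lem:inc}) together with the discreteness of $\PP_{=1}(S + \SPrg \times S)$ (Theorem~\ref{theo:locin}) must be invoked carefully, and the accumulated path-probabilities along the recursion have to be tracked as part of the induction invariant.
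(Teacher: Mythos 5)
Your proposal is correct and follows essentially the same route as the paper's proof: functionality is obtained by showing the decomposition of the choice components is unique, using the discreteness of $\PP_{=1}(S + \SPrg\times S)$ (Theorem~\ref{theo:locin}), the pairwise distinctness of the $\nu_k$ guaranteed by normalisation, and the fact that the supports of the continuations' choices are separated by the distinct prefixes $h\pv{P}{s}\nu_k\pv{P_{k,i}}{s_{k,i}}$, so that evaluating at a coordinate and dividing by $p_k\cdot p_{k,i}$ recovers each $d^l_{k,i}$ --- exactly the paper's computation. Your monotonicity step is phrased slightly more directly (reading it off from the shared prefix of choice components rather than re-running an induction with monotone addition as the paper does), but it rests on the same facts and reaches the same conclusion.
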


\begin{proof}
  Observe that the previous scheduler construction only relies on the choices
  $c_1\dots c_n$ in a given sequence $(\mu_1,c_1)\dots(\mu_n,c_n)$.
  Note as well that in order to not overburden notation, we use $c_1
  \dots c_n$ to denote a sequence obtained from a sequence in
  $\conv{F} \subseteq f_n(h\pv{P}{s})$ by an application of the map ${\pi_2}^n$.

  Let us start with the first claim. We just need to prove that the
  form~\eqref{eq:nform} that we assume from a sequence of choices is unique. We
  proceed by case distinction. Suppose that the sequence involved is $c_1$.
  We will show that there is only one expression of the form $\sum_k p_k
  \cdot (h\pv{P}{s} \mapsto 1 \cdot \nu_k)$ that can be equal to $c_1$. Thus
  consider another expression $\sum_j p'_j\cdot (h\pv{P}{s} \mapsto 1 \cdot
  \nu_j)$, \ie\ we change the convex combination involved (which is the only
  allowed change). Both convex combinations can be \emph{uniquely} determined
  by inquiring $c_1$ so indeed the expression is unique. Consider now 
  the case $c_1\dots c_{n+1}$. As before suppose that we have,
  \begin{align*}
          c_1\dots c_{n+1}  & = \textstyle{\sum_k p_k}
        \cdot \Big ( (h\pv{P}{s} \mapsto 1 \cdot \nu_k)
          \, \mathtt{::} \, 
          \sum_{i} p_{k,i} \cdot (d^1_{k,i} \dots 
          d^n_{k,i})
          \Big )
         \\
          c_1\dots c_{n+1}  & = \textstyle{\sum_j p'_j}
        \cdot \Big ( (h\pv{P}{s} \mapsto 1 \cdot \nu_j)
          \, \mathtt{::} \, 
          \sum_{i} p'_{j,i} \cdot (d'^1_{j,i} \dots 
          d'^n_{j,i})   
          \Big )
  \end{align*}
  Both convex combinations $\sum_k p_k \cdot (-)$ and $\sum_j p_j \cdot (-)$
  are equal for the same reason as before and thus the only elements
  in the expressions that could be different are $d^l_{k,i}$ and $d'^l_{k,i}$
  for some $k,i$ and some $1 \leq l \leq n$.  We will show that even these
  elements are always equal which proves our first claim.  First by the
  definition of $f_{n+1}(h\pv{P}{s})$ it is clear that for all $k,i$ the
  support of $d^l_{k,i}$ and $d'^l_{k,i}$ ($1 \leq l \leq n$) can only
  contain inputs with $h\pv{P}{s}\nu_k\pv{P_{k,i}}{s_{k,i}}$ as prefix.
  Then for every such input $a$ we have,
  \begin{align*}
          \textstyle{\sum_k}\, p_k \cdot \textstyle{\sum_{i}} 
          p_{k,i} \cdot d^l_{k,i}(a)
          =
          c_{l+1}(a)
          =
          \textstyle{\sum_i} \, p_k \cdot \textstyle{\sum_{i}} 
          p_{k,i} \cdot d'^l_{k,i}(a)
  \end{align*}
  which entails $p_k \, \cdot \, p_{k,i} \cdot d^l_{k,i}(a) = p_k \cdot
  p_{k,i} \cdot d'^l_{k,i}(a)$ for every $k,i$ and therefore $d^l_{k,i}(a) = d'^l_{k,i}(a)$.

  We now focus on the second claim, specifically we will prove that for all
  positive natural numbers $n \in \Nats_+$ and sequences, 
          \[
                x^1 \dots x^{n+1} \in
                \conv{F_{n+1}} \subseteq f_{n+1}(h\pv{P}{s})
          \]
  if $\sch$ and $\sch'$ are schedulers obtained respectively from $x^1 \dots
  x^n \in \conv{F_n} \subseteq f_n(h\pv{P}{s})$ and $x^1 \dots x^{n+1}$, via
  the previous scheduler construction, then the 
  inequation $\sch \leq \sch'$ holds.  The proof follows by induction over
  the positive natural numbers and by exploiting the fact that addition is
  monotone. The base case follows straightforwardly by unravelling the
  definition of $\conv{F_2}$.  For the inductive step it is clear that,
  \[
  c_1\dots c_{n+2} = \textstyle{\sum_k p_k}
  \cdot \left ( (h\pv{P}{s} \mapsto 1 \cdot \nu_k)
          \, \mathtt{::} \, 
          \sum_{i} p_{k,i} \cdot (d^1_{k,i},\dots,d^{n+1}_{k,i})
          \right )
  \]   
  for some convex combination $\sum_k p_k \cdot (-)$ and for sequences of
  choices $(d^1_{k,i},\dots,d^{n+1}_{k,i})$ obtained from sequences in $\conv{
  (F_{n+1})_{k,i}} \subseteq f_{n+1}(h\pv{P}{s}\nu_k\pv{P_{k,i}}{s_{k,i}})$
  by an application of the map ${\pi_2}^{n+1}$.  This entails the equation,
  \[
  c_1 \dots c_{n+1} = \textstyle{\sum_k p_k}
  \cdot \left ( (h\pv{P}{s} \mapsto 1 \cdot \nu_k)
          \, \mathtt{::} \, 
          \sum_{i} p_{k,i} \cdot (d^1_{k,i},\dots,d^{n}_{k,i})
          \right )
  \]   
  Now, in order to apply the induction hypothesis we need to show that for
  all indices $k,i$ the sequence $(d^1_{k,i},\dots,d^{n}_{k,i})$ can be
  obtained from a trace in  $\conv{ (F_{n})_{k,i}} \subseteq
  f_{n}(h\pv{P}{s}\nu_k\pv{P_{k,i}}{s_{k,i}})$ by an application of
  ${\pi_2}^n$. This follows directly from Lemma~\ref{lem:inc} and
  Lemma~\ref{lem:pref}. We thus derive $\sch_{k,i}' \leq \sch_{k,i}$ for all
  indices $k,i$ by the induction hypothesis, and finally we obtain the
  sequence of inequations,
  \begin{align*}
          \sch & = 
                       \textstyle{\sum_k p_k} \cdot (h\pv{P}{s} \mapsto 1
          \cdot \nu_k) + \sum_{k,i} \sch_{k,i}
           \\ & \leq
                \textstyle{\sum_k p_k} \cdot (h\pv{P}{s} \mapsto 1 \cdot \nu_k)
        + \sum_{k,i} \sch'_{k,i}
           \\ & = \sch'
  \end{align*}
\end{proof}

\begin{proposition}
        \label{prop:conv}
        Consider a program $P$, a state $s$, and an history $h$. Consider
        also the set $\conv{F} \subseteq f_n(h\pv{P}{s})$ for some number $n
        \in \Nats_+$. Then the order on $\conv{F}$ is discrete and therefore
        $\mathord{\downarrow} \conv{F} \cap \mathord{\uparrow} \conv{F} =
        \conv{F}$ by the anti-symmetry property of partial orders.
\end{proposition}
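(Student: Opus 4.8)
Our goal is the claim that the order on $\conv{F}$ is discrete; the displayed identity $\dwclos \conv{F} \cap \upclos \conv{F} = \conv{F}$ then follows at once from antisymmetry, since any $z$ in the intersection is sandwiched $b \le z \le a$ between elements $a,b \in \conv{F}$, so discreteness forces $a = b$ and hence $z = b \in \conv{F}$ (the reverse inclusion being trivial). To prove discreteness I would route the whole argument through the choice coordinates. Recall from the proof of Lemma~\ref{lem:mon} that the representation~\eqref{eq:nform} of an element of $\conv{F}$ is uniquely determined by its sequence of choices, i.e. by its image under ${\pi_2}^{n}$; in particular ${\pi_2}^{n}$ is \emph{injective} on $\conv{F}$. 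Since ${\pi_2}^{n}$ is also monotone, it suffices to show that the choice-sequences in ${\pi_2}^{n}[\conv{F}]$ carry the discrete order: for then $x \le y$ in $\conv{F}$ gives ${\pi_2}^{n}(x) \le {\pi_2}^{n}(y)$, hence ${\pi_2}^{n}(x) = {\pi_2}^{n}(y)$, hence $x = y$ by injectivity.

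The plan is to prove discreteness of ${\pi_2}^{n}[\conv{F}]$ by induction on $n \in \Nats_+$, following the inductive description of $f_n(h\pv{P}{s})$. The engine is Theorem~\ref{theo:locin}. The first choice $c_1$ of any element of $\conv{F}$ has the form $\sum_k p_k \cdot (h\pv{P}{s} \mapsto 1 \cdot \nu_k)$, which at the single input $h\pv{P}{s}$ is the mass-one valuation $\sum_k p_k \cdot \delta_{\nu_k}$ on the discrete, hence zero-dimensional, space $\PP_{=1}(S + \SPrg \times S)$, and is $\bot$ at every other input. Thus if $c_1 \le c_1'$ for two such choices, Theorem~\ref{theo:locin} forces $c_1 = c_1'$; and since the $\nu_k$ may be taken pairwise distinct (normalisation of subconvex combinations, exactly as in the proofs of Theorem~\ref{main:theo} and Lemma~\ref{lem:mon}), equality of these valuations pins down the same atoms with the same weights. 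This disposes of the base case $n = 1$, and in the inductive step it fixes the top-level data $\nu_k, p_k$, and therewith the canonical decomposition $\nu_k = \sum_i p_{k,i} \cdot \pv{P_{k,i}}{s_{k,i}} + \sum_j p_{k,j} \cdot s_{k,j}$.

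With the top level fixed I would isolate the subtraces. For $\ell \ge 2$, the $\ell$-th choice of an element of $\conv{F} \subseteq f_{n+1}(h\pv{P}{s})$ equals $\sum_{k,i} p_k p_{k,i} \cdot (c_{k,i})_{\ell-1}$, the summand $\Delta^n(\nu^S_k,\bot)$ contributing only $\bot$ to the choice coordinates. Crucially, for distinct pairs $(k,i)$ the subtrace choices are supported on inputs with distinct prefixes $h\pv{P}{s}\nu_k\pv{P_{k,i}}{s_{k,i}}$ (the $\nu_k$ are distinct, and within each $k$ the configurations $\pv{P_{k,i}}{s_{k,i}}$ are the distinct atoms of $\nu_k$), so the choice coordinates of the various subtraces occupy \emph{disjoint} blocks of the product $\Ch$. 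Hence restricting an inequality of choice-sequences to the $(k,i)$-block — where all other subtraces and the $\Delta^n$ term vanish — and dividing by the positive scalar $p_k p_{k,i}$ yields an inequality between the choice-sequences of the corresponding subtraces $x_{k,i}, y_{k,i} \in \conv{F_{k,i}} \subseteq f_n(h\pv{P}{s}\nu_k\pv{P_{k,i}}{s_{k,i}})$. By the induction hypothesis these are equal, and together with equality at position $1$ this gives equality of the full choice-sequences, completing the induction.

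The delicate point — and the step I expect to be the main obstacle — is precisely the bookkeeping of the previous paragraph. Note that the $\PP(S)$-components of a trace are genuine subvaluations, whose order is far from discrete, so discreteness cannot be read off those coordinates at all; the argument \emph{must} be confined to the choice coordinates, where Theorem~\ref{theo:locin} bites. The work therefore lies in verifying that the summed choice at each position $\ell \ge 2$ really does split over disjoint prefix-blocks, so that each subtrace's choices can be recovered by a mere coordinate restriction, and that this recovery respects both the inequality and the scalar weights $p_k p_{k,i}$. Once this is in place, the reduction to Lemma~\ref{lem:mon} (injectivity of ${\pi_2}^{n}$) and Theorem~\ref{theo:locin} (discreteness at mass one) is routine.
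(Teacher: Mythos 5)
Your proposal is correct and follows essentially the same route as the paper's proof: reduce to the choice coordinates via injectivity of ${\pi_2}^{n}$ on $\conv{F}$ (Lemmas~\ref{lem:inc} and~\ref{lem:mon}), establish discreteness of the choice-sequences by induction on $n$ using Theorem~\ref{theo:locin} at the base, and in the inductive step exploit the pairwise-distinct $\nu_k$ and disjoint prefix-blocks to restrict the inequality to each $(k,i)$-component, divide by $p_k\,p_{k,i}$, and apply the induction hypothesis. The bookkeeping you flag as the delicate point is exactly the computation the paper carries out.
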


\begin{proof}

        Take two elements $(\mu_1,c_1)\dots(\mu_n,c_n)$ and
        $(\mu'_1,c'_1)\dots(\mu'_n,c'_n)$ in $\mathrm{conv}\, F$.  We will
        start by showing that the following implication holds.
        \[
               c_1 \dots c_n \geq c'_1 \dots c'_n \Longrightarrow
               c_1 \dots c_n = c'_1 \dots c'_n
        \]
        The latter follows by induction over the positive natural numbers $n
        \in \Nats_+$. For the base case both $c_1$ and $c'_1$ have mass $1$
        at $h\pv{P}{s}$ and mass $0$ everywhere else.  Thus the assumption
        $c_1 \geq c'_1$ entails $c_1 = c'_1$ by Theorem~\ref{theo:locin}. For
        the inductive step we use the definition of $f_{n+1}(h\pv{P}{s})$ to
        safely assume that,
        \begin{align*}
        c_1\dots c_{n+1}  = \textstyle{\sum_k p_k}
        \cdot \left (  (h\pv{P}{s} \mapsto 1 \cdot \nu_k)
          \, \mathtt{::} \, 
          \sum_{i} p_{k,i} \cdot ( d^1_{k,i} \dots 
           d^n_{k,i}) 
          \right )
         \\
         c'_1 \dots c'_{n+1}  = \textstyle{\sum_k p_k}
        \cdot \left ( (h\pv{P}{s} \mapsto 1 \cdot \nu_k)
          \, \mathtt{::} \, 
          \sum_{i} p_{k,i} \cdot (d'^1_{k,i} \dots 
          d'^n_{k,i})
          \right )
        \end{align*}
        such that for all indices $k,i$ the respective sequences of choices
        $d^1_{k,i} \dots d^n_{k,i}$ and $d'^1_{k,i} \dots d'^n_{k,i}$ are
        obtained from $\conv{F_{k,i}} \subseteq
        f_n(h\pv{P}{s}\nu_k\pv{P_{k,i}}{s_{k,i}})$ by an application of the
        projection map ${\pi_2}^n$.  Note that we can safely assume the same
        convex combination $\sum_k p_k \cdot (-)$ for both cases, because such
        combinations are uniquely encoded in both $c_1$ and $c'_1$ and we
        already know that $c_1 = c'_1$. Recall as well that we can safely
        assume that all the valuations $\nu_k$ involved in the combination are
        pairwise distinct, by virtue of the sets $\conv{F_{k,i}}$ being convex
        and by normalisation of subconvex valuations.

        The next step is to prove that $d^l_{k,i} \geq d'^l_{k,i}$ (for all $1
        \leq l \leq n$) to which we then apply the induction hypothesis to
        obtain $c_1\dots c_{n+1} = c'_1 \dots c'_{n+1}$. First by the
        definition of $f_{n}(h\pv{P}{s}\nu_k\pv{P_{k,i}}{s_{k,i}})$ it is clear
        that for all $k,i$ the support of $d^l_{k,i}$ and $d'^l_{k,i}$ ($1 \leq
        l \leq n$)  can only contain inputs with
        $h\pv{P}{s}\nu_k\pv{P_{k,i}}{s_{k,i}}$ as prefix.  Second for every
        such input $a$ we have,
        \begin{align*}
                \textstyle{\sum_k}\, p_k \cdot \textstyle{\sum_{i}}\, p_{k,i}
                \cdot d^l_{k,i}(a) = c_{l+1}(a) \geq c'_{l+1}(a) =
                \textstyle{\sum_k} \, p_k \cdot \textstyle{\sum_{i}}\,
                p_{k,i} \cdot d'^l_{k,i}(a) 
        \end{align*} 
        which entails $p_k \, \cdot \, p_{k,i} \cdot d^l_{k,i}(a) \geq p_k
        \cdot p_{k,i} \cdot d'^l_{k,i}(a)$ for every $k,i$ and thus
        $d^l_{k,i}(a) \geq d'^l_{k,i}(a)$.

        Consider again two elements $(\mu_1,c_1)\dots(\mu_n,c_n) \geq
        (\mu'_1,c'_1)\dots(\mu'_n,c'_n)$ of $\mathrm{conv}\, F$. From the
        previous reasoning we know that for all $1 \leq l \leq n$ we have $c_l
        = c'_l$, which entails that both lists are actually the same since each
        $\mu_l$ (resp. $\mu'_l$) only depends on the choices present in the
        respective list. This last claim follows from Lemma~\ref{lem:inc} and
        Lemma~\ref{lem:mon}.
\end{proof}
We finally reach the goal of this section.
\begin{proof}[Proof of Theorem~\ref{prop:alg}]
       Our proof  involves the use of codirected limits in the category
       $\CompHaus$ of compact Hausdorff spaces and continuous maps. 
       For every $n \in \Nats_+$ we set
       $X_n := f_n(\pv{P}{s})$, a compact Hausdorff space w.r.t. the Lawson
       topology of the $n$-fold product $(\PP(S) \times \Ch)^n$ in $\Coh$
       where $\PP(S) \times \Ch$ is also a $\Coh$-product. Given a DCPO
       $X$ let $L(X)$ denote the carrier of $X$
       equipped with respective Lawson topology. It then follows
       from~\cite[Proposition 5.1.56, Exercise 9.1.36, and Proposition
       9.3.1]{larrecq13} that the previous domain $(\PP(S) \times \Ch)^n$
       equipped with the respective Lawson topology is equal to $(L(\PP(S))
       \times L(\Ch))^n$  where in the latter case the constructs $(\times)$
       and $(-)^n$ are products in $\CompHaus$.  Let us now take the
       codirected limit ($\lim \Diag$) of,
       \[
               \xymatrix@C=35pt{
                       X_1 & \ar[l]_(0.45){\mathrm{lst}_1} X_2 &
                       \ar[l]_(0.55){\mathrm{lst}_2}  X_3\ \dots
               }
       \]
       in the category $\CompHaus$. Concretely each function $\mathrm{lst}_{n}:
       X_{n+1} \to X_n$ forgets the last pair of a given sequence (of pairs) in
       $X_{n+1}$. It is both well-defined and continuous because it is (up-to
       isomorphism) a restriction of a projection on the topological product
       $(L(\PP(S)) \times L(\Ch))^n$ and both Lemma~\ref{lem:pref} and
       Lemma~\ref{lem:inc} hold.  The space $\lim \Diag$ is thus compact
       Hausdorff by construction. Let us equip it with a concrete and more
       amenable characterisation. First by general topological results $\lim
       \Diag$ is concretely defined as the subspace,
       \[
               \lim \Diag \cong
               \left \{ (t_i)_{i \in \Nats_+} \in \textstyle{
                       \prod_{i \in \Nats_+}} X_i
               \mid \mathrm{lst}_{i}(t_{i+1}) = t_i \right \}
       \]
       Second there exists a map in $\CompHaus$ that for a given family
       $(t_i)_{i \in \Nats_+} \in \prod_{i \in \Nats_+} X_i $ it only keeps
       the last element of each $t_i$ $(i \in \Nats_+)$. Formally it is
       defined by,
       \[
        \langle f_i \rangle_{i \in \Nats_+} :  
        \textstyle{ \prod_{i \in
        \Nats_+} } X_i \to (L(\PP(S)) \times L(\Ch))^\omega
       \hspace{2.5cm}
               f_i = \begin{cases}
                       \pi_1 & \text{if } i = 1 \\
                       \pi_{i} \comp \pi_i & \text{otherwise}
               \end{cases}
       \]
       It is straightforward to prove by induction over the positive natural
       numbers that its restriction to $\lim \Diag$ is injective and
       therefore a monomorphim in $\CompHaus$. This restriction then yields
       an epimorphism $\lim \Diag \to \img \langle f_i \rangle_{i \in
       \Nats_+}$~\cite[Examples 14.23]{adamek09} in $\CompHaus$ which must
       also be a monomorphism by general categorical results.  Finally since
       $\CompHaus$ is balanced~\cite[Examples 7.50]{adamek09} we obtain an
       isomorphism $\lim \Diag \cong \img \langle f_i \rangle_{i \in
       \Nats_+}$. One can thus equivalently treat $\lim \Diag$ as a compact
       Hausdorff subspace of $(L(\PP(S)) \times L(\Ch))^\omega$ with each
       element of $\lim \Diag$ a sequence in $(L(\PP(S))
       \times L(\Ch))^\omega$ where every prefix of size $n$ is an element of
       $X_n$.  We now wish to show that,
       \begin{align}
               \lim \Diag \subseteq  \bigcup_{i \in \Nats_+} 
               (\PP(S) \times \Ch)^{i-1}
               \times (\varphi \times \Ch) \times (\PP(S) \times \Ch)^\omega 
               \label{eq:incl}
       \end{align}
       where $\varphi$ is treated as a Scott-open (note that the union is
       then an open cover).  We proceed by taking a sequence
       $(\mu_1,c_1)(\mu_2,c_2) \dots \in \lim \Diag$. By Lemma~\ref{lem:inc},
       Lemma~\ref{lem:mon}
       and Proposition~\ref{prop:conv} for every prefix
       $(\mu_1,c_1)\dots(\mu_n,c_n) \in X_n$ we obtain a scheduler $\sch_n$
       that yields the same trace. More generally we obtain a chain of schedulers
       $\sch_1 \leq \sch_2 \leq \dots$ in this way and denote the
       corresponding supremum by $\sch$.  Note that $\sch$ is non-blocking
       w.r.t. $\pv{P}{s}$ by virtue of Theorem~\ref{theo:tracef} and
       Lemma~\ref{lem:pref}.  Then by the assumption on non-blocking
       schedulers and again Theorem~\ref{theo:tracef} and
       Lemma~\ref{lem:pref} there must exist a natural number $k\in \Nats_+$
       such that $\mu_k \in \varphi$. This shows that the
       inclusion~\eqref{eq:incl} indeed holds. Now, since $\lim \Diag$ is
       compact we obtain,
       \[
               \lim \Diag \subseteq  
               \bigcup_{i \in F} (\PP(S) \times \Ch)^{i-1}
               \times (\varphi \times \Ch) \times (\PP(S) \times \Ch)^\omega  
       \]
       for $F \subseteq \Nats_+$ a finite set and we define $\mathbf{z} :=
       \max F$.  Finally it follows from Theorem~\ref{theo:tracef},
       Lemma~\ref{lem:pref}, and Lemma~\ref{lem:inc} that for every
       non-blocking scheduler $\sch$ the corresponding infinite sequence of
       traces must be in $\lim\, \Diag$ and therefore we obtain $\pv{P}{s}
       \Downarrow^{\sch,i} \mu_i$ with $\mu_i \in \varphi$ for some $i \in
       F$. It is then clear that $\pv{P}{s} \Downarrow^{\sch,\mathbf{z}}
       \mu_\mathbf{z}$ and $\mu_\mathbf{z} \in \varphi$ by
       Proposition~\ref{theo:det} and the fact that $\varphi$ is a Scott-open.
\end{proof}

\end{document}